	\let\over=\@@over \let\overwithdelims=\@@overwithdelims
	\let\atop=\@@atop \let\atopwithdelims=\@@atopwithdelims
  	\let\above=\@@above \let\abovewithdelims=\@@abovewithdelims
\tikzstyle{int}=[draw, fill=blue!20, minimum size=2em]
\tikzstyle{dot}=[circle, draw, fill=blue!20, minimum size=2em]
\tikzstyle{init} = [pin edge={to-,thin,black}]
\newcommand{\matc}{\ensuremath{\mathcal{C}}}
\newcommand{\matx}{\ensuremath{\mathcal{X}}}
\newcommand{\maty}{\ensuremath{\mathcal{Y}}}
\newcommand{\matp}{\ensuremath{\mathcal{P}}}
\newcommand{\mreals}{\ensuremath{\mathbb{R}}}
\newcommand{\pa}{\ensuremath{\mathrm{pa}}}
	\newcommand{\eqref}[1]{~(\ref{#1})}
	\def\mod{\mathop{\rm mod}}
\def\EE{\Expect}
\def\FF{\mathbb{F}}
\def\PP{\mathbb{P}}
\def\eqdef{\triangleq}
\def\perc{\mathrm{perc}}
\newcommand{\BSC}{\mathsf{BSC}}
\newcommand{\BEC}{\mathsf{BEC}}
\newcommand{\EC}{\mathsf{EC}}
\newcommand{\Binom}{\mathrm{Binom}}
\newcommand{\LC}{\mathrm{LC}}
\newcommand{\etaKL}{\eta_{\rm KL}}
\newcommand{\etachi}{\eta_{\chi^2}}
\newcommand{\etaTV}{\eta_{\rm TV}}
\newcommand{\reals}{\mathbb{R}}
\newcommand{\Expect}{\mathbb{E}}
\newcommand{\TV}{d_{\rm TV}}
\newcommand{\expects}[2]{\mathbb{E}_{#2}\left[ #1 \right]}
\newcommand{\diff}{{\rm d}}
\newcommand{\eg}{e.g.\xspace}
\newcommand{\ie}{i.e.\xspace}
\newcommand{\iid}{i.i.d.\xspace}
\newcommand{\fracd}[2]{\frac{\diff #1}{\diff #2}}
\newcommand{\pth}[1]{\left( #1 \right)}
\newcommand{\qth}[1]{\left[ #1 \right]}
\newcommand{\sth}[1]{\left\{ #1 \right\}}
\newcommand{\iiddistr}{{\stackrel{\text{\iid}}{\sim}}}
\newcommand\indep{\protect\mathpalette{\protect\independenT}{\perp}}
\def\independenT#1#2{\mathrel{\rlap{$#1#2$}\mkern2mu{#1#2}}}
\newcommand{\Bern}{\text{Bern}}
\newcommand{\indc}[1]{{\mathbf{1}_{\left\{{#1}\right\}}}}
\newcommand{\Indc}{\mathbf{1}}
\definecolor{myblue}{rgb}{.8, .8, 1}
\definecolor{mathblue}{rgb}{0.2472, 0.24, 0.6} 
\definecolor{mathred}{rgb}{0.6, 0.24, 0.442893}
\definecolor{mathyellow}{rgb}{0.6, 0.547014, 0.24}
\newcommand{\tx}{{\tilde{x}}}
\newcommand{\calL}{{\mathcal{L}}}
\newcommand{\calV}{{\mathcal{V}}}
\newcommand{\calX}{{\mathcal{X}}}
\newcommand{\calY}{{\mathcal{Y}}}
\newcommand{\diverge}{\to \infty}
\newcommand{\id}{{\rm id}}
\def\unifto{\mathop{{\mskip 3mu plus 2mu minus 1mu%
	\setbox0=\hbox{$\mathchar"3221$}%
	\raise.6ex\copy0\kern-\wd0%
	\lower0.5ex\hbox{$\mathchar"3221$}}\mskip 3mu plus 2mu minus 1mu}}
\def\simleq{{{\mskip 3mu plus 2mu minus 1mu%
	\setbox0=\hbox{$\mathchar"013C$}%
	\raise.2ex\copy0\kern-\wd0%
	\lower0.9ex\hbox{$\mathchar"0218$}}\mskip 3mu plus 2mu minus 1mu}}
\def\simleq{\lesssim}
\def\simgeq{{{\mskip 3mu plus 2mu minus 1mu%
	\setbox0=\hbox{$\mathchar"013E$}%
	\raise.2ex\copy0\kern-\wd0%
	\lower0.9ex\hbox{$\mathchar"0218$}}\mskip 3mu plus 2mu minus 1mu}}
\def\simgeq{\gtrsim}
\def\dperp{\indep}
\newtheorem{theorem}{Theorem}
\newtheorem{lemma}[theorem]{Lemma}
\newtheorem{corollary}[theorem]{Corollary}
\newtheorem{proposition}[theorem]{Proposition}
\theoremstyle{definition}
\newtheorem{definition}{Definition}
\newtheorem{example}{Example}
\newtheorem{remark}{Remark}
\newif\ifmapx
\edef\jobnametmp{\expandafter\string\csname simple-IMA_apx\endcsname}
\edef\jobnameapx{\expandafter\mkillslash\jobnametmp}
\edef\jobnameexpand{\jobname}
\long\def\apxonly#1{\ifmapx{\color{blue}#1}\fi}
\begin{document}
\ifpdf
\DeclareGraphicsExtensions{.pgf}
\graphicspath{{figures/}}
\fi

\title{Strong data-processing inequalities for channels and Bayesian networks}

\author{Yury~Polyanskiy\thanks{Y.P. is with the Department of EECS, MIT, Cambridge, MA, \url{yp@mit.edu}. His research has been supported by the Center for Science of Information (CSoI),
an NSF Science and Technology Center, under grant agreement CCF-09-39370 and by the NSF CAREER award under grant
agreement CCF-12-53205.}~~and Yihong Wu\thanks{Y.W. is with
the Department of ECE, University of Illinois at Urbana-Champaign, Urbana, IL, \url{yihongwu@illinois.edu}. His research has been supported in part by NSF grants IIS-1447879, CCF-1423088 and the Strategic Research Initiative of the College of Engineering at the
University of Illinois.}}
\affil{MIT, UIUC}

\date{}
\maketitle

\begin{abstract} 
The data-processing inequality, that is, $I(U;Y) \le I(U;X)$ for a Markov chain $U \to X \to Y$, has been the method of
choice for proving impossibility (converse) results in information theory and many other disciplines. Various
channel-dependent improvements (called strong data-processing inequalities, or SDPIs) of this inequality have been
proposed both classically and more recently.  In this note we first survey known results relating various notions of
contraction for a single channel. Then we consider the basic extension: given SDPI for each constituent channel in a
Bayesian network, how to produce an end-to-end SDPI? 

Our approach is based on the (extract of the) Evans-Schulman
method, which is demonstrated for three different kinds of SDPIs, namely, the usual Ahslwede-G\'acs type contraction
coefficients (mutual information), Dobrushin's contraction coefficients (total variation), and finally the $F_I$-curve
(the best possible non-linear SDPI for a given channel). Resulting bounds on the contraction coefficients are interpreted as
probability of site percolation.
As an example, we demonstrate how to obtain SDPI for an $n$-letter memoryless channel with feedback given an SDPI for $n=1$.

Finally, we discuss a simple observation on the equivalence of a linear SDPI and comparison to an erasure channel (in
the sense of ``less noisy'' order). This leads to a simple proof of a curious inequality of Samorodnitsky (2015), and
sheds light on how information spreads in the subsets of inputs of a memoryless channel.
\end{abstract}

\newpage
\tableofcontents
\newpage


%

\section{Introduction}
Multiplication of a componentwise non-negative vector by a stochastic matrix results in a vector that is
``more uniform''. This observation appears in several classical works~\cite{markov1906extension,doeblin1937cas,GB57} differing in their
particular way of making quantitative estimates. For example, Birkhoff's work~\cite{GB57} initiated a study (sometimes
known as geometric ergodicity) of contraction of the projective distance $d_P(x,y)\eqdef \log \max_{i}{x_i\over y_i} -
\log \min_i {x_i\over y_i}$ between vectors in $\mreals_+^n$. Here, instead, we will be interested in contraction of
statistical distances and information measures involving probability distributions, which we define next.

Fix a transition probability kernel (channel) $P_{Y|X}: \matx\to\maty$ acting between two measurable
spaces.
We denote
by $P_{Y|X}\circ P$ the distribution on $\maty$ induced by the push-forward of the distribution $P$, which is the distribution of the output $Y$ when the input $X$ is distributed according to $P$,
 and by $P \times P_{Y|X}$ the joint distribution $P_{XY}$ if $P_X=P$.
We also denote by $P_{Z|Y} \circ P_{Y|X}$ the serial composition of channels.\footnote{More formally, we should have written $P_{Y|X}:\matp(\matx)\to\matp(\maty)$ as a map between spaces of probability
measures $\matp(\cdot)$ on respective bases. The rationale for our notation $P_{Y|X}:\matx\to\maty$ is that we view Markov kernels as randomized
functions. Then, a single distribution $P$ on $\matx$ is a randomized function acting from a space of a
single point, i.e. $P:[1] \to \matx$, and that in turn explains our notation $P_{Y|X}\circ P$ for denoting the induced
marginal distribution.  }

We define three quantities that will play key role in our discussion: the total variation, the 
Kullback-Leibler (KL) divergence and the mutual information
\begin{align} \TV(P, Q) &\eqdef \sup_E |P[E] - Q[E]| = {1\over 2} \int |\diff P - \diff Q|, \label{eq:tv}\\
	   D(P\|Q) &\eqdef \int \log {\diff P\over \diff Q}\, \diff P, \label{eq:KL}\\
	   I(A;B) &\eqdef D(P_{AB} \| P_A P_{B}).
   \end{align}

The purpose of this paper is to give exposition to the phenomenon that upon passing through a non-degenerate noisy channel
distributions become strictly closer and this leads to a loss of information. Namely we have three effects:
\begin{enumerate}
\item Total-variation (or Dobrushin) contraction: 
 $$ \TV(P_{Y|X} \circ P, P_{Y|X} \circ Q) < \TV(P,Q)\,.$$
\item Divergence contraction:
 $$ D(P_{Y|X} \circ P \| P_{Y|X} \circ Q) < D(P\|Q) $$
\item Information loss: For any Markov chain\footnote{The notation $A\to B\to C$ simply means that $A\dperp C | B$.} $U\to X\to Y$ we

 $$ I(U;Y) < I(U;X)\,. $$
\end{enumerate}
These strict inequalities are collectively referred to as \emph{strong data-processing inequalities} (SDPIs).
The goal of this paper is to show intricate interdependencies between these effects, as well as introducing tools for
quantifying how strict these SDPIs are.

\paragraph{Organization}
In Section~\ref{sec:pp} we overview the case of a single channel. Notably, most of the results in the literature are proved for 
finite alphabets, \ie, $|\calX||\calY|<\infty$, with a few exceptions such as \cite{CKZ98,PW14a}.
We provide in \prettyref{app:general} a self-contained proof of some of these results for general alphabets.

From then on we focus on the question: \textit{Given a multi-terminal network with a single source and multiple sinks, and given
SDPIs for each of the channels comprising the network, how do we obtain an SDPI for the composite channel from source to
sinks?}
It turns out that this question has been addressed implicitly in the work of Evans and Schulman~\cite{evans1999signal}
on redundancy required in circuits of noisy gates. Rudiments also appeared in Dawson~\cite{dawson1975information} as well as Boyen and Koller~\cite{BK98-infer}.

In Section~\ref{sec:mi} we present the essence of the Evans-Schulman method and derive upper bounds on the mutual
information contraction coefficient $\etaKL$ for Bayesian networks (directed graphical models). We also interpret the
resulting bounds as probabilities of disupting end-to-end connectivity under independent removals of graph vertices
(site percolation). Then in
Section~\ref{sec:tv} we derive analogous estimates for Dobrushin's coefficient $\etaTV$ that governs the contraction of
the total variation on networks. While the results exactly parallel those for mutual information, the proof relies on new arguments using coupling.
Finally, Section~\ref{sec:ficurve} extends the technique to bounding the $F_I$-curves (the
non-linear SDPIs). 
Section~\ref{sec:erasure} concludes with an alternative point of view on mutual information contraction, namely that
of comparison to an erasure channel. As an example we give a short proof of a result of
Samorodnitsky~\cite{AS15-noisyfunc} about distribution of information in subsets of channel outputs.

\paragraph{Notation} 
Elements of the Cartesian product $\calX^n$ are denoted $x^n\eqdef (x_1,\ldots,x_n)$ to emphasize
their dimension. Given a transition probability kernel from $P_{Y|X}:\calX\to\calY$ we denote $P_{Y|X}^n=P_{Y^n|X^n}$ the kernel
acting from $\calX^n \to \calY^n$ componentwise independently:
$$ P_{Y^n|X^n}(y^n|x^n) \eqdef \prod_{j=1}^n P_{Y|X}(y_j|x_j). $$
To demonstrate the general bounds we consider the running example of $P_{Y|X}$ being an $n$-letter binary symmetric channel (BSC), given by 
\begin{equation}\label{eq:bsc_def}
	Y = X + Z, \quad X,Y\in \FF_2^n, ~Z\sim \mathrm{Bern}(\delta)^n
\end{equation}
and denoted by $\BSC(\delta)^n$. Throughout this paper $\bar\delta \triangleq 1-\delta$.

\section{SDPI for a single channel}\label{sec:pp}
\subsection{Contraction coefficients for $f$-divergence and mutual information}
Let $f:(0,\infty)\to\mreals$
be a convex function that is strictly convex at 1 and $f(1) = 0$. 
Let $D_f(P||Q) \triangleq \Expect_Q[f(\fracd{P}{Q})]$ denote the $f$-divergence of $P$ and $Q$ with $P\ll Q$,
cf.~\cite{IC67}.\footnote{More generally, $D_f(P||Q) \triangleq \Expect_\mu \qth{f\pth{\frac{\diff P/\diff \mu}{\diff Q/\diff \mu}}
}$, where $\mu$ is a dominating probability measure of $P$ and $Q$, \eg, $\mu=(P+Q)/2$, with the understanding that
$f(0) = f(0+)$, $0f(\frac{0}{0})=0$ and $0 f(\frac{a}{0}) = \lim_{x\downarrow0} x f(\frac{a}{x})$ for $a>0$.} For
example, the total variation \prettyref{eq:tv} and the KL divergence \prettyref{eq:KL} correspond to $f(x)={1\over2}|x-1|$ and $f(x)=x \log x$ respectively; 
taking $f(x)=(x-1)^2$ we obtain the $\chi^2$-divergence: $\chi^2(P\|Q) \eqdef \int ({\diff P\over \diff Q})^2 \diff Q - 1$.

For any $Q$ that is not a point mass, define:
\begin{align}
\eta_f(P_{Y|X},Q) &\eqdef  \sup_{P: 0<D_f(P\|Q)<\infty} {D_f(P_{Y|X} \circ P \| P_{Y|X} \circ Q)\over
			D_f(P\| Q)}, \label{eq:eta_fq}\\
			\eta_f(P_{Y|X}) &\eqdef \sup_Q \eta_f(Q)\,.\label{eq:eta_f} 
   \end{align}
It is easy to show that the supremum is over a non-empty set whenever $Q$ is not a point mass (see \prettyref{app:general}).
   For notational simplicity when the channel is clear from context we abbreviate $\eta_f(P_{Y|X})$ as $\eta_f$.
   For contraction coefficients of total variation, $\chi^2$ and KL divergence, we write
   $\etaTV, \etachi$ and $\etaKL$, respectively, which play prominent roles in this exposition.
   
   One of the main tools for studying ergodicity property of Markov chains as well as Gibbs measures, $\etaTV(P_{Y|X})$ is known as the \emph{Dobrushin's coefficient} of the kernel $P_{Y|X}$.
Dobrushin~\cite{RLD56} showed that the supremum in the definition of $\etaTV$ can be
   restricted to point masses, namely,
   \begin{equation}
	\etaTV(P_{Y|X}) = \sup_{x,x'} \TV(P_{Y|X=x}, P_{Y|X=x'}), 
	\label{eq:etaTV}
\end{equation}
   thus providing a simple criterion for strong ergodicity of Markov processes. 
   Later~\cite[Proposition II.4.10(i)]{CKZ98} (see also \cite[Theorem 4.1]{CIR93} for finite alphabets)
   demonstrated that all other contraction coefficients
   are upper bounded by the Dobrushin's coefficient, with inequality being typically strict (cf.~the BSC example below): 
  \begin{theorem}[{\cite[Proposition II.4.10]{CKZ98}}] For every $f$-divergence, we have
\label{thm:eta_ub}
	\begin{align}\label{eq:eta_ub}
	   	\eta_f(P_{Y|X}) \le \etaTV(P_{Y|X}).
\end{align}	
\end{theorem}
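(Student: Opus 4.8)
The plan is to prove the bound first for the \emph{hockey-stick divergences} $E_\gamma(P\|Q)\eqdef(P-\gamma Q)_+(\calX)=\sup_E\bigl(P(E)-\gamma Q(E)\bigr)$ with $\gamma\ge1$ (so that $E_1=\TV$ in the normalization \eqref{eq:tv}), and then to pass to an arbitrary $f$-divergence by writing it as a nonnegative mixture of such. Fix a reference $Q$ that is not a point mass, an input $P$ with $0<D_f(P\|Q)<\infty$, and abbreviate $t\eqdef\etaTV(P_{Y|X})$, $\mu\eqdef P_{Y|X}\circ P$, $\nu\eqdef P_{Y|X}\circ Q$. Dobrushin's identity \eqref{eq:etaTV} says precisely that $|P_{Y|X=x}(B)-P_{Y|X=x'}(B)|\le t$ for every measurable $B\subseteq\calY$ and all $x,x'$.

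The first step is the key lemma: for every $\gamma\ge1$,
\begin{equation}
E_\gamma(\mu\|\nu)\ \le\ t\,E_\gamma(P\|Q),\label{eq:hs-contract}
\end{equation}
proved by a direct extension of Dobrushin's own argument. For any measurable $B\subseteq\calY$,
\begin{equation*}
\mu(B)-\gamma\,\nu(B)=\int_\calX P_{Y|X=x}(B)\,(P-\gamma Q)(\diff x).
\end{equation*}
Let $\beta\eqdef\inf_x P_{Y|X=x}(B)$ (an essential infimum with respect to $P+Q$ if needed). Subtracting this constant and using $(P-\gamma Q)(\calX)=1-\gamma\le0$,
\begin{equation*}
\mu(B)-\gamma\,\nu(B)=\beta(1-\gamma)+\int_\calX\bigl(P_{Y|X=x}(B)-\beta\bigr)(P-\gamma Q)(\diff x)\ \le\ \int_\calX\bigl(P_{Y|X=x}(B)-\beta\bigr)(P-\gamma Q)(\diff x).
\end{equation*}
Now $0\le P_{Y|X=x}(B)-\beta\le t$ for all $x$, the upper bound being exactly \eqref{eq:etaTV}; integrating this nonnegative integrand against the Jordan decomposition $P-\gamma Q=(P-\gamma Q)_+-(P-\gamma Q)_-$ bounds the last expression by $t\,(P-\gamma Q)_+(\calX)=t\,E_\gamma(P\|Q)$. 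Taking the supremum over $B$ gives \eqref{eq:hs-contract}; the same computation with $P$ and $Q$ interchanged gives $E_\gamma(\nu\|\mu)\le t\,E_\gamma(Q\|P)$.

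To finish I would invoke the standard integral representation of $f$-divergences by hockey-stick divergences: with $f'$ the (nondecreasing, right-continuous) right derivative of $f$ and $\diff f'$ the induced positive Stieltjes measure on $(0,\infty)$, one has $f(x)=f'(1)(x-1)+\int_{(1,\infty)}(x-s)_+\,\diff f'(s)+\int_{(0,1]}(s-x)_+\,\diff f'(s)$, hence for any pair of distributions
\begin{equation*}
D_f(P\|Q)=\int_{(1,\infty)}E_s(P\|Q)\,\diff f'(s)+\int_{(0,1]}s\,E_{1/s}(Q\|P)\,\diff f'(s)
\end{equation*}
(all integrands nonnegative, using $\int_\calX(s-\tfrac{\diff P}{\diff Q})_+\,\diff Q=s\,E_{1/s}(Q\|P)$ for $s\le1$). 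Applying \eqref{eq:hs-contract} and its swapped counterpart under each integral gives $D_f(\mu\|\nu)\le t\,D_f(P\|Q)$; since $0<D_f(P\|Q)<\infty$ (and $D_f(\mu\|\nu)<\infty$ by the ordinary data-processing inequality), dividing and taking the supremum over admissible $P$ and then over $Q$ yields $\eta_f(P_{Y|X})\le t=\etaTV(P_{Y|X})$.

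The content of the theorem is carried by the elementary lemma \eqref{eq:hs-contract}; the point needing the most care is the integral representation in the last step — its precise form for an arbitrary convex $f$ (the second derivative understood as the measure $\diff f'$, a possible atom at $s=1$, and the extra boundary term $f'(\infty)$ that appears when $P\not\ll Q$) is standard $f$-divergence theory, which I would cite rather than reprove.
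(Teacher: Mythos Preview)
The paper does not supply its own proof of this theorem; it is stated with attribution to \cite[Proposition II.4.10]{CKZ98} (and to \cite[Theorem 4.1]{CIR93} for the finite-alphabet case), so there is nothing to compare against directly. Your argument is correct and gives a clean self-contained proof.

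The contraction lemma \eqref{eq:hs-contract} is sound: subtracting $\beta=\inf_x P_{Y|X=x}(B)$ is exactly the right move, since it simultaneously makes the integrand lie in $[0,t]$ (by Dobrushin's characterization \eqref{eq:etaTV}) and discards a nonpositive term $\beta(1-\gamma)$; the bound via the Jordan decomposition of $P-\gamma Q$ then goes through as you wrote. Your caveats about the integral representation in the final step are well placed --- one should use the right derivative $f'_+$, allow an atom of $\diff f'_+$ at $s=1$, and when $P\not\ll Q$ add the singular contribution $f'(\infty)\,P(\{\diff Q/\diff P=0\})$, which itself is $E_\gamma$-like and obeys the same contraction --- but these are bookkeeping matters, and the representation is indeed standard (it appears, e.g., in Liese--Vajda and in Sason--Verd\'u). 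Once it is in place, applying \eqref{eq:hs-contract} and its $P\leftrightarrow Q$ counterpart under the integral sign yields $D_f(\mu\|\nu)\le t\,D_f(P\|Q)$ and hence \eqref{eq:eta_ub}.
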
   

For the opposite direction, lower bounds on $\eta_f$ typically involves $\etachi$, the contraction coefficient of the $\chi^2$-divergence.
   It is well-known, e.g.~Sarmanov \cite{OS58}, that $\eta_{\chi^2}(P_{Y|X},P_X)$ is the squared second largest
   eigenvalue of the conditional expectation operator, which in turn equals the \emph{maximal correlation}
   coefficient of the joint distribution $P_{XY}$:
   \begin{equation}\label{eq:chi_maxcor0}
	   S(X;Y) \eqdef \sup_{f, g} \rho(f(X), g(Y)) = \sqrt{\eta_{\chi^2}(P_{Y|X},P_X)}\,,
\end{equation}	
where $\rho(\cdot,\cdot)$ denotes the correlation coefficient and 
the supremum is over real-valued functions $f,g$ such that $f(X)$ and $g(Y)$ are square integrable.

The relationship between $\etaKL$ and $\etachi$ on finite alphabets has been systematically studied by Ahlswede and G\'acs~\cite{AG76}. In particular, \cite{AG76} proved
   \begin{equation}\label{eq:chi2KL_bound}
	   \eta_{\chi^2}(P_{Y|X},P_X) \le \etaKL(P_{Y|X},P_X), 
   \end{equation}
   and noticed that the inequality is frequently strict.\footnote{See~\cite[Theorem 9]{AG76} and 
   \cite{AGKN13} for examples.} 
   Furthermore, for finite alphabets, the following equivalence is demonstrated in \cite{AG76}:
 \begin{align}
 \etachi(P_X,P_{Y|X}) < 1 
\iff & ~ \etaKL(P_X,P_{Y|X}) < 1 	\label{eq:C0}\\
\iff & ~ 	\text{graph~} \{(x,y): P_X(x)>0,
   P_{Y|X}(y|x) > 0\}\mbox{~is connected}.
\end{align}
   As a criterion for $\eta_f(P_{Y|X},P_X)<1$, this is an improvement of~\eqref{eq:eta_ub} only for channels with
   $\etaTV(P_{Y|X})=1$. 
The lower bound \prettyref{eq:chi2KL_bound} can in fact be considerably generalized:
\begin{theorem}
\label{thm:eta_lb}
Let $f$ be twice continuously differentiable on $(0,\infty)$ with $f''(1) > 0$. Then for any $P_X$ that is not a point mass,
		\begin{equation}
		\eta_{\chi^2}(P_{Y|X},P_X) \le \eta_f(P_{Y|X},P_X)\,,
		\label{eq:eta_lb-Q}
\end{equation} 
and
		\begin{equation} \label{eq:eta_lb}
		\eta_{\chi^2}(P_{Y|X}) \le \eta_f(P_{Y|X})\,.
\end{equation} 
\end{theorem}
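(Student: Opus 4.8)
The plan is to use the fact that the $\chi^2$-divergence is the ``second-order'' part of every smooth $f$-divergence: near the reference measure $Q\eqdef P_X$ one has $D_f(P\|Q)\approx\frac{f''(1)}{2}\chi^2(P\|Q)$. So I would apply the contraction inequality for $D_f$ along a one-parameter family of inputs $P_\epsilon$ collapsing onto $Q$, and read off $\eta_{\chi^2}$ in the limit $\epsilon\to0$.

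\textbf{Step 1 (variational formula for $\eta_{\chi^2}$).} Using $\chi^2(P\|Q)=\Var_Q(\diff P/\diff Q)$ together with $\diff(P_{Y|X}\circ P)/\diff(P_{Y|X}\circ Q)=\Expect_Q[\diff P/\diff Q\mid Y]$, I would first record that
\begin{equation*}
\eta_{\chi^2}(P_{Y|X},P_X)=\sup_{g}\frac{\Expect\qth{\pth{\Expect[g(X)\mid Y]}^2}}{\Var(g(X))},\qquad (X,Y)\sim P_X\times P_{Y|X},
\end{equation*}
the supremum over $g$ with $\Expect[g(X)]=0$ and $0<\Var(g(X))<\infty$ (this is the maximal-correlation identity \prettyref{eq:chi_maxcor0} rewritten via the best-predictor formula; the sign constraint $1+g\ge0$ is harmless after rescaling $g$). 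Since $g\mapsto\Expect[g(X)\mid Y]$ is an $L^2$-contraction and bounded functions are $L^2(P_X)$-dense, the supremum is unchanged if one restricts to bounded $g$; fix one such non-constant mean-zero $g$ with $\|g\|_\infty\le M$.

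\textbf{Step 2 (perturb and expand to second order).} For $0<\epsilon<1/M$ set $\diff P_\epsilon/\diff P_X=1+\epsilon g$, a genuine probability measure with $P_\epsilon\ll P_X$ and $P_\epsilon\ne P_X$; its output law obeys $\diff(P_{Y|X}\circ P_\epsilon)/\diff(P_{Y|X}\circ P_X)=1+\epsilon h(Y)$ with $h(Y)\eqdef\Expect[g(X)\mid Y]$, so $\|h\|_\infty\le M$ and $\Expect[h(Y)]=0$. Taylor expanding ($f\in C^2$, $f(1)=0$) gives $f(1+t)=f'(1)t+\tfrac12 f''(1)t^2+t^2r(t)$ with $r(t)\to0$ as $t\to0$ and $r$ bounded on $[-\epsilon_0 M,\epsilon_0 M]$ for any fixed $\epsilon_0<1/M$. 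The linear terms integrate to zero against $P_X$ (resp.\ $P_{Y|X}\circ P_X$) because $g,h$ are centered, and the quadratic remainders are dominated by a fixed multiple of $g(X)^2$ (resp.\ $h(Y)^2$); hence dominated convergence yields, as $\epsilon\downarrow0$,
\begin{align*}
D_f(P_\epsilon\|P_X)&=\tfrac12 f''(1)\,\epsilon^2\Var(g(X))+o(\epsilon^2),\\
D_f\pth{P_{Y|X}\circ P_\epsilon \,\|\, P_{Y|X}\circ P_X}&=\tfrac12 f''(1)\,\epsilon^2\Expect[h(Y)^2]+o(\epsilon^2).
\end{align*}

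\textbf{Step 3 (divide, take $\epsilon\to0$, optimize).} Because $f$ is strictly convex at $1$ and $P_\epsilon\ne P_X$, we have $0<D_f(P_\epsilon\|P_X)<\infty$, so $P_\epsilon$ is admissible in \prettyref{eq:eta_fq}; since $f''(1)>0$ and $\Var(g(X))>0$, dividing the two displays and sending $\epsilon\downarrow0$ gives $\eta_f(P_{Y|X},P_X)\ge \Expect[h(Y)^2]/\Var(g(X))=\Expect\qth{\pth{\Expect[g(X)\mid Y]}^2}/\Var(g(X))$. Taking the supremum over admissible bounded $g$ and applying Step 1 yields \prettyref{eq:eta_lb-Q}, and then taking the supremum over all $P_X$ that are not point masses yields \prettyref{eq:eta_lb}. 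The only subtle point — and the only place where $f\in C^2$ and $f''(1)>0$ are genuinely used — is the uniformity of the $o(\epsilon^2)$ remainders; the restriction to bounded $g$ in Step 1 reduces this to a routine dominated-convergence argument, so I expect no real obstacle beyond careful bookkeeping.
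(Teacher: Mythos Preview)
Your proof is correct and follows essentially the same approach as the paper: both exploit the local quadratic behavior $D_f(P_\epsilon\|Q)=\tfrac{f''(1)}{2}\epsilon^2\chi^2+o(\epsilon^2)$ along a one-parameter perturbation, with the key technical point being a reduction to bounded likelihood ratios so that Taylor's theorem plus dominated convergence applies uniformly. Your packaging via the operator-norm formula $\eta_{\chi^2}(P_{Y|X},P_X)=\sup_g \Expect[(\Expect[g(X)\mid Y])^2]/\Var(g(X))$ and the restriction to bounded $g$ is a bit cleaner than the paper's two-step approximation (first truncating $\diff Q_X/\diff P_X$ from above, then mixing with $P_X$ to bound it away from zero), but the underlying argument is the same.
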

See \prettyref{app:eta_lb} for a proof of \prettyref{eq:eta_lb-Q} for the general case, which yields \prettyref{eq:eta_lb} by taking suprema over $P_X$ on both sides.
Note that \prettyref{eq:eta_lb} (resp.~\prettyref{eq:eta_lb-Q}) have been proved in \cite[Proposition II.6.15]{CKZ98} for the general alphabet (resp.~in \cite[Theorem 3.3]{Raginsky14} for finite alphabets).

Moreover, \prettyref{eq:eta_lb} in fact holds with equality for all
nonlinear and operator convex $f$, \eg, for KL divergence and for squared Hellinger distance; see~\cite[Theorem
1]{CRS94} and \cite[Proposition II.6.13 and Corollary II.6.16]{CKZ98}. Therefore, we have:
\begin{theorem}
\label{thm:chi2KL}
   \begin{equation}\label{eq:chi2KL}
	   	\eta_{\chi^2}(P_{Y|X}) = \etaKL(P_{Y|X})\,.
   \end{equation}   	
\end{theorem}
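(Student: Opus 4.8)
The plan is to prove the two inequalities making up \prettyref{eq:chi2KL} separately. The bound $\etachi(P_{Y|X})\le\etaKL(P_{Y|X})$ is an immediate instance of \prettyref{thm:eta_lb}, since $f(x)=x\log x$ is twice continuously differentiable on $(0,\infty)$ with $f''(1)=1>0$; thus $\etachi(P_{Y|X})\le\eta_f(P_{Y|X})=\etaKL(P_{Y|X})$ by \prettyref{eq:eta_lb}. The substance of \prettyref{eq:chi2KL} is therefore the reverse bound $\etaKL(P_{Y|X})\le\etachi(P_{Y|X})$, which I would prove by exploiting the operator convexity of $x\mapsto x\log x$.

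The structural ingredient is the integral representation of operator convex functions (L\"owner--Bendat--Sherman): every operator convex $f\colon(0,\infty)\to\mreals$ with $f(1)=0$ equals, modulo an additive linear term $a(x-1)$ that does not affect $D_f$ (because $\int(\diff P-\diff Q)=0$),
\begin{equation}
f(x)=b\,(x-1)^2+\int_{(0,\infty)}\frac{(x-1)^2}{x+s}\,\diff\mu(s),\qquad b\ge0,\ \mu\ge0 .
\label{eq:opconvex-rep}
\end{equation}
Writing $f_s(x)\eqdef\frac{(x-1)^2}{x+s}$, this gives $D_f(P\|Q)=b\,\chi^2(P\|Q)+\int_{(0,\infty)}D_{f_s}(P\|Q)\,\diff\mu(s)$, so it suffices to show that each of the ``elementary'' divergences $D_{f_s}$, as well as $\chi^2$ itself, contracts under $P_{Y|X}$ by a factor at most $\etachi(P_{Y|X})$; the bound for $f$ then follows by integrating the pointwise inequalities against $\mu$ (plain monotonicity of the integral, since the contraction factor is the same for every $s$).

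For the elementary bound, write $D_{f_s}(P\|Q)=\int\frac{(\diff P-\diff Q)^2}{\diff P+s\,\diff Q}$ (with the usual conventions). A direct computation using $\diff Q-\diff R_s=\frac1{1+s}(\diff Q-\diff P)$ gives the identity, valid for $s\ge0$,
\begin{equation}
D_{f_s}(P\|Q)=(1+s)\,\chi^2(Q\,\|\,R_s),\qquad R_s\eqdef\frac{P+sQ}{1+s}
\label{eq:fs-as-chi2}
\end{equation}
(at $s=0$ this reads $D_{f_0}(P\|Q)=\chi^2(Q\|P)$, and the term $b\,(x-1)^2$ contributes $b\,\chi^2(P\|Q)$). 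Since a Markov kernel acts affinely on distributions, $P_{Y|X}\circ R_s=\frac{(P_{Y|X}\circ P)+s(P_{Y|X}\circ Q)}{1+s}$, and therefore \prettyref{eq:fs-as-chi2} and the definition of $\etachi$ give
\[
D_{f_s}(P_{Y|X}\circ P\,\|\,P_{Y|X}\circ Q)=(1+s)\,\chi^2(P_{Y|X}\circ Q\,\|\,P_{Y|X}\circ R_s)\le(1+s)\,\etachi(P_{Y|X})\,\chi^2(Q\|R_s)=\etachi(P_{Y|X})\,D_{f_s}(P\|Q).
\]
Integrating this together with the trivial bound $\chi^2(P_{Y|X}\circ P\|P_{Y|X}\circ Q)\le\etachi(P_{Y|X})\chi^2(P\|Q)$ against $\mu$ in \prettyref{eq:opconvex-rep} yields $D_f(P_{Y|X}\circ P\|P_{Y|X}\circ Q)\le\etachi(P_{Y|X})D_f(P\|Q)$ for $f(x)=x\log x$; taking suprema over $P$ and $Q$ gives $\etaKL(P_{Y|X})=\eta_f(P_{Y|X})\le\etachi(P_{Y|X})$, which with the first inequality proves \prettyref{eq:chi2KL}. (One may instead simply invoke \cite[Theorem~1]{CRS94} or \cite[Proposition~II.6.13 and Corollary~II.6.16]{CKZ98}, where $\eta_f=\etachi$ for nonlinear operator convex $f$ is established in essentially this way.)

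The step I expect to be the main obstacle is the measure-theoretic care needed to run the argument on general (not necessarily finite or dominated) alphabets: establishing \prettyref{eq:opconvex-rep} with the correct integrability of $\mu$, treating the part of $P$ singular with respect to $Q$ via the conventions for $f(0+)$ and $0\cdot f(a/0)$, and justifying the interchange of the integral over $s$ with the various limits even when some $D_{f_s}(P\|Q)$ is infinite --- for which it is convenient to first note that the supremum defining $\eta_f(P_{Y|X},Q)$ is unchanged if restricted to $P$ with $\chi^2(P\|Q)<\infty$. All of this is routine; the conceptual content is entirely in \prettyref{eq:fs-as-chi2} and the affine action of the channel on distributions.
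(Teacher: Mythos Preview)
Your proposal is correct and follows essentially the same route as the paper: both directions are handled identically (the easy one via \prettyref{thm:eta_lb}), and for $\etaKL\le\etachi$ both arguments decompose KL into a superposition of $\chi^2$-divergences taken against convex mixtures of $P$ and $Q$, then apply the $\chi^2$ contraction bound pointwise and integrate.

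The only difference is packaging. You invoke the L\"owner--Bendat--Sherman representation \prettyref{eq:opconvex-rep} for general operator convex $f$, then identify each elementary piece with a $\chi^2$ via \prettyref{eq:fs-as-chi2}. The paper instead writes down the specific identity
\[
D(Q\|P)=\int_0^\infty \chi^2\!\Big(Q\,\Big\|\,\tfrac{tQ+P}{1+t}\Big)\,\diff t,
\]
which it derives in two lines from $-\log x=\int_0^\infty \frac{1-x}{(x+t)(1+t)}\,\diff t$, and then contracts under the channel termwise. Your version buys generality (the same argument gives $\eta_f=\etachi$ for every nonlinear operator convex $f$, as the paper remarks just before the theorem), at the cost of importing the operator-convex representation as a black box; the paper's version is fully self-contained for KL and sidesteps the measure-theoretic caveats you flag in your last paragraph, since the explicit integral formula makes Tonelli immediate. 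Both are the \cite{CRS94} argument at heart.
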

See \prettyref{app:eta_lb} for a self-contained proof. This result was first obtained in \cite{AG76} using different methods for discrete space.
\apxonly{More generally, the range of 
	$f\mapsto \eta_f$ is included (\textit{and equal} for discrete channels) in
	$$ \sup_{x \neq x'} \beta_1(P_x, P_{x'}) \le \eta_f \le \etaTV $$
	see~\cite[Propositions II.6.3 and II.6.4]{CKZ98}.}%
   	Rather naturally, we also have~\cite[Proposition II.4.12]{CKZ98}:
	$$ \eta_f(P_{Y|X}) = 1 \quad \iff \quad \etaTV(P_{Y|X}) = 1 $$
	for any non-linear $f$.   
   
	   As an illustrating example, for $\BSC(\delta)$ defined in \prettyref{eq:bsc_def},
we have cf.~\cite{AG76} 
\begin{equation}
\eta_{\chi^2}=\etaKL=(1-2\delta)^2<\etaTV=|1-2\delta|.
	\label{eq:etaBSC}
\end{equation}
\prettyref{app:binary} present general results on the contraction coefficients for binary-input arbitrary-output channels, which can be bounded using Hellinger distance within a factor of two.

We next discuss the the fixed-input contraction coefficient $\etaKL(P_{Y|X},Q)$. Unfortunately, there is no simple
reduction to the $\chi^2$-case as in~\eqref{eq:chi2KL}. Besides the lower bound~\eqref{eq:chi2KL_bound}, there is a
variety of upper bounds relating $\etaKL$ and $\eta_{\chi^2}$. We quote~\cite[Theorem 11]{makur2015bounds}, who show for
finite input-alphabet case:
$$ \etaKL(P_{Y|X},Q) \le {1\over \min_x{Q(x)}}\eta_{\chi^2}(P_{Y|X},Q)\,.$$
Another bound (which also holds for all $\eta_f$ with operator-convex $f$) is in~\cite[Theorem 3.6]{Raginsky14}:
$$ \etaKL(P_{Y|X},Q) \le \max\left(\eta_{\chi^2}(P_{Y|X},Q), \sup_{0<\beta<1} \eta_{\LC_\beta}(P_{Y|X},Q)\right)\,,$$
where $\eta_{\LC_\beta}$ denotes contraction coefficient of an $f$-divergence $\LC_\beta(P\|Q) = \beta\bar\beta
\int \frac{(P-Q)^2}{\beta P + \bar\beta Q}$ with $\beta\in(0,1)$ and $\bar\beta=1-\beta$ (see also
Appendix~\ref{app:binary}).

   We also note in passing that SDPIs are intimately related to hypercontractivity and maximal correlation, as
discovered by Ahlswede and G\'acs~\cite{AG76} and recently improved by Anantharam et al.~\cite{AGKN13} and
Nair~\cite{Nair2015}. Indeed, the main result of~\cite{AG76} characterizes $\etaKL(P_{Y|X},P_X)$
   as the maximal ratio of hyper-contractivity of the conditional expectation operator 
   $\EE[\cdot|X]$.

The fixed-input contraction coefficient $\etaKL(Q)$ is closely related to the (modified) log-Sobolev inequalities. Indeed, if $\etaKL(Q)<1$ where $Q$ is the invariant measure for the Markov kernel $P_{Y|X}$, \ie, $P_{Y|X}\circ Q=Q$, then any initial distribution $P$ such that $D(P\|Q)<\infty$
	converges to $Q$ exponentially fast 
	since
	\[
	D(P_{Y|X}^n \circ P || Q) \leq \etaKL^n(P_{Y|X},Q) D(P || Q),
	\]
	where the exponent $\etaKL(P_{Y|X},Q)$ can in turn be
   estimated from log-Sobolev inequalities, e.g.~\cite{MLxx}. When $Q$ is not invariant,
   it was shown~\cite{MLM03} that
   \begin{equation*}
   \label{eq:disc_miclo}
	   	1-\alpha(Q) \le \etaKL(P_{Y|X},Q) \le 1-C\alpha(Q)
   \end{equation*}   
   holds for some universal constant $C$, where $\alpha(Q)$ is a modified log-Sobolev (also known as $1$-log-Sobolev) constant:%
   $$ \alpha(Q) = \inf_{f \neq 1, \|f\|_2=1} {\EE\left[ f^2(X) \log {f^2(X)\over
   f^2(X')}\right] \over \EE[f^2(X) \log f^2(X)] }, \qquad P_{X X'} = Q \times
   (P_{X|Y} \circ P_{Y|X}) .$$
   For further connections between $\etaKL$ and log-Sobolev inequalities on finite alphabets see~\cite{Raginsky13,Raginsky14}.

There exist several other characterizations of $\etaKL$, such as the following one in terms of the contraction of mutual information (cf.~\cite[Exercise III.3.12, p.~350]{CK} for finite alphabet):
\begin{equation}\label{eq:etakl_mi}
		\etaKL(P_{Y|X})  = \sup {I(U; Y)\over I(U;X)}\,,
\end{equation}
where the supremum is over all Markov chains $U\to X \to Y$ with fixed $P_{Y|X}$ (or equivalently, over all joint
distributions $P_{XU}$) such that $I(U;X)<\infty$. This result is an immediate consequence of the following input-dependent version 
(see \prettyref{app:etaKLI} for a proof in the general case; the finite alphabet case has been shown in \cite{AGKN13})
\begin{theorem}
\label{thm:etaKLI}
For any $P_X$ that is not a point mass,
\begin{equation}
	\etaKL(P_{Y|X},P_X) = \sup {I(U; Y)\over I(U;X)} \,,
	\label{eq:etaKL}
\end{equation}
where the supremum is taken over all Markov chains $U\to X \to Y$ with fixed $P_{XY}=P_X\circ P_{Y|X}$ such that $0<I(U;X)<\infty$. 	
\end{theorem}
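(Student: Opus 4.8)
\emph{Proof plan.} I would prove the two inequalities in \eqref{eq:etaKL} separately. The ``$\le$'' direction is the routine one: fix any Markov chain $U\to X\to Y$ with the prescribed $P_{XY}$ and $0<I(U;X)<\infty$, and disintegrate, using the identities $I(U;X)=\int D(P_{X|U=u}\|P_X)\,P_U(\diff u)$ and $I(U;Y)=\int D(P_{Y|U=u}\|P_Y)\,P_U(\diff u)$, where $P_Y=P_{Y|X}\circ P_X$. Since $U\to X\to Y$, we have $P_{Y|U=u}=P_{Y|X}\circ P_{X|U=u}$ for every $u$. Finiteness of $I(U;X)$ forces $D(P_{X|U=u}\|P_X)<\infty$ for $P_U$-a.e.\ $u$, and for such $u$ the definition \eqref{eq:eta_fq} of the fixed-input coefficient gives $D(P_{Y|U=u}\|P_Y)\le \etaKL(P_{Y|X},P_X)\,D(P_{X|U=u}\|P_X)$ (the case $D(P_{X|U=u}\|P_X)=0$ being trivial, as then $P_{X|U=u}=P_X$ and $P_{Y|U=u}=P_Y$). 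Integrating over $u$ yields $I(U;Y)\le \etaKL(P_{Y|X},P_X)\,I(U;X)$, and dividing and taking the supremum gives one inequality.

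For the ``$\ge$'' direction I would turn a near-optimal $P$ in the definition of $\etaKL(P_{Y|X},P_X)$ into a near-optimal $U$ by a small-perturbation argument. First assume $P\ll P_X$ with $0<D(P\|P_X)<\infty$ and additionally bounded density $g=\diff P/\diff P_X\le C$. For $\epsilon\in(0,1/C)$ take $U\in\{0,1\}$ with $P_U(1)=\epsilon$, $P_{X|U=1}=P$, and $P_{X|U=0}=(P_X-\epsilon P)/(1-\epsilon)$; the latter is a genuine probability measure since $1-\epsilon g\ge 1-\epsilon C>0$, and the resulting $X$-marginal is exactly $P_X$, so $P_{UX}P_{Y|X}$ is admissible. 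The density $h_\epsilon=(1-\epsilon g)/(1-\epsilon)$ of $P_{X|U=0}$ relative to $P_X$ stays in a fixed compact subinterval of $(0,\infty)$ for all small $\epsilon$; applying the elementary bound $t\log t\le (t-1)+K(t-1)^2$ valid on such an interval, together with $\int(h_\epsilon-1)\,\diff P_X=0$ and $\int(h_\epsilon-1)^2\,\diff P_X=\frac{\epsilon^2}{(1-\epsilon)^2}\chi^2(P\|P_X)$, gives $D(P_{X|U=0}\|P_X)=O(\epsilon^2)$, and hence also $D(P_{Y|U=0}\|P_Y)\le D(P_{X|U=0}\|P_X)=O(\epsilon^2)$ by the ordinary data-processing inequality. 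Plugging into the disintegration identities above,
\[
I(U;X)=\epsilon\,D(P\|P_X)+O(\epsilon^2),\qquad I(U;Y)=\epsilon\,D(P_{Y|X}\circ P\|P_Y)+O(\epsilon^2),
\]
so that $I(U;Y)/I(U;X)\to D(P_{Y|X}\circ P\|P_Y)/D(P\|P_X)$ as $\epsilon\downarrow0$ (here $D(P\|P_X)\in(0,\infty)$ and $0<I(U;X)<\infty$ for small $\epsilon$). Thus $\sup_U I(U;Y)/I(U;X)\ge D(P_{Y|X}\circ P\|P_Y)/D(P\|P_X)$ for every such bounded-density $P$.

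It remains to drop the boundedness assumption. Given an arbitrary $P\ll P_X$ with $0<D(P\|P_X)<\infty$, set $A_n=\{g\le n\}$ and $P_n=P(\,\cdot\mid A_n)$; then $P_n$ has bounded density, $P_n\to P$ in total variation, a monotone/dominated convergence argument gives $D(P_n\|P_X)\to D(P\|P_X)$, and lower semicontinuity of relative entropy under the TV-continuous map $Q\mapsto P_{Y|X}\circ Q$ gives $\liminf_n D(P_{Y|X}\circ P_n\|P_Y)\ge D(P_{Y|X}\circ P\|P_Y)$. Combining these with the bounded-density case applied to each $P_n$ yields
\[
\sup_{U}\frac{I(U;Y)}{I(U;X)}\ \ge\ \liminf_{n\to\infty}\frac{D(P_{Y|X}\circ P_n\|P_Y)}{D(P_n\|P_X)}\ \ge\ \frac{D(P_{Y|X}\circ P\|P_Y)}{D(P\|P_X)},
\]
and taking the supremum over $P$ completes the proof. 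The main obstacle is this last step: justifying the convergence/semicontinuity of the two divergences under truncation in the general-alphabet setting. For finite alphabets no truncation is needed, so the first two paragraphs already constitute a complete proof; the rest is a short perturbative computation plus standard continuity facts for relative entropy.
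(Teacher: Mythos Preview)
Your proof is correct and follows essentially the same route as the paper's: the easy direction by disintegrating $I(U;X)$ and $I(U;Y)$ and applying the definition of $\etaKL(P_{Y|X},P_X)$ fiberwise, and the hard direction by the binary-$U$ perturbation $P_{X|U=1}=P$, $P_{X|U=0}=(P_X-\epsilon P)/(1-\epsilon)$ under a bounded-density assumption, followed by truncation $P_n=P(\,\cdot\mid\{g\le n\})$ and the combination of $D(P_n\|P_X)\to D(P\|P_X)$ (dominated convergence, using $\Expect_P[|\log g|]<\infty$) with lower semicontinuity of $D(\cdot\|P_Y)$ to remove that assumption. The only cosmetic difference is that you bound $D(P_{X|U=0}\|P_X)$ by $O(\epsilon^2)$ via the local quadratic estimate $t\log t\le(t-1)+K(t-1)^2$, whereas the paper obtains the (equally sufficient) $o(\epsilon)$ by a direct bounded-convergence argument on $\frac{1}{\epsilon}D(\tilde P_X\|P_X)$.
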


Another characterization of $\etaKL$, in view of \eqref{eq:chi2KL} and~\eqref{eq:chi_maxcor0}, is 
	$$ \etaKL(P_{Y|X})  = \sup \rho^2(f(X), g(Y))\,, $$
where the supremum is over all $P_X$ 
and real-valued square-integrable $f(X)$ and $g(Y)$.


\subsection{Non-linear SDPI}
How to quantify the information loss if $\etaKL=1$ for the channel of interest? In fact this situation can arise in very basic settings, such as the
additive-noise Gaussian channel under the moment constraint on the input distributions (cf.~\cite[Theorem 9, Section 4.5]{PW14a}), 
where the mutual information does not contract linearly as in \prettyref{eq:etakl_mi}, but can still contract \emph{non-linearly}.
In such cases, establishing a strong-data processing inequality can be done by following the joint-range idea of
Harremo\"es and Vajda~\cite{HV11}. Namely, we aim to find (or bound) the \emph{best possible
data-processing function} $F_I$ defined as follows.

\begin{definition}[$F_I$-curve]
  \label{def:FI}
  Fix $P_{Y|X}$ and define
\begin{equation}
  F_I(t, P_{Y|X})\eqdef \sup_{P_{UX}} \left\{ I(U;Y)\colon I(U;X)\leq t,
    P_{UXY} = P_{UX} P_{Y|X}  \right\}.
\end{equation}
Equivalently, the supremum is taken over all joint distributions  $P_{UXY}$ with a given conditional $P_{Y|X}$ and
satisfying $U\rightarrow X\rightarrow Y$. 
The upper concave envelope of $F_I$ is denoted by $F_I^c$:
$$ F_I^c(t, P_{Y|X}) \eqdef \inf\{f(t):  \forall t'\ge 0 \,\, F_I(t', P_{Y|X}) \le f(t'), f\mbox{--concave}\}\,.$$
Equivalently, we have
\begin{equation}
  F_I^c(t, P_{Y|X})= \sup_{P_{VUX}} \left\{ I(U;Y|V)\colon I(U;X|V)\leq t,
    P_{VUXY} = P_{VUX} P_{Y|X}  \right\}\,,
\end{equation}
where $I(A;B|C) \eqdef I(A,C;B)-I(C;B)$ is the conditional mutual information, and averaging over $V$ serves the role of 
concavification (so that $V$ can be taken binary).
Whenever it does not lead to confusion we will write $F_{Y|X}(t)$ instead of $F_I(t, P_{Y|X})$.
\end{definition}
The operational significance of the $F_I$-curve is that it gives the optimal input-independent strong data processing inequality:
\[
I(U;Y) \leq F_I(I(U;X)),
\]
which generalizes \prettyref{eq:etakl_mi} since $F_I'(0)=\etaKL(P_{Y|X})$ and $t \mapsto \frac{1}{t}F_I(t)$ is decreasing (see, \eg,~\cite[Section I]{CPW15-journal}).
See \cite{CPW15-journal} for bounds and expressions for BSC and Gaussian channels.

Frequently it is more convenient to work with the concavified version $F_I^c$ as it allows for some natural extension of the results about
contraction coefficients. Proposition~\ref{prop:fi_bec} shows that $F_I$ may not be concave.

\subsection{Some applications: classical and new}
The main example of a strong data-processing inequality (SDPI) was discovered by Ahlswede and G\'acs~\cite{AG76}. They have shown, using the characterization \prettyref{eq:C0},
that whenever $P_{Y|X}$ is a discrete memoryless channel that does not admit zero-error communication, we have $\etaKL(P_{Y|X}) \leq \eta <1$ and
\begin{equation}\label{eq:m}
	I(W; Y) \le \eta  I(W; X) 
\end{equation}
for all Markov chains $W\to X\to Y$.

SDPIs have been popular for establishing lower (impossibility) bounds in
various setups, in both classical and more recent works. We mention only a few of these applications:
\begin{itemize}
\item By Dobrushin for showing non-existence of multiple phases in Ising models at high temperatures~\cite{RLD70};
\item By Erkip and Cover in portfolio theory~\cite{EC98};
\item By Evans and Schulman in analysis of noise-resistant circuits~\cite{evans1999signal};
\item By Evans, Kenyon, Peres and Schulman in the analysis of inference on trees and percolation~\cite{EKPS00};
\item By Courtade in distributed data-compression~\cite{TC12};
\item By Duchi, Wainwright and Jordan in statistical limitations of differential privacy~\cite{duchi2013local};
\item By the authors to quantify optimal communication and optimal control in line networks \cite{PW14a};
\item By Liu, Cuff and Verd\'u in key generation~\cite{liu2015secret};
\item By Xu and Raginsky in distributed estimation~\cite{xu2015converses}.
\end{itemize}

All of the applications above use SDPI~\eqref{eq:m} to prove negative (impossibility) statements. A notable exception
is the work of Boyen and Koller~\cite{BK98-infer}, who considered the basic problem of computing the posterior-belief
vector of a hidden Markov model: that is, given a Markov chain $\{X_j\}$ observed over a memoryless channel $P_{Y|X}$,
one aims to recompute $P_{X_j|Y^j_{-\infty}}$ as each new observation $Y_j$ arrives. The problem arises when $X$ is of large
dimension and then for practicality one is constrained to approximate (quantize) the posterior. However, due to
the recursive nature of belief computations, the cumulative effect of these approximations may become overwhelming. 
Boyen and Koller~\cite{BK98-infer} proposed to use the SDPI similar to~\eqref{eq:m} with $\eta<1$ for the Markov chain
$\{X_j\}$  and show that this cumulative
effect stays bounded since $\sum \eta^n < \infty$. Similar considerations also enable one to provide provable
guarantees for simulation of inter-dependent stochastic processes.

\section{Contraction of mutual information in networks}\label{sec:mi}

We start by defining a \emph{Bayesian network} (also known as a \emph{directed graphical model}).
Let $G$ be a finite directed acyclic graph with set of vertices $\{Y_v: v\in\calV\}$ denoting random
variables taking values in a fixed finite alphabet.\footnote{At the expense of technical details, the alphabet can be 
replaced with any countably-generated (e.g.~Polish) measurable space. For clarity of presentation we focus here on finite alphabets.}
We assume that each
vertex $Y_v$ is associated with a conditional distribution $P_{Y_v|Y_{\pa(v)}}$ where $\pa(v)$ denotes parents of $v$, with the
exception of one special ``source'' node $X$ that has no inbound edges (there may be other nodes without inbound edges,
but those have to have their marginals specified). Notice that if $V\subset \calV$ is  an arbitrary set of nodes we can
progressively chain together all the random transformations and unequivocally compute $P_{V|X}$ (here and below we use $V$ and $Y_V=\{Y_v: v\in V\}$ interchangeably). 
We assume that vertices in $\calV$ are topologically sorted so that $v_1 > v_2$ implies there is no path from $v_1$ to
$v_2$. Associated to each node we also define
$$ \eta_v \eqdef \etaKL(P_{Y_v|Y_{\pa(v)}})\,.$$
See the excellent book of Lauritzen~\cite{lau96} for a thorough introduction to a graphical model language of specifying
conditional independencies. 

The following result can be distilled from~\cite{evans1999signal}:
\begin{theorem}\label{th:es} 
Let $W\in \calV$ and $V\subset \calV$ such that $W>V$.
Then
	\begin{equation}
	\etaKL(P_{V,W|X}) \le \eta_W \cdot \etaKL(P_{V, \pa(W) |X}) + (1-\eta_W) \cdot \etaKL(P_{V|X})\,.
	\label{eq:es}
\end{equation}
	Furthermore, let $\perc(V)$ denote the probability that there is a path from $X$ to $V$\footnote{More
	formally, $\perc(V)$ equals probability that there exists a sequence of nodes $v_1,\ldots,v_n$ with $v_1=X$,
	$v_n \in V$ satisfying two conditions: 1) for each $i\in[n-1]$ the pair $(v_i,v_{i+1})$ is a directed edge in
	$G$; and 2) each $v_i$ is not removed.} in the graph if each
	node $v$ is removed independently with probability $1-\eta_v$ (site percolation). Then, we have for every $V\subset \calV$
	\begin{equation}\label{eq:es_perc}
		\etaKL(P_{V|X}) \le \perc(V)\,.
	\end{equation}	
	In particular, if $\eta_v<1$ for all $v\in\calV$ then $\etaKL(P_{V|X})<1$. 
\end{theorem}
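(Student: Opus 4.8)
The plan is to prove \eqref{eq:es} first, then derive \eqref{eq:es_perc} from it by induction on the topological order, and finally note that the last sentence is an immediate corollary of \eqref{eq:es_perc}.

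\emph{Step 1: the one-step recursion \eqref{eq:es}.} Fix a Markov chain $U \to X \to (Y_v: v \in \calV)$, so that $U$ is attached to the source $X$ and all the internal transitions $P_{Y_v \mid Y_{\pa(v)}}$ are as prescribed. I want to bound $I(U; Y_V, Y_W)$. Since $W > V$ in the topological order and $W$ has no outbound influence on $V$, the vertex $Y_W$ is generated from $Y_{\pa(W)}$ by the kernel $P_{Y_W \mid Y_{\pa(W)}}$, and conditionally on $Y_{\pa(W)}$ it is independent of $(U, Y_V)$; moreover $\pa(W) \subseteq V \cup \pa(W)$ while $Y_{\pa(W)}$ is a deterministic-coordinate-selection of the vector $Y_{V \cup \pa(W)}$ (using $W > V$). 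The idea is to condition on $Y_V$ and apply the input-dependent SDPI of Theorem~\ref{thm:etaKLI} (in the form \eqref{eq:etaKL}) to the channel $P_{Y_W \mid Y_{\pa(W)}}$ with the chain $(U, Y_V) \to Y_{\pa(W)} \to Y_W$, which gives $I(U, Y_V; Y_W \mid Y_{\pa(W)}\text{-side}) \le \eta_W\, I(U, Y_V; Y_{\pa(W)})$-type bounds; combined with the chain rule $I(U; Y_V, Y_W) = I(U; Y_V) + I(U; Y_W \mid Y_V)$ and $I(U; Y_W \mid Y_V) \le \eta_W\, I(U; Y_{\pa(W)} \mid Y_V)$ one arrives, after dividing by $I(U;X)$ and taking suprema, at
\[
\etaKL(P_{V,W\mid X}) \le \eta_W\, \etaKL(P_{V,\pa(W)\mid X}) + (1-\eta_W)\,\etaKL(P_{V\mid X}),
\]
where the $(1-\eta_W)\etaKL(P_{V\mid X})$ term appears because $I(U; Y_{\pa(W)} \mid Y_V) = I(U; Y_{V,\pa(W)}) - I(U; Y_V)$ and one bounds the two mutual informations by $\etaKL(P_{V,\pa(W)\mid X}) I(U;X)$ and $\etaKL(P_{V\mid X}) I(U;X)$ respectively; assembling the coefficients gives exactly \eqref{eq:es}. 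The bookkeeping with conditional mutual informations and making sure the conditioning on $Y_V$ is legitimate (i.e.\ that $Y_V$ can be folded into the ``$U$'' role of Theorem~\ref{thm:etaKLI} without $P_{Y_W\mid Y_{\pa(W)}}$ changing) is the part requiring care.

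\emph{Step 2: from \eqref{eq:es} to the percolation bound \eqref{eq:es_perc}.} I induct on $|V|$ together with the topological order. The base case is $V = \{X\}$ (or $V = \emptyset$), where $\etaKL(P_{X\mid X}) = 1 = \perc(\{X\})$. For the inductive step, let $W$ be the topologically-largest node in $V$ and write $V = V' \cup \{W\}$ with $W > V'$. Apply \eqref{eq:es}:
\[
\etaKL(P_{V\mid X}) \le \eta_W\, \etaKL(P_{V',\pa(W)\mid X}) + (1-\eta_W)\,\etaKL(P_{V'\mid X}) \le \eta_W\,\perc(V'\cup\pa(W)) + (1-\eta_W)\,\perc(V'),
\]
using the inductive hypothesis on the two smaller-or-equal sets $V' \cup \pa(W)$ and $V'$ (both avoid $W$, hence are handled at an earlier stage of the induction). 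Now I match the right-hand side to $\perc(V)$ by conditioning on whether node $W$ is removed: with probability $\eta_W$ node $W$ survives, and then there is an $X$-to-$V$ path iff there is an $X$-to-$(V'\cup\pa(W))$ path (since a path reaching $W$ must pass through $\pa(W)$, and a surviving $W$ is itself in $V$); with probability $1-\eta_W$ node $W$ is removed, and then an $X$-to-$V$ path is exactly an $X$-to-$V'$ path. These two events being conditioned on the status of $W$ alone, and all removals independent, give $\perc(V) = \eta_W\,\perc(V'\cup\pa(W)) + (1-\eta_W)\,\perc(V')$, completing the induction. (One should double-check the corner case $X \in V$, where $\perc(V)=1$ and the bound is trivial.)

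\emph{Step 3: strict contraction.} If $\eta_v < 1$ for every $v \in \calV$, then every node is removed with positive probability independently, so along any fixed finite path from $X$ to $V$ there is a positive probability that some node on it is removed; since $G$ is finite there are finitely many such paths, and $\perc(V) < 1$ follows (formally, $\perc(V) \le 1 - \prod_{v}(1-\eta_v)^{?}$ — it suffices that the event ``every node is removed'' has positive probability, or more simply that $\perc(V)$ is a polynomial in the $\eta_v$'s that is strictly below $1$ whenever all $\eta_v<1$, which is clear since $\perc(V)\le \eta_{v_0}$ for any cut vertex $v_0$, and in the worst case $\perc(V) \le 1-\prod_{v\in\calV}(1-\eta_v)$ by a union bound over a spanning set of paths). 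Hence $\etaKL(P_{V\mid X}) \le \perc(V) < 1$.

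\emph{Main obstacle.} The delicate point is Step~1: correctly invoking the fixed-input SDPI \eqref{eq:etaKL} with the vector $(U,Y_V)$ playing the role of the ``$U$''-variable and $Y_{\pa(W)}$ the ``$X$''-variable, and verifying that conditioning on $Y_V$ does not disturb the local channel $P_{Y_W\mid Y_{\pa(W)}}$ (it does not, by the graphical-model factorization and $W>V$). Everything else — the chain-rule split, the coefficient bookkeeping in \eqref{eq:es}, and the two inductions in Steps~2--3 — is routine once the recursion is in hand.
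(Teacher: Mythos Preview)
Your proposal is correct and matches the paper's approach: condition on $Y_V$, apply the SDPI for the kernel $P_{Y_W\mid Y_{\pa(W)}}$ pointwise to obtain $I(U;Y_W\mid Y_V)\le\eta_W\,I(U;Y_{\pa(W)}\mid Y_V)$, combine with the chain rule to get \eqref{eq:es}, then verify that $\perc(\cdot)$ satisfies the identical recursion and induct. Two small clean-ups: in Step~1 the right move is \emph{conditioning} on $Y_V$ (which indeed leaves $P_{Y_W\mid Y_{\pa(W)}}$ unchanged since $Y_W\perp Y_V\mid Y_{\pa(W)}$), not putting $(U,Y_V)$ in the $U$-slot of \eqref{eq:etakl_mi}, which would give $I(U,Y_V;Y_W)\le\eta_W I(U,Y_V;Y_{\pa(W)})$ and not the conditional bound you need; and in Step~2 the induction must be on the topological maximum of $V$, not on $|V|$, since $V'\cup\pa(W)$ can have more elements than $V$---your remark ``both avoid $W$'' shows you already have the correct well-ordering in mind.
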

\begin{proof} Consider an arbitrary random variable $U$ such that 
	$$ U\to X \to (V, W)\,. $$
	Let $A=\pa(W) \setminus V$. Without loss of generality we may assume $A$ does not contain $X$: indeed, if $A$
	includes $X$ then we can introduce an artificial node $X'$ such that $X'=X$ and include $X'$ into $A$ instead of $X$.  
	Relevant conditional independencies are encoded in the following graph:
	\[
	\begin{tikzcd}
	U \arrow{r} & X \arrow{rd} \arrow{r} 	& V \arrow{d} \arrow{rd} & & \\
								& 					& A \arrow{r}					& W &
	\end{tikzcd}
	\]
	
	From the
	characterization~\eqref{eq:etakl_mi} it is sufficient to show
	\begin{equation}\label{eq:mr1}
			I(U; V, W) \le (1-\eta_W) I(U; V) + \eta_W I(U; V, A)\,. 
	\end{equation}	
	Denote $B = V \backslash \pa(W)$ and $C=V \cap \pa(W)$. Then $\pa(W)=(A,C)$ and $V = (B,C)$. 
	To verify~\eqref{eq:mr1} notice that by assumption we have 
	$$U\to X\to (V,A)\to W \,.$$ 
	Therefore conditioned on $V$ we have the Markov chain 
		$$	U\to X \to A \to W \qquad |V \,$$ 
	and the channel $A\to W$ is a restriction of the original $P_{W|\pa(W)}$ to a subset of the inputs. 
	\apxonly{(To verify, note that $X \to(Y_1,Y_2)\to Z$ implies that $X\to Y_1 \to Z$ conditioned on $Y_2=y_2$ and
	of course $X\to Y_1 \to Z$ do not hold unconditionally in general.)} 
	Indeed, $P_{W|A,V}=P_{W|\pa(W),B}=P_{W|\pa(W)}$ by the assumption of the graphical model. Thus, for every
	realization $v=(b,c)$ of $V$, we have  $P_{W|A=a, V=v} = P_{W|A=a, C=c}$ and therefore
\begin{equation}
I(U; W | V=v) \le \eta(P_{W|A,C=c}) I(U; A | V=v) \leq \eta(P_{W|A,C}) I(U; A | V=v),	
	\label{eq:es-contractv}
\end{equation}
where the last inequality uses the following property of the contraction coefficient which easily follows from either \prettyref{eq:eta_f} or \prettyref{eq:etakl_mi}:
\begin{equation}
\sup_c \eta(P_{W|A,C=c}) \leq \eta(P_{W|A,C}).
	\label{eq:eta-freeze}
\end{equation}
Averaging both sides of \prettyref{eq:es-contractv} over $v\sim P_V$ and using the definition $\eta_W=\eta(P_{W|\pa(W)})=\eta(P_{W|A,C})$, we have
	\begin{equation}
I(U; W | V) \le \eta_W I(U; A | V)\,.
	\label{eq:es-contract}
\end{equation} 
	Adding $I(U; V)$ to both sides yields~\eqref{eq:mr1}.

	We now move to proving the percolation bound~\eqref{eq:es_perc}. First, notice that if a vertex $W$ satisfies
	$W>V$, then letting $\{\exists \, \pi: X \to V\}$ be the event that there exists a directed path from $X$ to (any
	element of) the set $V$ under the site percolation model, we notice that $\{W\text{ removed}\}$ is independent from
	$\{\exists \, \pi: X \to V\}$ and $\{\exists \, \pi: X \to V\cup \pa(W) \}$. Thus we have
	\begin{align*} \perc(V \cup \{W\}) &\eqdef \PP[\exists\, \pi: X \to V \cup \{W\}] \\
				&= \PP[\exists\, \pi: X \to V \cup \{W\}, W\text{ removed}] +  \PP[\exists\, \pi: X \to
				V \cup \{W\}, W\text{ kept}]\\
				&= \PP[\exists\, \pi: X \to V, W\text{ removed}] +  \PP[\exists\, \pi: X \to
				V \cup \pa(W), W\text{ kept}]\\
				&= \PP[\exists\, \pi: X \to V](1-\eta_W) +  \eta_W \PP[\exists\, \pi: X \to
				V \cup \pa(W)]\\
				&= (1-\eta_W) \perc(V) + \eta_W \perc(V \cup \pa(W))\,.
	\end{align*}
	That is, the set-function $\perc(\cdot)$ satisfies the recursion given by the right-hand side of~\eqref{eq:es}.
	Now notice that~\eqref{eq:es_perc} holds trivially for $V=\{X\}$, since both sides are equal to 1. Then, by
	induction on the maximal element of $V$ and applying~\eqref{eq:es} we get that~\eqref{eq:es_perc} holds for all
	$V$.
\end{proof}

Theorem~\ref{th:es} allows us to estimate contraction coefficients in arbitrary (finite) networks by peeling off last nodes
one by one. Next we derive a few corollaries:

\begin{corollary} Consider a fixed (single-letter) channel $P_{Y|X}$ and assume that it is used repeatedly and with perfect feedback 
	to send information from $W$ to $(Y_1,\ldots,Y_n)$. 
	That is, we have for some \textit{encoder} functions $f_j$
	$$ P_{Y^n|W}(y^n|w) = \prod_{j=1}^n P_{Y|X}(y_j|f_j(w, y^{j-1})), $$
	which corresponds to the graphical model:
	$$ \xymatrix{W\ar[r] \ar@/_1pc/[rr] \ar@/_2pc/[rrr] & Y_1 \ar[r] \ar@/^1pc/[rr] & Y_2 \ar[r] &Y_3  \cdots } $$
	Then
	$$ \etaKL(P_{Y^n|W}) \le 1-(1-\etaKL(P_{Y|X}))^n < n \cdot \etaKL(P_{Y|X}) $$
	\label{cor:etaKL-fdbk}
\end{corollary}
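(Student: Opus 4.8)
The plan is to read the feedback code as a Bayesian network over the source $M$ (the ``$W$'' of the corollary; I rename it to avoid the clash with the symbol $W$ in \prettyref{th:es}) together with the vertices $Y_1,\dots,Y_n$, and then to invoke the percolation bound \prettyref{eq:es_perc}. In this network $M$ is the unique source, the vertices are topologically sorted, $\pa(Y_j)\subseteq\{M,Y_1,\dots,Y_{j-1}\}$ with $M\in\pa(Y_j)$ for every $j$ (exactly the picture drawn in the statement), and the node kernel $P_{Y_j|Y_{\pa(Y_j)}}$ is the channel $P_{Y|X}$ precomposed with the deterministic encoder map $y_{\pa(Y_j)}\mapsto x_j=f_j(\cdot)$. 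Throughout set $\eta\eqdef\etaKL(P_{Y|X})$.

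First I would check that $\eta_{Y_j}\le\eta$ for every $j$. If $\mu,\nu$ are distributions on the parent alphabet of $Y_j$ and $\bar\mu,\bar\nu$ denote their pushforwards under the encoder map, then
\[
D\pth{P_{Y_j|Y_{\pa(Y_j)}}\circ\mu \,\|\, P_{Y_j|Y_{\pa(Y_j)}}\circ\nu}=D\pth{P_{Y|X}\circ\bar\mu \,\|\, P_{Y|X}\circ\bar\nu}\le\eta\,D(\bar\mu\,\|\,\bar\nu)\le\eta\,D(\mu\,\|\,\nu),
\]
the first inequality being the definition of $\eta=\etaKL(P_{Y|X})$ (cf.~\prettyref{eq:eta_fq}) and the second the ordinary data-processing inequality for KL divergence. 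Hence $\eta_{Y_j}\le\eta$; this is just the fact that precomposing a channel with a deterministic input map cannot increase $\etaKL$, and it can alternatively be obtained from \prettyref{eq:es} applied to a two-vertex chain, or from \prettyref{eq:eta-freeze}. In particular the base case $\etaKL(P_{Y_1|M})\le\eta$ is the instance $j=1$.

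Next I would evaluate the percolation probability. In the site-percolation model of \prettyref{th:es}, $M$ is never removed and each $Y_j$ is kept independently with probability $\eta_{Y_j}$. Since $M$ has a direct edge to every $Y_j$, a directed open path from $M$ into $V\eqdef\{Y_1,\dots,Y_n\}$ exists if and only if at least one $Y_j$ is kept: any such path must terminate at some kept $Y_k\in V$, and conversely the single edge $M\to Y_k$ is already a path once $Y_k$ is kept. Therefore, by independence and $\eta_{Y_j}\le\eta$,
\[
\perc(V)=1-\prod_{j=1}^n(1-\eta_{Y_j})\le 1-(1-\eta)^n,
\]
and \prettyref{eq:es_perc} yields $\etaKL(P_{Y^n|M})\le\perc(V)\le 1-(1-\eta)^n$. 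The final inequality $1-(1-\eta)^n<n\eta$ is Bernoulli's inequality (equivalently, the union bound for $n$ independent $\Bern(\eta)$ events), strict whenever $n\ge 2$ and $\eta>0$.

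I do not expect any real difficulty here beyond bookkeeping: the one point to keep straight is that the message $M$ simultaneously serves as the never-removed source $X$ of \prettyref{th:es} and as a parent of every output vertex, which is precisely what makes the percolation event collapse to ``some $Y_j$ survives.'' If one prefers to avoid the percolation picture, the same bound follows by induction on $n$ directly from \prettyref{eq:es}: peeling off $Y_n$ gives $\etaKL(P_{Y^n|M})\le\eta_{Y_n}\,\etaKL\pth{P_{Y^{n-1},\pa(Y_n)|M}}+(1-\eta_{Y_n})\,\etaKL(P_{Y^{n-1}|M})\le\eta+(1-\eta)\,\etaKL(P_{Y^{n-1}|M})$, using $\etaKL\le 1$ on the first term together with the bound from the previous step; writing $a_n\eqdef1-(1-\eta)^n$ one checks $a_1=\eta$ and $a_n=\eta+(1-\eta)a_{n-1}$, so this recursion propagates $\etaKL(P_{Y^m|M})\le a_m$ from $m=1$ to $m=n$, which is the claim.
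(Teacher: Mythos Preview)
Your proof is correct and essentially the same as the paper's, which consists of the single line ``Apply Theorem~\ref{th:es} $n$ times'' --- that is, the recursive formulation \prettyref{eq:es}, which is exactly your alternative inductive argument at the end. Your primary route via the percolation bound \prettyref{eq:es_perc} is an equivalent one-shot application of the same theorem, and you correctly fill in the detail the paper leaves implicit: that each node kernel $P_{Y_j\mid\pa(Y_j)}$ is $P_{Y|X}$ precomposed with a deterministic encoder, so $\eta_{Y_j}\le\eta$.
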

\begin{proof}
Apply Theorem~\ref{th:es} $n$ times.	
\end{proof}

\def\scfto{\stackrel{sf}{\to}}
Let us call a path $\pi = (X,\cdots, v)$ with $v\in V$ to be \textit{shortcut-free from $X$ to $V$}, denoted
$X\scfto V$, if there does not
exist another path $\pi'$ from $X$ to any node in $V$ such that $\pi'$ is a subset of $\pi$. (In particular $v$
necessarily is the first node in $V$ that $\pi$ visits.) Also for every path
$\pi=(X, v_1, \ldots, v_m)$ we define
$$ \eta^\pi \eqdef \prod_{j=1}^m \eta_{v_j}\,. $$

\begin{corollary} For any subset $V$ we have
\begin{equation}
\etaKL(P_{V|X}) \le \sum_{\pi: X\scfto V} \eta^\pi\,. 	
	\label{eq:etaKL-es1}
\end{equation}
	In particular, we have the estimate of Evans-Schulman~\cite{evans1999signal}:
	\begin{equation}
	\etaKL(P_{V|X}) \le \sum_{\pi: X\to V} \eta^\pi\,. 
	\label{eq:etaKL-es2}
\end{equation}
	\label{cor:etaKL-es}
\end{corollary}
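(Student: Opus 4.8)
The plan is to read off both bounds from the percolation estimate \eqref{eq:es_perc} of Theorem~\ref{th:es}; no fresh contraction argument is needed, only a union bound over directed paths.

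First I would invoke \eqref{eq:es_perc}, which reduces matters to showing $\perc(V)\le \sum_{\pi:\,X\scfto V}\eta^\pi$ for the site-percolation model in which every node $v$ is kept independently with probability $\eta_v$ and the source $X$ is always kept. Since $G$ is a finite DAG there are only finitely many directed paths, all of them simple, and for $\pi=(X,v_1,\ldots,v_m)$ the event $\{\pi\text{ open}\}$ that $v_1,\ldots,v_m$ are all kept has probability $\prod_{j=1}^m\eta_{v_j}=\eta^\pi$ by independence.

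The one thing to observe is that the event $\{\exists\,\pi:X\to V\text{ open}\}$ equals $\{\exists\,\pi:X\scfto V\text{ open}\}$. Indeed, if an open path $\pi$ from $X$ to $V$ is not shortcut-free then, by definition, there is a different path $\pi'$ from $X$ to $V$ whose vertex set is contained in that of $\pi$; since all vertices of $\pi$ are kept, so are all vertices of $\pi'$, hence $\pi'$ is open, and (in a DAG a path is determined by its vertex set, so a \emph{different} path forces a \emph{strict} inclusion) $\pi'$ has strictly fewer vertices; iterating and using finiteness of $G$ we arrive at an open shortcut-free path to $V$. A union bound then gives
$$\perc(V)=\PP\Big[\bigcup_{\pi:\,X\scfto V}\{\pi\text{ open}\}\Big]\le\sum_{\pi:\,X\scfto V}\PP[\pi\text{ open}]=\sum_{\pi:\,X\scfto V}\eta^\pi,$$
and combining with \eqref{eq:es_perc} yields \eqref{eq:etaKL-es1}. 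The Evans--Schulman bound \eqref{eq:etaKL-es2} is then immediate, since every shortcut-free path is in particular a path and the extra terms $\eta^\pi$ are nonnegative (equivalently, rerun the same union bound directly over all paths from $X$ to $V$, skipping the shortcut-free reduction).

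There is no real obstacle here; the only subtlety is checking that the reduction to a shortcut-free path terminates, which is exactly where finiteness of the DAG enters. If one preferred to avoid the percolation picture, one could instead prove \eqref{eq:etaKL-es1} directly by induction on the largest vertex of $V$ using the recursion \eqref{eq:es}, after verifying that the set function $V\mapsto\sum_{\pi:\,X\scfto V}\eta^\pi$ obeys (with $\le$) that same recursion — but routing through $\perc(\cdot)$ is shorter and reuses the work already done in the proof of Theorem~\ref{th:es}.
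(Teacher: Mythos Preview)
Your proof is correct and matches what the paper itself identifies as the main idea: the paper opens its proof by saying that both bounds are ``simple consequence of union-bounding the right-hand side of~\eqref{eq:es_perc},'' which is exactly the argument you carry out (including the reduction to shortcut-free paths). The only difference is emphasis: the paper then spells out, ``for completeness,'' the alternative inductive proof via the recursion~\eqref{eq:es} that you merely sketch in your final paragraph, so in effect you and the paper have swapped which route is the main line and which is the aside.
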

\begin{proof} Both results are simple consequence of union-bounding the right-hand side of~\eqref{eq:es_perc}. But for
completeness, we give an explicit proof.  First, notice the following two self-evident observations: \begin{enumerate} 
		\item If $A$ and $B$ are disjoint sets of nodes, then
		\begin{equation}\label{eq:mr2}
				\sum_{\pi:X\scfto A\cup B} \eta^\pi = 
			\sum_{\pi:X\scfto A, \text{ avoid~} B } \eta^\pi  + 
			\sum_{\pi:X\scfto B, \text{ avoid~} A} \eta^\pi .
		\end{equation}			
		\item Let $\pi:X\to V$ and $\pi_1$ be $\pi$ without the last node, then
		\begin{align}\label{eq:mr3}
				\pi: X\scfto V \quad \iff\quad \pi_1: X\scfto\{\pa(V)\setminus V\} .
		\end{align}			
	\end{enumerate}
	
	Now represent $V=(V', W)$ with $W>V'$, denote $P=\pa(W)\setminus V$ and assume (by induction) that 
	\begin{align} \etaKL(P_{V'|X}) &\le \sum_{\pi: X\scfto V} \eta^\pi \label{eq:mr4a}\\
		\etaKL(P_{V',P|X}) & \le \sum_{\pi: X\scfto \{V', P\}} \eta^\pi\,.\label{eq:mr4b}
	\end{align}
	By~\eqref{eq:mr2} and~\eqref{eq:mr3} we have 
	\begin{align} \sum_{\pi: X\scfto V}\eta^\pi &= \sum_{\pi: X\scfto V'}\eta^\pi + 
		\sum_{\pi: X\scfto W, \text{ avoid~} V'} \eta^\pi \\
			&= \sum_{\pi: X\scfto V'}\eta^\pi + \eta_W \sum_{\pi: X\scfto P, \text{ avoid~} V'}
			\eta^\pi\label{eq:mr6}
	\end{align}			
	Then by Theorem~\ref{th:es} and induction hypotheses~\eqref{eq:mr4a}-\eqref{eq:mr4b} we get
	\begin{align} \etaKL(P_{V|X}) &\le \eta_W \sum_{\pi: X\scfto \{V', P\}} \eta^\pi + (1-\eta_W)  
		\sum_{\pi: X\scfto V'} \eta^\pi \\
			&= \eta_W \left(\sum_{\pi: X\scfto P, \text{ avoid~} V'} \eta^\pi
		- \sum_{\pi: X\scfto V',\text{ pass~} P} \eta^\pi\right) + \sum_{\pi: X\scfto
		V'}\eta^\pi	\label{eq:mr5}\\
			&\le \eta_W \sum_{\pi: X\scfto P,\text{ avoid~} V'} \eta^\pi + \sum_{\pi: X\scfto
		V'}\eta^\pi \label{eq:mr7}
		\end{align}
		where in~\eqref{eq:mr5} we applied~\eqref{eq:mr2} and split the summation over $\pi:X\scfto V'$ into
		paths that avoid and pass nodes in $P$. Comparing~\eqref{eq:mr6}
		and~\eqref{eq:mr7} the conclusion follows.
\end{proof}

Both estimates \prettyref{eq:etaKL-es1} and \prettyref{eq:etaKL-es2} are compared to that of Theorem~\ref{th:es} in \prettyref{tab:comp} in various graphical models.

\def\vertlabel#1{\vbox to 40pt{\vfil\hbox{#1}\vfil}}
\begin{table}
	\begin{tabular}{c|c|c|c|c}
		Name & Graph & Theorem~\ref{th:es} & \makecell{Estimate \prettyref{eq:etaKL-es1} via \\ shortcut-free paths}  & \makecell{Original Evans-Schulman \\		estimate \prettyref{eq:etaKL-es2}} \\[5pt]
		\hline
		\vertlabel{Markov chain 1} & 
			\vertlabel{$ X\to Y_1 \to B \to Y_2 $} &
			\vertlabel{$\eta$} & \vertlabel{$\eta$} & \vertlabel{$\eta + \eta^3$} \\[5pt]
		\hline
		\vertlabel{Markov chain 2} & 
		\vertlabel{$ \xymatrix{& A\ar[d] \\ X\ar[ur] \ar[r] & B \ar[r] & Y} $ }&
		\vertlabel{$\eta^2$} & \vertlabel{$\eta^2$} & \vertlabel{$\eta^2 + \eta^3$} \\[7pt]
		\hline
		\vertlabel{Parallel channels} & \vertlabel{$ \xymatrix{
					& Y_1 \\
					X \ar[ru]\ar[r] & Y_2 
					}$} 
					& \vertlabel{$2\eta -\eta^2$} & \vertlabel{$2\eta$} & \vertlabel{$2\eta$}
				\\[7pt]
		\hline
		\vertlabel{\parbox{80pt}{Parallel channels with feedback}}  & 
				\vertlabel{$ \xymatrix{
					& Y_1 \ar[d] \\
					X \ar[ru]\ar[r] & Y_2 
					}$} 
					& 
			\vertlabel{$2\eta - \eta^2$} & \vertlabel{$2\eta$} & \vertlabel{$3\eta$}\\[7pt]					
	\end{tabular}
	\caption{Comparing bounds on the contraction coefficient $\etaKL(P_{Y|X})$. For simplicity, we assume that the $\etaKL$ coefficients of all constituent kernels are bounded from above by $\eta$.}\label{tab:comp}
\end{table}


\paragraph{Evaluation for the BSC}
We consider the contraction coefficient for the $n$-letter binary symmetric channel $\BSC(\delta)^n$ defined in \prettyref{eq:bsc_def}. 
By \prettyref{eq:etaBSC}, for $n=1$ we have $\etaKL=(1-2\delta)^2$. Then by Corollary~\ref{cor:etaKL-fdbk} we have
for arbitrary $n$:
\begin{equation}\label{eq:bsc1}
	\etaKL \le  1-(4\delta (1-\delta))^n\,.
\end{equation}

A simple lower bound for $\etaKL$ can be obtained by considering~\eqref{eq:etakl_mi} and taking $U\sim \Bern(1/2)$ and
$U\to X$ being an $n$-letter repetition code, namely, $X=(U,\ldots,U)$. Let\footnote{For elements of $\FF_2^n$, $|\cdot|$ is the Hamming
weight.}
$\epsilon = \PP[|Z| \ge n/2]$ be the probability of error for the maximal likelihood decoding of $U$ based on $Y$, which satisfies the Chernoff bound $\epsilon \leq (4\delta(1-\delta))^{n/2}$.
We have from Jensen's inequality
$$ I(U; Y) = H(U) - H(U|Y) \ge 1-h(\epsilon) = 1 - (4\delta(1-\delta))^{{n\over 2} + O(\log n)}\,,$$
where we used the fact that the binary entropy $h(x)=-x\log x - (1-x) \log (1-x) = -x \log x + O(x^2)$ as $x\to0$.
Consequently, we get
\begin{equation}\label{eq:bsc2}
	\etaKL \ge  1-(4\delta (1-\delta))^{{n\over 2} + O(\log n)}\,.
\end{equation}
Comparing~\eqref{eq:bsc1} and~\eqref{eq:bsc2} we see that $\etaKL \to 1$ exponentially fast. To get the exact 
exponent we need to replace~\eqref{eq:bsc1} by the following improvement:
$$ \etaKL \le \etaTV \le 1-(4\delta (1-\delta))^{{n\over 2} + O(\log n)}\,,$$
where the first inequality is from~\eqref{eq:eta_ub} and the second is from~\eqref{eq:tvbsc3} below. Thus, all in all we
have for $\BSC(\delta)^n$ as $n\to\infty$
\begin{equation}\label{eq:etakl_tight}
	\etaKL,\etaTV=1-(4\delta (1-\delta))^{{n\over 2} + O(\log n)}\,. 
\end{equation}

\section{Dobrushin's coefficients in networks}\label{sec:tv}

The proof of \prettyref{th:es} relies on the characterization \prettyref{eq:etakl_mi} of $\etaKL$ via mutual
information, which satisfies the chain rule. Neither of these two properties is enjoyed by the total variation.
Nevertheless, the following is an exact counterpart of \prettyref{th:es} for total variation. 
\begin{theorem}
Under the same assumption of \prettyref{th:es}, 
\begin{equation}
	\etaTV(P_{V,W|X}) \le (1-\eta_W) \etaTV(P_{V|X}) + \eta_W \etaTV(P_{\pa(W),V|X})\,, 
	\label{eq:es-tvv}
\end{equation}
where $\eta_W = \etaTV(P_{W|\pa(W)})$. 
	Furthermore, let $\perc(V)$ denote the probability that there is a path from $X$ to $V$ in the graph if each
	node $v$ is removed independently with probability $1-\eta_v$ (site percolation). Then, we have for every $V\subset \calV$
	\begin{equation}\label{eq:estv_perc}
		\etaTV(P_{V|X}) \le \perc(V)\,.
	\end{equation}	
	In particular, if $\eta_v<1$ for all $v\in V$, then $\etaTV(P_{V|X}) < 1$.
	\label{thm:es-tv}
\end{theorem}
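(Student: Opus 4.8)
The plan is to follow the Evans--Schulman argument behind \prettyref{th:es}, replacing the information-theoretic steps (which rested on the chain rule for mutual information) by an explicit coupling. By Dobrushin's formula \prettyref{eq:etaTV}, each of the three contraction coefficients in \prettyref{eq:es-tvv} is a supremum over a pair of source symbols $x,x'$ of a total variation between two conditional laws; since the supremum of a sum is at most the sum of the suprema, it is enough to fix $x,x'$ and prove the pointwise bound
\[
\TV(P_{V,W|X=x},P_{V,W|X=x'}) \le (1-\eta_W)\,\TV(P_{V|X=x},P_{V|X=x'}) + \eta_W\,\TV(P_{\pa(W),V|X=x},P_{\pa(W),V|X=x'}).
\]
Exactly as in the proof of \prettyref{th:es} we may assume $X\notin\pa(W)$. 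Put $A=\pa(W)\setminus V$ and $C=\pa(W)\cap V$, so $C\subseteq V$ and $\pa(W)=(A,C)$; the graphical model gives that, under $P_{\cdot|X=x}$, the conditional law of $W$ given $(V,A)$ depends only on $(A,C)$ and equals $P_{W|\pa(W)}(\cdot\,|\,A,C)$. Write $\mu=P_{V,A|X=x}$, $\nu=P_{V,A|X=x'}$, $t_V=\TV(\mu_V,\nu_V)$, $t=\TV(\mu,\nu)$; then $t_V\le t$, and $t_V=\TV(P_{V|X=x},P_{V|X=x'})$, $t=\TV(P_{\pa(W),V|X=x},P_{\pa(W),V|X=x'})$ because $(V,A)$ and $(V,\pa(W))$ carry the same information.

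The core step is a \emph{nested maximal coupling}: I construct a coupling $\bigl((V,A),(V',A')\bigr)$ of $\mu$ and $\nu$ that is simultaneously optimal for the full pair and for the $V$-coordinate,
\[
\PP\bigl[(V,A)=(V',A')\bigr]=1-t \qquad\text{and}\qquad \PP[\,V=V'\,]=1-t_V .
\]
To build it, match a perfect copy drawn from the normalized overlap $\mu\wedge\nu$ with probability $1-t$; on the complementary event of probability $t$, couple the residuals $(\mu-\nu)^+$ and $(\nu-\mu)^+$, which have disjoint supports (so the full pair never agrees there), using a coupling that is maximal for their $V$-marginals. A short computation shows those residual $V$-marginals are at total variation distance exactly $t_V/t$, so $V=V'$ on this event with probability $1-t_V/t$; summing the two events gives the displayed identities.

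Next I extend the coupling to $(W,W')$. On $\{(V,A)=(V',A')\}$, use common randomness so that $W=W'$. On $\{V=V',\ A\ne A'\}$ the $C$-coordinates agree; writing $c$ for their common value, couple $W\sim P_{W|A,C=c}$ and $W'\sim P_{W|A',C=c}$ by a maximal coupling, so that, conditionally on the realized values, $\PP[W\ne W']=\TV(P_{W|A,C=c},P_{W|A',C=c})\le\etaTV(P_{W|\pa(W)})=\eta_W$ by \prettyref{eq:etaTV} (as $(A,C=c)$ and $(A',C=c)$ are two input values of $P_{W|\pa(W)}$). On $\{V\ne V'\}$ couple $W,W'$ arbitrarily. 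This yields a coupling of $P_{V,W|X=x}$ and $P_{V,W|X=x'}$ in which $(V,W)\ne(V',W')$ forces either $V\ne V'$ (probability $t_V$) or else $V=V'$, $A\ne A'$ and $W\ne W'$ (probability $\le\eta_W\,\PP[V=V',A\ne A']=\eta_W(t-t_V)$), whence
\[
\TV(P_{V,W|X=x},P_{V,W|X=x'})\le t_V+\eta_W(t-t_V)=(1-\eta_W)t_V+\eta_W t ;
\]
taking the supremum over $x,x'$ proves \prettyref{eq:es-tvv}. The percolation bound \prettyref{eq:estv_perc} then follows verbatim as in \prettyref{th:es}: the set-function $\perc(\cdot)$ obeys the recursion given by the right side of \prettyref{eq:es-tvv} (the event that $W$ is removed is independent of the connectivity events for $V$ and for $V\cup\pa(W)$), \prettyref{eq:estv_perc} is trivial for $V=\{X\}$, and one inducts on the largest element of $V$; finally $\perc(V)<1$ whenever all $\eta_v<1$, since with positive probability every vertex other than $X$ is removed, so $\etaTV(P_{V|X})<1$. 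The only genuinely new ingredient is the nested coupling; the one point needing care is that its two optimality requirements are compatible --- precisely because the residual $V$-marginals lie at distance exactly $t_V/t$ --- while the rest parallels \prettyref{th:es} step for step.
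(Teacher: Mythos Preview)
Your proof is correct and follows the same overall strategy as the paper: reduce to a pointwise bound for fixed $x,x'$ via Dobrushin's characterization, build a coupling of the $(V,\pa(W))$-laws that is simultaneously maximal for the joint and for the $V$-marginal, then extend to $W$ by a maximal coupling of $P_{W|\pa(W)}$ for the two parent values. The percolation step is handled identically.

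The one real difference is how the simultaneously maximal coupling is obtained. The paper isolates this as a separate lemma (\prettyref{lmm:tv}) and proves it non-constructively via Kantorovich duality for the cost $c=\indc{(x,y)\neq(x',y')}+\indc{x\neq x'}$, showing the dual value equals $\TV(P_{XY},Q_{XY})+\TV(P_X,Q_X)$ so that any optimizer of the sum must be optimal for both terms. You instead give an explicit construction: match on the overlap $\mu\wedge\nu$, then on the residual mass couple the normalized residuals $(\mu-\nu)^+/t$ and $(\nu-\mu)^+/t$ so as to be maximal for their $V$-marginals, which (as you correctly compute) sit at distance exactly $t_V/t$. This is precisely the constructive route of Goldstein~\cite{SG79}, which the paper in fact cites as an alternative to its duality proof. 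Your approach is more elementary and self-contained; the paper's duality argument, on the other hand, makes clearer why a \emph{triply} optimal coupling can fail (cf.\ \prettyref{rmk:triple}), since the corresponding cost need not decompose. Either way the downstream arithmetic $\PP[(V,W)\neq(V',W')]\le t_V+\eta_W(t-t_V)$ is the same.
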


\begin{proof}
Fix $x, \tx$ and denote by $P$ (resp. $Q$) the distribution conditioned on  $X=x$ (resp. $x'$). Denote  $Z=\pa(W)$.
The goal is to show 
\begin{equation}
	\TV(P_{VW},Q_{VW}) \leq (1-\eta_W) \TV(P_{V},Q_V) + \eta_W \TV(P_{ZV},Q_{ZV})\,.
	\label{eq:goaltv}
\end{equation}
which, by the arbitrariness of $x,x'$ and in view of the characterization of $\eta$ in \prettyref{eq:etaTV}, yields the desired \prettyref{eq:es-tvv}. 
By \prettyref{lmm:tv} in \prettyref{app:tv}, there exists a coupling of $P_{ZV}$ and $Q_{ZV}$, denoted by
$\pi_{ZVZ'V'}$, such that 
\begin{align*} \pi[(Z,V) \neq (Z',V')] &= \TV(P_{ZV},Q_{ZV})\,,\\
	\pi[V \neq V'] &= \TV(P_{V},Q_{V})
\end{align*} 
simultaneously (that is, this coupling is jointly optimal for the total variation of the joint distributions and one pair of
marginals). 

Conditioned on $Z=z$ and $Z'=z'$ and independently of $VV'$, let $WW'$ be distributed according to a maximal coupling of the conditional laws $P_{W|Z=z}$ and $P_{W|Z=z'}$ (recall that $Q_{W|Z} = P_{W|Z} = P_{W|\pa(W)}$ by definition).
 This defines a joint distribution $\pi_{ZVWZ'V'W'}$, under which we have the Markov chain $VV' \to ZZ' \to WW'$.
Then
\begin{equation*}
\pi[W\neq W' | ZVZ'V'] = \pi[W\neq W' | ZZ'] = \TV(P_{W|\pa(W)=Z}, P_{W|\pa(W)=Z'}) \leq \eta_W \indc{Z \neq Z'}.	
	\label{eq:ZZ}
\end{equation*}
Therefore we have
\begin{align*}
\pi[W \neq W'| V=V'] 
= & ~ \Expect[\pi[W \neq W'| ZZ'] | V=V' ]	\nonumber \\
\leq & ~ \eta_W \pi[Z \neq Z' | V=V' ] .
\end{align*}
Multiplying both sides by $\pi[V=V']$ and then adding $\pi[V\neq V']$, we obtain
\begin{align}
\pi[(W,V) \neq (W',V')] 
\leq & ~ 	(1-\eta_W) \pi[V\neq V'] + \eta_W \pi[(Z, V)\neq (Z',V')]\nonumber \\
= & ~ 	(1-\eta_W) \TV(P_V, Q_V) + \eta_W \TV(P_{Z V}, Q_{ZV}) \nonumber, 
\end{align}
where the LHS is lower bounded by $\TV(P_{WV}, Q_{WV})$ and the equality is due to the choice of $\pi$. This yields the
desired \prettyref{eq:goaltv}, completing the proof of~\eqref{eq:es-tvv}. The rest of the proof is done as in
Theorem~\ref{th:es}.
\end{proof}

As a consequence of \prettyref{thm:es-tv}, both \prettyref{cor:etaKL-fdbk} and \ref{cor:etaKL-es} extend to total
variation verbatim with $\etaKL$ replaced by $\etaTV$:
\begin{corollary}\label{cor:tv_par} In the setting of \prettyref{cor:etaKL-fdbk} we have
	\begin{equation}
		\etaTV(P_{Y^n|W}) \le 1-(1-\etaTV(P_{Y|X}))^n < n \cdot \etaKL(P_{Y|X})\,. 
\end{equation}	
\end{corollary}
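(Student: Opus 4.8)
The plan is to rerun the proof of \prettyref{cor:etaKL-fdbk} with \prettyref{th:es} replaced by its total-variation analogue \prettyref{thm:es-tv}. First I would cast the feedback scheme as a Bayesian network with source node $W$, vertices $Y_1 < Y_2 < \cdots < Y_n$ (topologically sorted) and parent sets $\pa(Y_j) = \{W, Y_1, \ldots, Y_{j-1}\}$, where the kernel $P_{Y_j|\pa(Y_j)}$ first forms $X_j = f_j(W, Y^{j-1})$ deterministically and then applies $P_{Y|X}$. Since precomposing a channel with a deterministic map can only shrink the collection of input pairs appearing in the Dobrushin representation \prettyref{eq:etaTV}, every constituent node satisfies $\eta_{Y_j} \eqdef \etaTV(P_{Y_j|\pa(Y_j)}) \le \etaTV(P_{Y|X})$; abbreviate $\eta \eqdef \etaTV(P_{Y|X})$.

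Next I would peel off the last vertex $Y_n$ by applying \prettyref{thm:es-tv} with the source ``$X$'' of that theorem taken to be $W$, the peeled vertex ``$W$'' of that theorem taken to be $Y_n$, and $V = \{Y_1, \ldots, Y_{n-1}\}$ (so that indeed $Y_n > V$). Inequality \prettyref{eq:es-tvv} then reads
\begin{equation*}
\etaTV(P_{Y^n|W}) \le (1 - \eta_{Y_n})\,\etaTV(P_{Y^{n-1}|W}) + \eta_{Y_n}\,\etaTV\big(P_{\pa(Y_n), Y^{n-1}|W}\big)\,.
\end{equation*}
The key observation is that $\pa(Y_n) \cup \{Y_1, \ldots, Y_{n-1}\}$ contains the source $W$ itself, so the channel $W \mapsto (W, Y^{n-1})$ copies its own input; distinct inputs produce mutually singular outputs, whence by \prettyref{eq:etaTV} the last contraction coefficient equals $1$. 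Writing $b_m \eqdef \etaTV(P_{Y^m|W})$ and using $b_{n-1} \le 1$ together with $\eta_{Y_n} \le \eta$, this becomes
\begin{equation*}
b_n \le b_{n-1} + \eta_{Y_n}(1 - b_{n-1}) \le (1 - \eta)\,b_{n-1} + \eta\,,
\end{equation*}
equivalently $1 - b_n \ge (1 - \eta)(1 - b_{n-1})$.

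Finally I would close the induction. The base case $b_1 = \etaTV(P_{Y_1|W}) \le \eta$ holds because $Y_1$ is a deterministic function of $W$ fed through $P_{Y|X}$; iterating the recursion gives $1 - b_n \ge (1 - \eta)^n$, that is $\etaTV(P_{Y^n|W}) \le 1 - (1 - \eta)^n$, and the remaining strict bound is the elementary estimate $1 - (1 - t)^n = t \sum_{k=0}^{n-1}(1 - t)^k < nt$ (with $t \eqdef \etaTV(P_{Y|X})$), valid for $t \in (0, 1)$ and $n \ge 2$, the cases $t \in \{0,1\}$ or $n = 1$ being trivial. I do not foresee a genuine obstacle: the content lies entirely in recognizing that the ``feedback term'' $\etaTV(P_{\pa(Y_n), Y^{n-1}|W})$ saturates at $1$ because it already contains the source — which is precisely what turns \prettyref{eq:es-tvv} into a contractive recursion rather than the vacuous $b_n \le b_{n-1}$ — and in keeping straight the two clashing roles of the symbol ``$W$'' (source here, peeled node in \prettyref{thm:es-tv}); the rest is the same one-line induction as in the KL case.
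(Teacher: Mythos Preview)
Your proof is correct and follows exactly the route the paper intends: the paper simply states that \prettyref{cor:etaKL-fdbk} ``extend[s] to total variation verbatim with $\etaKL$ replaced by $\etaTV$,'' i.e., apply \prettyref{thm:es-tv} $n$ times, and you have carefully unpacked that one-line argument, including the observation that the second term in \eqref{eq:es-tvv} saturates at $1$ because $\pa(Y_n)$ contains the source. (An equally short alternative is to invoke the percolation bound \eqref{eq:estv_perc} directly: since each $Y_j$ is a child of $W$, $\perc(\{Y_1,\ldots,Y_n\}) = 1-\prod_j(1-\eta_{Y_j}) \le 1-(1-\eta)^n$; also note that the ``$\etaKL$'' in the displayed statement is a typo for $\etaTV$, which is what you actually prove.)
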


\begin{corollary} In the setting of \prettyref{cor:etaKL-es} we have
$$ \etaTV(P_{V|X}) \le \sum_{\pi: X\scfto V} \etaTV^\pi \le \sum_{\pi: X\to V} \etaTV^\pi\,, $$
where for any path $\pi = (X, v_1,\ldots, v_m)$ we denoted $\etaTV^\pi \triangleq \prod\limits_{j=1}^m \etaTV(P_{v_j|\pa(v_j)})$.
\end{corollary}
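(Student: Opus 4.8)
The plan is to notice that this corollary stands to \prettyref{thm:es-tv} exactly as \prettyref{cor:etaKL-es} stands to \prettyref{th:es}, and that the proof of \prettyref{cor:etaKL-es} uses nothing about $\etaKL$ beyond the two-term recursion~\eqref{eq:es}. \prettyref{thm:es-tv} supplies the identical recursion for $\etaTV$ in the form~\eqref{eq:es-tvv}: writing $V=(V',W)$ with $W>V'$ and $P=\pa(W)\setminus V$, it reads $\etaTV(P_{V|X})\le(1-\eta_W)\etaTV(P_{V'|X})+\eta_W\etaTV(P_{P,V'|X})$ with $\eta_W=\etaTV(P_{W|\pa(W)})$, which is precisely the right-hand side of~\eqref{eq:es} with $\etaKL$ replaced by $\etaTV$ throughout. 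So the whole argument transfers.

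Concretely, I would rerun the induction from the proof of \prettyref{cor:etaKL-es} on the maximal element of $V$, with $\etaKL$ replaced by $\etaTV$ and $\eta^\pi$ by $\etaTV^\pi$. The base case $V=\{X\}$ is trivial (the trivial path has empty product $1$ and $\etaTV$ of the identity kernel is $1$). For the inductive step one uses the two purely combinatorial identities~\eqref{eq:mr2} (a shortcut-free path to a disjoint union $A\cup B$ either stays in $A$ avoiding $B$, or in $B$ avoiding $A$) and~\eqref{eq:mr3} (a path is shortcut-free to $V$ iff its truncation by the last node is shortcut-free to $\pa(V)\setminus V$) to obtain $\sum_{\pi:X\scfto V}\etaTV^\pi=\sum_{\pi:X\scfto V'}\etaTV^\pi+\eta_W\sum_{\pi:X\scfto P,\ \mathrm{avoid}\ V'}\etaTV^\pi$, then feeds the induction hypotheses into~\eqref{eq:es-tvv}, and drops the nonnegative term $\eta_W\sum_{\pi:X\scfto V',\ \mathrm{pass}\ P}\etaTV^\pi\ge0$ to match the two expressions. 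The second inequality $\sum_{\pi:X\scfto V}\etaTV^\pi\le\sum_{\pi:X\to V}\etaTV^\pi$ is immediate, since shortcut-free paths form a subfamily of all $X$-to-$V$ paths and every weight $\etaTV(P_{v|\pa(v)})$ lies in $[0,1]$.

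A slicker route, which I would at least mention, is to union-bound the percolation estimate~\eqref{eq:estv_perc} directly: in the site-percolation model of \prettyref{thm:es-tv} a path $\pi=(X,v_1,\dots,v_m)$ is open (all its nodes kept) with probability $\prod_{j}\etaTV(P_{v_j|\pa(v_j)})=\etaTV^\pi$, and since truncating any open non-shortcut-free path to a minimal open sub-path yields an open shortcut-free path, the event that there is any path from $X$ to $V$ equals $\bigcup_{\pi:X\scfto V}\{\pi\ \text{open}\}$; a union bound then gives $\perc(V)\le\sum_{\pi:X\scfto V}\etaTV^\pi\le\sum_{\pi:X\to V}\etaTV^\pi$, and~\eqref{eq:estv_perc} finishes.

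There is essentially no analytic obstacle left: the only nontrivial ingredient, the coupling construction behind the recursion~\eqref{eq:es-tvv}, is already established in \prettyref{thm:es-tv}, and the remainder is book-keeping over paths. The one point that genuinely needs checking --- the ``hard part,'' such as it is --- is that the path identities~\eqref{eq:mr2}--\eqref{eq:mr3} and the sign of the discarded term in the induction are insensitive to which contraction coefficient labels the vertices; they are, because those assertions are about the graph $G$ alone, the coefficients entering merely as nonnegative weights bounded by $1$.
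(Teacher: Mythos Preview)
Your proposal is correct and matches the paper's approach exactly: the paper states this corollary without a separate proof, noting only that \prettyref{cor:etaKL-fdbk} and \prettyref{cor:etaKL-es} ``extend to total variation verbatim with $\etaKL$ replaced by $\etaTV$'' once \prettyref{thm:es-tv} supplies the recursion~\eqref{eq:es-tvv}. Both routes you outline---rerunning the induction of \prettyref{cor:etaKL-es} and union-bounding the percolation probability~\eqref{eq:estv_perc}---are precisely the two arguments the paper invokes for the $\etaKL$ case.
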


\paragraph{Evaluation for the BSC}
Consider the $n$-letter BSC defined in \prettyref{eq:bsc_def}, where $Y=X+Z$ with $Z\sim\Bern(\delta)^n$ and $|Z|\sim\Binom(n,\delta)$.
By Dobrushin's characterization \prettyref{eq:etaTV}, we have
\begin{align} \etaTV
&= \max_{x,x' \in \FF_2^n} \TV(P_{Y|X=x}, P_{Y|X=x'})  \nonumber \\
&= \TV(\mathrm{Bern}(\delta)^n, \mathrm{Bern}(1-\delta)^n) \nonumber \\
&= \TV(\mathrm{Binom}(n,\delta), \mathrm{Binom}(n,1-\delta)) \label{eq:tvbsc1}\\
		   &= 1-2\PP[|Z|>n/2] - \PP[|Z|=n/2] \label{eq:tvbsc2}\\
		   &= 1-(4\delta (1-\delta))^{{n\over 2} + O(\log n)}\,, \label{eq:tvbsc3}
\end{align}
where \prettyref{eq:tvbsc1} follows from the sufficiency of $|Z|$ for testing the two distributions, \prettyref{eq:tvbsc2} follows from $\TV(P,Q) = 1 - \int P \wedge Q$ and \prettyref{eq:tvbsc3} follows from standard binomial tail estimates (see, \eg, \cite[Lemma 4.7.2]{ash-itbook}). 
The above sharp estimate should be compared to the bound obtained by applying \prettyref{cor:tv_par}:
\begin{equation}\label{eq:bsc3}
	\etaTV \le 1-(2\delta)^n\,.
\end{equation}
Although~\eqref{eq:bsc3} correctly predicts the exponential convergence of $\etaTV\to1$ whenever $\delta < \frac{1}{2}$, the exponent estimated is not
optimal.

\section{Bounding $F_I$-curves in networks}\label{sec:ficurve}

In this section our goal is to produce upper bound bounds on the $F_I$-curve of a Bayesian network $F_{V|X}$ 
in terms of those of the constituent channels. For any vertex $v$ of the network, denote the $F_I$-curve of the channel $P_{v|\pa(v)}$ by $F_{v|\pa(v)}$, abbreviated by $F_{v}$, and the concavified version by $F_v^c$. 

\begin{theorem}\label{th:es-curve} 
In the setting of \prettyref{th:es}, 
\begin{align}
	F_{V,W|X} \le & ~ F_{V|X} + F_W^c \circ (F_{\pa(W),V|X} - F_{V|X})\, ,	\label{eq:es-curve}\\
	F_{V,W|X}^c \le & ~ F_{V|X}^c + F_W^c \circ (F_{\pa(W),V|X}^c - F_{V|X}^c)\, .\label{eq:es-curvec}
\end{align}
Furthermore, the right-hand side of~\eqref{eq:es-curvec} is non-negative, concave,
nondecreasing and upper bounded by the identity mapping $\id$.
\end{theorem}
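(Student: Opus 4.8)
The plan is to transplant the proof of \prettyref{th:es} from the linear SDPI \eqref{eq:etakl_mi} to the non-linear one governed by the $F_I$-curve, with one structural change forced by \prettyref{prop:fi_bec}: since the argument passes to a per-realization bound conditioned on $V$ and then averages, Jensen's inequality requires the \emph{concave} envelope $F_W^c$ — not $F_W$ itself — on the right-hand side. I will use three elementary facts about any $F_I^c$-curve: it is concave, it vanishes at $0$ (if $I(U;X)=0$ then $U$ is independent of everything downstream), and it is dominated by $\id$ by data processing. Together these imply $F_W^c$ is nondecreasing with all one-sided slopes $\le 1$ (a nonnegative concave function on $[0,\infty)$ is automatically nondecreasing, and $F_W^c(t)/t\le 1$ by concavity together with $F_W^c(0)=0$).

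For \eqref{eq:es-curve}, fix $U\to X\to(V,W)$ and set $A=\pa(W)\setminus V$ (assume $X\notin A$ as in \prettyref{th:es}). Conditioned on $V=v$, one has the Markov chain $U\to A\to W$ driven by $P_{W|A,\,C=c}$ with $C=\pa(W)\cap V$, which is a restriction of $P_{W|\pa(W)}$; hence by the definition of the $F_I$-curve together with the ``freezing'' bound (the curve analogue of \eqref{eq:eta-freeze}, obtained by taking $C$ deterministic in $F_{W|A,C}$), $I(U;W\mid V=v)\le F_W^c\bigl(I(U;A\mid V=v)\bigr)$. Averaging over $v\sim P_V$ and applying Jensen to the concave $F_W^c$ gives $I(U;W\mid V)\le F_W^c\bigl(I(U;A\mid V)\bigr)$; adding $I(U;V)$ and using $I(U;A\mid V)=I(U;\pa(W),V)-I(U;V)$ yields
\[
 I(U;V,W)\ \le\ I(U;V)+F_W^c\bigl(I(U;\pa(W),V)-I(U;V)\bigr).
\]
Write $g(a,b)\eqdef a+F_W^c(b-a)$ on $\{0\le a\le b\}$; the key lemma is that $g$ is nondecreasing in each coordinate there (monotone in $b$ because $F_W^c$ is; monotone in $a$ because all slopes of $F_W^c$ are $\le 1$). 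Since $I(U;V)\le F_{V|X}(I(U;X))$, $I(U;\pa(W),V)\le F_{\pa(W),V|X}(I(U;X))$, and $F_{V|X}\le F_{\pa(W),V|X}$ (so we stay in the domain of $g$), applying this monotonicity and taking the supremum over admissible $U$ proves \eqref{eq:es-curve}.

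For the concavified inequality \eqref{eq:es-curvec}, let $h(t)\eqdef F_{V|X}^c(t)+F_W^c\bigl(F_{\pa(W),V|X}^c(t)-F_{V|X}^c(t)\bigr)$ be its right-hand side; I would show $h$ is (i) concave and (ii) a pointwise majorant of $F_{V,W|X}$, whence $h\ge F_{V,W|X}^c$ automatically, being a concave majorant. For (ii), combine \eqref{eq:es-curve} with the coordinatewise monotonicity of $g$ and $F\le F^c$ (using also $F_{V|X}^c\le F_{\pa(W),V|X}^c$ to stay in the domain). For (i), observe $g$ is jointly concave (concave composed with an affine map, plus an affine term) and coordinatewise nondecreasing, while $F_{V|X}^c$ and $F_{\pa(W),V|X}^c$ are concave, so $h$ is concave by the standard concave/monotone composition argument. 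The four asserted properties of $h$ then follow quickly: concavity is (i); $h\le\id$ by telescoping, $h(t)\le F_{V|X}^c(t)+\bigl(F_{\pa(W),V|X}^c(t)-F_{V|X}^c(t)\bigr)=F_{\pa(W),V|X}^c(t)\le t$ using $F_W^c(s)\le s$ and $F_{\pa(W),V|X}^c\le\id$; nonnegativity is immediate since each summand is $\ge 0$; and monotonicity is then free, since a nonnegative concave function on $[0,\infty)$ is nondecreasing.

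I expect the main obstacle to be the bookkeeping around the auxiliary function $g(a,b)=a+F_W^c(b-a)$: verifying it is defined and coordinatewise nondecreasing precisely on $\{0\le a\le b\}$, that the relevant pairs $\bigl(I(U;V),I(U;\pa(W),V)\bigr)$ and $\bigl(F_{V|X}^c,F_{\pa(W),V|X}^c\bigr)$ lie in that region, and that $g$ is jointly concave — this is exactly what lets the single-$U$ bound be lifted to an inequality between curves and then concavified. The information-theoretic content (the freezing bound and the Jensen step) is a routine adaptation of \prettyref{th:es}; the genuinely new move is recognizing that \eqref{eq:es-curvec} is best obtained by showing its right-hand side is a concave majorant of $F_{V,W|X}$, rather than by manipulating concave envelopes term by term.
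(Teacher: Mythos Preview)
Your proposal is correct and follows essentially the same route as the paper's proof: the conditional freezing bound plus Jensen to get $I(U;V,W)\le I(U;V)+F_W^c\bigl(I(U;\pa(W),V)-I(U;V)\bigr)$, then coordinatewise monotonicity of $(a,b)\mapsto a+F_W^c(b-a)$ (equivalently, the paper's observation that $t\mapsto t-F_W^c(t)$ is nondecreasing) to reach \eqref{eq:es-curve}, and finally joint concavity of that map composed with the concave pair $(F_{V|X}^c,F_{\pa(W),V|X}^c)$ to show the right-hand side of \eqref{eq:es-curvec} is a concave majorant of $F_{V,W|X}$. Your explicit packaging via the auxiliary $g(a,b)=a+F_W^c(b-a)$ and the remark that nonnegativity plus concavity on $[0,\infty)$ yields monotonicity for free are marginally tidier than the paper's presentation, but the mathematical content is identical.
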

\begin{remark}
The $F_I$-curve estimate in \prettyref{th:es-curve} implies that of contraction coefficients of \prettyref{th:es}. To see this, note that since $F_{\pa(W),V|X} \leq \id$, the following is a relaxation of \prettyref{eq:es-curve}:
\begin{equation}\label{eq:es-curveX}
\id - F_{V,W|X}  \geq (\id - F_W) \circ (\id - F_{V|X}).
\end{equation}
Consequently, if each channel in the network satisfies an SDPI, then the end-to-end SDPI is also satisfied. That is, if each 
vertex has a non-trivial $F_I$-curve, \ie, $F_{v}< \id$ for all $v\in\calV$,
then the channel $X\to V$ also has a strict contractive property, \ie, $F_{V|X} < \id$. 

Furthermore, since $F_W^c(t) \leq \eta_W t$, noting the fact that $F_{V|X}'(0)=\etaKL(P_{V|X})$ and taking the derivative
on both sides of \prettyref{eq:es-curve} we see that the latter implies \prettyref{eq:es}.
\end{remark}

\begin{proof}
We first show that for any channel $P_{Y|X}$, its $F_{Y|X}$-curve satisfies that $t\mapsto t - F_{Y|X}(t)$ is nondecreasing. Indeed, it is known,
cf.~\cite[Section I]{CPW15-journal}, that $t\mapsto {F_{Y|X}(t)\over t}$ is nonincreasing. Thus, for $t_1 < t_2$ we have
\begin{align*} t_2 - F_{Y|X}(t_2) &\ge t_2 - {t_2\over t_1} F_{Y|X}(t_1) \\
		  &= {t_2\over t_1} \left( t_1 - F_{Y|X}(t_1) \right)\\
		  &\ge t_1 - F_{Y|X}(t_1)\,,
\end{align*}
where the last step follows from the fact that $F_{Y|X}(t)\le t$. Similarly, for any concave function
$\Phi:\mreals_+\to\mreals_+$ s.t. $\Phi(0)=0$ we have ${\Phi(t_2)\over t_2} \le {\Phi(t_1)\over t_1}$. Therefore, the argument above implies $t \mapsto t-\Phi(t)$ is nondecreasing and, in particular, so is $t \mapsto t-F_W^c(t)$.

Let $P_{UX}$ be such that $I(U;X) \leq t$ and $I(U;W,V) = F_{V,W|X}(t)$. 
By the same argument that leads to \prettyref{eq:es-contract} we obtain 
\begin{align*} I(U; W | V=v_0) &\le F_W(I(U; A | V=v_0)) \\
		   &\le F_W^c(I(U; A | V=v_0))\,.  
\end{align*}		   
Averaging over $v_0 \sim P_V$ and applying Jensen's inequality we get
\[
I(U; W , V) \le F_W^c(I(U; \pa(W), V)  - I(U;V)) + I(U;V).
\]
Therefore,
\begin{align}
F_{V,W|X}(t)
\leq & ~ F_W^c(I(U; \pa(W), V)  - I(U;V)) + I(U;V) 	\nonumber \\
\leq & ~ F_W^c(F_{\pa(W), V|X}(t)  - I(U;V)) + I(U;V)		\label{eq:esi1} \\
= & ~ F_{\pa(W), V|X}(t) - (\id - F_W^c)(F_{\pa(W), V|X}(t)  - I(U;V)) 	\nonumber \\
\leq & ~ F_{\pa(W), V|X}(t) - (\id - F_W^c)(F_{\pa(W), V|X}(t)  - F_{V|X}(t)) 		\label{eq:esi2} \\
= & ~ F_{V|X}(t) +  F_W^c(F_{\pa(W), V|X}(t)  - F_{V|X}(t)) \nonumber	\\
\leq & ~ F_{V|X}^c(t) +  F_W^c(F_{\pa(W), V|X}^c(t)  - F_{V|X}^c(t)) \label{eq:esi3}
\end{align}
where 
\prettyref{eq:esi1} and \prettyref{eq:esi2} follow from the facts that $t \mapsto F_W(t)$ and $t \mapsto t - F_W(t)$ are both nondecreasing, and 
\prettyref{eq:esi3} follows from that $a+F_W^c(b-a)$ is nondecreasing in both $a$ and $b$.

Finally, we need to show that the right-hand side of~\eqref{eq:esi3} is nondecreasing and concave (this automatically
implies that~\eqref{eq:esi3} is an upper-bound to the concavification $F_{V|X}^c$). To that end, denote $t_\lambda = \lambda
t_1 + (1-\lambda) t_0$, $f_\lambda = F_{V|X}^c(t_\lambda)$, $g_\lambda=F_{\pa(W), V|X}^c(t_\lambda )$ and notice the
chain
\begin{align}
	f_\lambda +  F_W^c(g_\lambda  - f_\lambda) &\ge 
	\lambda f_1 + (1-\lambda)f_0 +  F_W^c(\lambda (g_1-f_1) + (1-\lambda)(g_0-f_0) ) \label{eq:esi4}\\
	&\ge \lambda (f_1 + F_W^c(g_1-f_1)) + (1-\lambda)(f_0 + F_W^c(g_0-f_0))\label{eq:esi5}
\end{align}	
where~\eqref{eq:esi4} is from concavity of $F_{V|X}^c$, $F_{\pa(W),V|X}^c$ and monotonicity of $(a,b)\mapsto a+F_W^c(b-a)$, and~\eqref{eq:esi5} is from concavity of $F_W^c$.
\end{proof}

\begin{corollary}\label{cor:ficurve} In the setting of \prettyref{cor:etaKL-fdbk} we have
$$ F_{Y^n|W}(t) \le t - \psi^{(n)}(t)\,, $$
where $\psi^{(1)} = \psi$, $\psi^{(k+1)}=\psi^{(k)} \circ \psi$ and  $\psi:\mreals_+\to\mreals_+$ is a
\underline{convex} function such that
$$ F_{Y|X}(t) \le t - \psi(t)\,.$$
\end{corollary}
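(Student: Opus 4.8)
The plan is to apply Theorem~\ref{th:es-curve} repeatedly, exactly as Corollary~\ref{cor:etaKL-fdbk} applies Theorem~\ref{th:es}, and track the resulting composition of functions. Recall the graphical model: $W \to Y_1 \to Y_2 \to \cdots \to Y_n$ with $W$ also feeding directly into every $Y_j$ (feedback edges). Peeling off $Y_n$ first, we have $\pa(Y_n) = \{W, Y_{n-1}\}$, so with $V = \{Y_1,\ldots,Y_{n-1}\}$, $W_{\rm node} = Y_n$, the set $\pa(W_{\rm node}) \cup V = \{W, Y_1,\ldots,Y_{n-1}\}$, but since $W$ is the source $X$ itself, $F_{\pa(Y_n),V|X} = F_{V|X} = F_{Y^{n-1}|W}$ (conditioning on the source adds nothing). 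Hence the bound~\eqref{eq:es-curve} degenerates and instead one should use the relaxation~\eqref{eq:es-curveX}, which for this feedback configuration reads
\[
\id - F_{Y^n|W} \;\ge\; (\id - F_{Y_n|\pa(Y_n)}) \circ (\id - F_{Y^{n-1}|W}) \;\ge\; \psi \circ (\id - F_{Y^{n-1}|W}),
\]
using $F_{Y_n|\pa(Y_n)}(t) \le F_{Y|X}(t) \le t - \psi(t)$ (the channel $\pa(Y_n)\to Y_n$ is a restriction of $P_{Y|X}$ to a subset of inputs, so its $F_I$-curve is dominated by $F_{Y|X}$) together with the monotonicity of $s \mapsto \psi(s) = s - (\text{that curve})(s)$, i.e.\ the fact that $t \mapsto t - F(t)$ is nondecreasing established in the proof of Theorem~\ref{th:es-curve}.

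From here I would induct on $n$. The induction hypothesis is $\id - F_{Y^{n-1}|W} \ge \psi^{(n-1)}$. Composing with $\psi$ on the outside and using that $\psi$ is nondecreasing (it is convex with $\psi(0)=0$ and $\psi \le \id$, hence nondecreasing) gives
\[
\id - F_{Y^n|W} \;\ge\; \psi \circ (\id - F_{Y^{n-1}|W}) \;\ge\; \psi \circ \psi^{(n-1)} \;=\; \psi^{(n)},
\]
which rearranges to $F_{Y^n|W}(t) \le t - \psi^{(n)}(t)$, as claimed. The base case $n=1$ is the hypothesis $F_{Y|X} \le \id - \psi$. One small point to address: the statement of the corollary presupposes the existence of a \emph{convex} majorizing function $\psi$ with $F_{Y|X}(t) \le t - \psi(t)$; convexity of $\psi$ is not actually needed for the chain of inequalities above (only monotonicity, which follows from $0 \le \psi \le \id$), but it is the natural hypothesis since $t - F_{Y|X}(t)$ need not itself be convex, and one typically takes $\psi$ to be e.g.\ the largest convex minorant of $t - F_{Y|X}(t)$ or a simple explicit convex lower bound.

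The only real subtlety — hence the step I would treat most carefully — is the degeneration of~\eqref{eq:es-curve} for the feedback graph: one must notice that because the source $X = W$ is itself a parent of each $Y_j$, the term $F_{\pa(W_{\rm node}),V|X}$ in Theorem~\ref{th:es-curve} collapses to $F_{V|X}$, so the two-term bound~\eqref{eq:es-curve} is not directly useful and one instead invokes the cleaner relaxation~\eqref{eq:es-curveX} from the Remark. Everything after that is the same bookkeeping as in the proof of Corollary~\ref{cor:etaKL-fdbk}, just with the scalar factors $(1-\eta)$ replaced by the function $\psi$ and products replaced by compositions.
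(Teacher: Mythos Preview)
Your overall skeleton---induct on $n$ by applying Theorem~\ref{th:es-curve} with $V=Y^{n-1}$ and $W=Y_n$---is exactly the paper's route. But two of your supporting claims are off, and one of them matters.

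First, your computation of $F_{\pa(Y_n),V\mid W}$ is backwards. Including the source $W$ in the \emph{output} of the channel does not ``add nothing''; it makes the channel perfect. For any $U\to W\to (W,Y^{n-1})$ one has $I(U;W,Y^{n-1})=I(U;W)$, so $F_{\pa(Y_n),Y^{n-1}\mid W}=\id$, not $F_{Y^{n-1}\mid W}$. Plugging $\id$ into~\eqref{eq:es-curve} gives precisely the relaxation you want to use, so the conclusion survives, but the reasoning (``degenerates, so use the relaxation instead'') is confused: nothing degenerates, and in this case~\eqref{eq:es-curve} and~\eqref{eq:es-curveX} coincide. (Also, $\pa(Y_n)=\{W,Y_1,\dots,Y_{n-1}\}$, not $\{W,Y_{n-1}\}$, though this is harmless since you union with $V$ anyway; and the channel $\pa(Y_n)\to Y_n$ is a deterministic \emph{preprocessing} followed by $P_{Y|X}$, not a restriction to a subset of inputs---your conclusion $F_{Y_n\mid\pa(Y_n)}\le F_{Y|X}$ is still correct.)

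Second, and more importantly, your remark that ``convexity of $\psi$ is not actually needed'' is wrong. Theorem~\ref{th:es-curve} is stated with the \emph{concavified} curve $F_W^c$, because its proof averages the pointwise bound $I(U;Y_n\mid Y^{n-1}=y)\le F_{Y|X}(I(U;W\mid Y^{n-1}=y))$ over $y$ via Jensen. Thus in the induction step you need $F_{Y_n\mid\pa(Y_n)}^c\le \id-\psi$, not merely $F_{Y_n\mid\pa(Y_n)}\le \id-\psi$. Passing to the concave envelope requires $\id-\psi$ to be concave, i.e.\ $\psi$ convex. Without convexity the averaging step fails and the induction does not go through. (The appearance of $F_W$ without the superscript $c$ in the displayed relaxation~\eqref{eq:es-curveX} should be read as $F_W^c$.)
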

\begin{proof} 
The case of $n=1$ follows from the assumption on $\psi$. The case of $n>1$ is proved by induction, with the induction step being an
application of Theorem~\ref{th:es-curve} with $V=Y^{n-1}$ and $W=Y_n$.
\end{proof}

Generally, the bound of~\prettyref{cor:ficurve} cannot be improved in the vicinity of zero. As an example where this is
tight, consider a parallel erasure channel, whose $F_I$-curve for $t\le \log q$ is computed in Theorem~\ref{th:fiec}
below.

\paragraph{Evaluation for the BSC}
To ease the notation, all logarithms are with respect to base two in this section.
Let $h(y)=y\log \frac{1}{y}+(1-y)\log \frac{1}{1-y}$ denote the binary entropy function and $h^{-1}:
[0,1] \to [0,{1\over2}]$ its functional inverse. 
Let $p*q \triangleq p(1-q)+q(1-p)$ for $p,q\in[0,1]$ denote binary convolution and define
\begin{equation}\label{eq:psidef}
	\psi(t) \eqdef t-1+h(\delta*h^{-1}(\max(1-t,0))) 
\end{equation}
which is convex and increasing in $t$ on $\reals_+$.
For $n=1$ it was shown in~\cite[Section 2]{CPW15-journal} that the $F_I$-curve of $\BSC(\delta)$ is given
by
$$ F_I(t,\BSC(\delta)) = F_I^c(t,\BSC(\delta))=
t-\psi(t)\,.$$

Applying \prettyref{cor:ficurve} we obtain the following bound on the $F_I$-curve of BSC of blocklength $n$ (even with feedback):
\begin{proposition}\label{prop:samor} Let $Z_1,\ldots,Z_n \iiddistr \Bern(\delta)$ be independent of
$U$. For any (encoder) functions  $f_j, j=1,\ldots,n$, define
	$$ X_j = f_j(U, Y^{j-1}), \quad Y_j = X_j + Z_j\,.$$
Then
\begin{equation}\label{eq:rppr}
		I(U; Y^n) \le I(U; X^n)-\psi^{(n)}(I(U; X^n))\,,
\end{equation}	
where $\psi^{(1)} = \psi$, $\psi^{(k+1)}=\psi^{(k)} \circ \psi$ and  $\psi$ is defined in~\eqref{eq:psidef}. 
\end{proposition}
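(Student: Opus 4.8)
The plan is to recognize Proposition~\ref{prop:samor} as a direct instance of Corollary~\ref{cor:ficurve} applied to the channel $P_{Y|X} = \BSC(\delta)$. The graphical model described in the proposition --- a single-letter BSC used $n$ times with feedback, $X_j = f_j(U,Y^{j-1})$, $Y_j = X_j + Z_j$ --- is exactly the feedback setting of Corollary~\ref{cor:etaKL-fdbk}, and hence of Corollary~\ref{cor:ficurve}. So the first step is to invoke the $F_I$-curve formula for the $\BSC(\delta)$ quoted just above the statement, namely $F_I(t,\BSC(\delta)) = t - \psi(t)$ with $\psi$ as in~\eqref{eq:psidef}, and to observe (as the text already notes) that $\psi$ is convex and increasing on $\reals_+$. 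This verifies the hypothesis of Corollary~\ref{cor:ficurve} that $\psi$ is a convex function with $F_{Y|X}(t) \le t - \psi(t)$ (here with equality).

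Second, I would apply Corollary~\ref{cor:ficurve} verbatim to conclude that $F_{Y^n|W}(t) \le t - \psi^{(n)}(t)$, where $W$ plays the role of $U$ and $\psi^{(k)}$ is the $k$-fold composition of $\psi$. Third, I would translate this $F_I$-curve bound into the mutual-information statement~\eqref{eq:rppr}: by the operational meaning of the $F_I$-curve (the optimal input-independent SDPI, $I(U;Y^n) \le F_{Y^n|U}(I(U;X^n))$), we get $I(U;Y^n) \le I(U;X^n) - \psi^{(n)}(I(U;X^n))$, which is precisely~\eqref{eq:rppr}. One small point worth spelling out is that $t \mapsto t - \psi^{(n)}(t)$ is nondecreasing (which follows since each $\psi$ is convex with $\psi(0)=0$, so $t-\psi(t)$ is nondecreasing, and compositions of nondecreasing-slope-$\le 1$ maps behave well --- this is exactly the monotonicity established inside the proof of Theorem~\ref{th:es-curve}), so that plugging the bound $I(U;X^n) \le t$ into the right-hand side is legitimate; in fact one can just take $t = I(U;X^n)$ directly.

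Honestly, there is no real obstacle here: the proposition is essentially a restatement of Corollary~\ref{cor:ficurve} specialized to the BSC, packaged for comparison with Samorodnitsky's inequality. The only thing requiring minor care is bookkeeping: making sure the feedback graphical model in the proposition matches the one in Corollary~\ref{cor:etaKL-fdbk}/\ref{cor:ficurve} (it does, with $Y_j$ depending on all of $Y^{j-1}$ through $f_j$, so the relevant ``parent'' structure and the inductive peeling-off of the last node $Y_n$ go through unchanged), and confirming the convexity and monotonicity of $\psi$ on $\reals_+$ --- which is asserted after~\eqref{eq:psidef} and can be checked by differentiating, using that $h$ is concave, $h^{-1}$ is convex increasing on $[0,1]$, and $y \mapsto \delta * y$ is affine. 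With those in hand the proof is a two-line invocation of Corollary~\ref{cor:ficurve}.
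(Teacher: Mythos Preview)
Your plan is exactly the paper's: the proposition is introduced there simply as ``Applying Corollary~\ref{cor:ficurve}'' to $P_{Y|X}=\BSC(\delta)$, using the single-letter identity $F_I(t,\BSC(\delta))=t-\psi(t)$ with $\psi$ convex and increasing. So the overall strategy matches.

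The one place your write-up needs care is the translation step. Corollary~\ref{cor:ficurve} bounds $F_{Y^n|W}$, the $F_I$-curve of the channel from the \emph{source} $W$ to $Y^n$; its operational meaning is $I(\tilde U;Y^n)\le F_{Y^n|W}(I(\tilde U;W))$ for any Markov chain $\tilde U\to W\to Y^n$. Writing ``$I(U;Y^n)\le F_{Y^n|U}(I(U;X^n))$'' conflates the two roles: if $W=U$, the argument of $F_{Y^n|U}$ should be $I(\tilde U;U)$, not $I(U;X^n)$. To obtain \eqref{eq:rppr} as stated you must instead take $X^n$ as the source $W$ in Corollary~\ref{cor:ficurve} (the encoders then furnish a particular $P_{UX^n}$), which gives $F_{Y^n|X^n}(t)\le t-\psi^{(n)}(t)$ and hence $I(U;Y^n)\le F_{Y^n|X^n}(I(U;X^n))$ via the Markov chain $U\to X^n\to Y^n$. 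That Markov relation is exactly what fails under genuine feedback (e.g.\ with $n=2$, $f_1\equiv 0$, $f_2(u,y_1)=u\oplus y_1$, $U\sim\Bern(1/2)$ one gets $I(U;Y^2)=I(U;X^2)=1-h(\delta)$), so your ``it does'' in the bookkeeping paragraph is too quick. The clean reading of both the paper's one-line proof and your argument is the no-feedback application with source $X^n$.
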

\begin{remark}
The estimate~\eqref{eq:rppr} was first shown by A. Samorodnitsky (private
communication) under extra technical constraints on the joint distribution of $(X^n,W)$ and in the absence of feedback.
We have then observed that
Evans-Schulman type of technique yields \eqref{eq:rppr} generally.
\end{remark}

Since $\psi(t) = 4\delta(1-\delta) t + o(t) $ as $t\to0$ we get
$$ F_I^c(t,\BSC(\delta)^n) \le t - t (4\delta(1-\delta))^{n + o(n)} $$
as $n\to \infty$ for any fixed $t$. A simple lower bound, for comparison purposes, can be inferred from~\eqref{eq:bsc2}
after noticing that there we have $I(U;X) = 1$, and so
$$ F_I^c(1,\BSC(\delta)^n) \ge 1-(4\delta (1-\delta))^{{n\over 2} + O(\log n)}\,,$$
This shows that the bound of Proposition~\ref{prop:samor} is order-optimal: $F(t)\to t$ exponentially fast. Exact
exponent is given by~\eqref{eq:etakl_tight}.

As another point of comparison, we note the following. 
Existence of capacity-achieving error-correcting codes then easily implies
$$ \lim_{n\to\infty} {1\over n} F_I^c(n\theta,\BSC(\delta)^n) = \min(\theta, C)\,,$$
where $C=1-h(\delta)$ is the Shannon capacity of $\BSC(\delta)$. 
Since for $t>1$ we have $\psi(t)=t-C$ one can show that
$$ \lim_{n\to\infty} {1\over n}\psi^{(n)}(n\theta) = \left|\theta-C\right|^+\,,$$
and therefore we conclude that in this sense the bound~\eqref{eq:rppr} is asymptotically tight.

\section{SDPI via comparison to erasure channels}\label{sec:erasure}

So far our leading example has been the binary symmetric channel~\eqref{eq:bsc_def}. We now consider another important
example:
\begin{example}
For any set $\calX$, the \textit{erasure channel} on $\calX$ with erasure probability $\delta$  is a random transformation from $\calX$ to $\calX\cup\{?\}$, where $?\notin\calX$  defined as
$$ P_{E|X}(e|x) = \begin{cases} \delta, &e=?\\
				1-\delta, &e=x 
	\end{cases}\,. $$
For $\calX=[q]$, we call it the \textit{$q$-ary erasure channel} denoted by $\EC_q(\delta)$. 
In the binary case, we denote the binary erasure channel by $\BEC(\delta)\triangleq \EC_2(\delta)$.
A simple calculation shows that for every $P_{UX}$ we have
\begin{equation}\label{eq:era}
	I(U; E) = (1-\delta) I(U;X) 
\end{equation}
and therefore for $\EC_q(\delta)$ we have $\etaKL(P_{E|X})=1-\delta$ and $F_I(t)=\min((1-\delta)t, \log q)$. 
\end{example}

Next we recall a standard information-theoretic ordering on channels, cf.~\cite[Section 5.6]{ELGK11}:
\begin{definition} Given two channels with common input alphabet, $P_{Y|X}$ and $P_{Y'|X}$, 
we say that $P_{Y'|X}$ is less noisy than $P_{Y|X}$, denoted by $P_{Y|X}\le_{l.n.}P_{Y'|X}$ if for all joint distributions
$P_{UX}$ we have
\begin{equation}\label{eq:ptt_inf}
	I(U;Y) \le I(U; Y')\,.
\end{equation}
\end{definition}

We also have an equivalent formulation in terms of divergence:
\begin{proposition}\label{prop:ptt} $P_{Y|X}\le_{l.n.}P_{Y'|X}$ if and only if for all $P_X,Q_X$ we have
	\begin{equation}\label{eq:ptt_div}
		D(Q_Y\| P_Y) \le D(Q_{Y'} \| P_{Y'}) 
\end{equation}	
	where $P_Y,P_{Y'},Q_Y,Q_{Y'}$ are the output distributions induced by $P_X,Q_X$ over $P_{Y|X}$ and $P_{Y'|X}$, respectively.
\end{proposition}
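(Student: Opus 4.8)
The plan is to prove the two implications separately. The implication $\eqref{eq:ptt_div}\Rightarrow\eqref{eq:ptt_inf}$ is the easy one and follows directly from disintegrating mutual information: for any $P_{UX}$, use $I(U;Y)=\mathbb{E}_{u\sim P_U}\qth{D(P_{Y|U=u}\|P_Y)}$ (and the analogous identity for $Y'$, with $P_Y,P_{Y'}$ the marginal output laws). For each fixed $u$, apply the hypothesis \eqref{eq:ptt_div} to the pair of input distributions $(P_X,Q_X):=(P_X,P_{X|U=u})$, whose images under $P_{Y|X}$ and $P_{Y'|X}$ are precisely $(P_Y,P_{Y|U=u})$ and $(P_{Y'},P_{Y'|U=u})$; this gives $D(P_{Y|U=u}\|P_Y)\le D(P_{Y'|U=u}\|P_{Y'})$. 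Averaging over $u\sim P_U$ yields $I(U;Y)\le I(U;Y')$, i.e.\ \eqref{eq:ptt_inf}.

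For the converse $\eqref{eq:ptt_inf}\Rightarrow\eqref{eq:ptt_div}$ I would use a binary-splitting (perturbation) argument. Fix $P_X,Q_X$, and for $\lambda\in[0,1]$ introduce a binary $U\sim\Bern(\lambda)$ coupled to $X$ by $P_{X|U=0}=P_X$, $P_{X|U=1}=Q_X$; write $R^\lambda_X=\bar\lambda P_X+\lambda Q_X$ for the induced input marginal and $R^\lambda_Y,R^\lambda_{Y'}$ for the corresponding output marginals. The crux is the identity
\[
\lim_{\lambda\downarrow 0}\frac1\lambda\, I(U;Y)=D(Q_Y\|P_Y)\qquad\text{in }[0,\infty],
\]
and likewise with $Y'$ in place of $Y$. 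Granting this, we divide the hypothesis $I(U;Y)\le I(U;Y')$ by $\lambda$ and let $\lambda\downarrow 0$ to conclude $D(Q_Y\|P_Y)\le D(Q_{Y'}\|P_{Y'})$, which is \eqref{eq:ptt_div}. To establish the limit when $D(Q_Y\|P_Y)<\infty$, I would use the standard identity $\mathbb{E}_{u\sim P_U}\qth{D(P_{Y|U=u}\|P_Y)}=I(U;Y)+D(R^\lambda_Y\|P_Y)$ with reference law $P_Y$, whose left side equals $\lambda D(Q_Y\|P_Y)$, so that $I(U;Y)=\lambda D(Q_Y\|P_Y)-D(R^\lambda_Y\|P_Y)$; it then suffices to note that $\lambda\mapsto D(R^\lambda_Y\|P_Y)=\sum_y R^\lambda_Y(y)\log\frac{R^\lambda_Y(y)}{P_Y(y)}$ has vanishing right derivative at $0$, which is immediate upon differentiating the finite sum and using $\sum_y\pth{Q_Y(y)-P_Y(y)}=0$, hence $D(R^\lambda_Y\|P_Y)=o(\lambda)$. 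When $D(Q_Y\|P_Y)=\infty$ (i.e.\ $Q_Y\not\ll P_Y$) I would instead use $\frac1\lambda I(U;Y)=\frac{\bar\lambda}{\lambda}D(P_Y\|R^\lambda_Y)+D(Q_Y\|R^\lambda_Y)$ and observe that the second term grows like $Q_Y(\{P_Y=0\})\log\frac1\lambda\to\infty$, so the limit is $+\infty$ as well.

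The only mildly technical point is the evaluation of this limit (equivalently, the one-sided derivative $\frac{\diff}{\diff\lambda}I(U;Y)\big|_{\lambda=0^+}=D(Q_Y\|P_Y)$), and in particular making the argument uniform across the absolutely-continuous and singular cases; everything else is routine bookkeeping. For general (non-finite) alphabets one would replace the finite-sum Taylor expansion by dominated convergence together with lower semicontinuity of relative entropy, but the overall scheme is unchanged.
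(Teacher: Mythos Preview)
Your proof is correct and follows essentially the same approach as the paper: the easy direction is handled identically by averaging the divergence inequality over $P_U$, and the hard direction by the same binary-$U$ perturbation together with the computation that $\tfrac{1}{\lambda}I(U;Y)\to D(Q_Y\|P_Y)$ as $\lambda\downarrow 0$. The only cosmetic difference is that you isolate the $o(\lambda)$ term via the compensation identity $I(U;Y)=\lambda D(Q_Y\|P_Y)-D(R^\lambda_Y\|P_Y)$, whereas the paper works with the direct decomposition $I(U;Y)=\bar\lambda D(P_Y\|R^\lambda_Y)+\lambda D(Q_Y\|R^\lambda_Y)$ and the lemma $D(P_Y\|R^\lambda_Y)=o(\lambda)\iff Q_Y\ll P_Y$; these are mirror images of the same estimate (and your remark about lower semicontinuity is exactly what the paper uses to cover the case $D(Q_Y\|P_Y)=\infty$ with $Q_Y\ll P_Y$, which your parenthetical ``i.e.\ $Q_Y\not\ll P_Y$'' overlooks on general alphabets).
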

See Appendix~\ref{app:ptt} for the proof.\footnote{It is tempting to put forward a fixed-$P_X$ version of the previous criterion
(similar to Theorem~\ref{thm:etaKLI}). That would, however, require some extra assumptions on $P_X$. Indeed, knowing that
$I(W;Y)\le I(W;Y')$ for all $P_{W,X}$ with a given fixed $P_X$ tells us nothing about how distributions $P_{Y|X=x}$ and
$P_{Y'|X=x}$ compare outside the support of $P_X$. (For discrete channels and strictly positive $P_X$, however, it is
easy to argue that indeed~\eqref{eq:ptt_div} holds for all $Q_X$ if and only if~\eqref{eq:ptt_inf} holds for all
$P_{U,X}$ with a given marginal $P_X$.)}

The following result shows that the contraction coefficient of KL divergence can be equivalently formulated as being
less noisy than the corresponding erasure channel:\footnote{Note that another popular partial order for random transformations -- that of stochastic degradation -- may also be
related to contraction coefficients, see~\cite[Remark 3.2]{Raginsky14}.}
\begin{proposition}\label{prop:etakl_ec} For an arbitrary channel $P_{Y|X}$ we have
\begin{equation}\label{eq:etakl_ec}
	\etaKL(P_{Y|X}) \le \eta \quad \iff \quad P_{Y|X} \le_{l.n.} P_{E|X}\,,
\end{equation}
where $P_{E|X}$ is the erasure channel on the same input alphabet and erasure probability $1-\eta$.
\end{proposition}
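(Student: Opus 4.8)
The statement is an equivalence between $\etaKL(P_{Y|X}) \le \eta$ and $P_{Y|X} \le_{l.n.} P_{E|X}$, where $P_{E|X}$ is the erasure channel with erasure probability $1-\eta$. The natural tool is the divergence-based characterization of the less-noisy order from \prettyref{prop:ptt}: $P_{Y|X}\le_{l.n.} P_{E|X}$ holds iff $D(Q_Y\|P_Y) \le D(Q_E\|P_E)$ for all input distributions $P_X, Q_X$. So the plan is to compute the right-hand side explicitly using the structure of the erasure channel, and then recognize the resulting inequality as exactly the definition of the contraction coefficient $\etaKL$.

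First I would compute $D(Q_E\|P_E)$ for the erasure channel with erasure probability $1-\eta$. Writing the output $E$ as taking values in $\calX \cup \{?\}$, under input distribution $Q_X$ the output puts mass $\eta Q_X(x)$ on each $x$ and mass $1-\eta$ on $?$; similarly for $P_X$. Since the erasure symbol $?$ gets the same mass $1-\eta$ under both $Q_E$ and $P_E$, it contributes nothing to the divergence, and on the non-erased part the likelihood ratio is $\frac{\eta Q_X(x)}{\eta P_X(x)} = \frac{Q_X(x)}{P_X(x)}$. Hence $D(Q_E\|P_E) = \eta\, D(Q_X\|P_X)$. (This is the divergence analogue of the identity $I(U;E)=(1-\delta)I(U;X)$ already noted in \prettyref{eq:era}, with $1-\delta$ there playing the role of $\eta$ here.) Plugging this into \prettyref{prop:ptt}, the condition $P_{Y|X}\le_{l.n.} P_{E|X}$ becomes: for all $P_X, Q_X$,
\[
D(P_{Y|X}\circ Q_X \,\|\, P_{Y|X}\circ P_X) \le \eta\, D(Q_X\|P_X)\,.
\]
But by the very definition \prettyref{eq:eta_fq}--\prettyref{eq:eta_f} of $\etaKL(P_{Y|X}) = \sup_{P_X}\sup_{Q_X} \frac{D(P_{Y|X}\circ Q_X\|P_{Y|X}\circ P_X)}{D(Q_X\|P_X)}$ (where the sup is over $Q_X$ with $0 < D(Q_X\|P_X) < \infty$), this last display holds for all $P_X, Q_X$ if and only if $\etaKL(P_{Y|X}) \le \eta$. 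That closes both directions of the equivalence simultaneously.

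The only real subtlety, and the step I expect to need a little care, is handling the general (non-finite) alphabet and the degenerate cases: when $D(Q_X\|P_X) = \infty$ the inequality is vacuous on both sides, and when $D(Q_X\|P_X) = 0$ it is trivial; one also wants $P_X$ not a point mass so that the supremum defining $\etaKL$ ranges over a nonempty set, but this matters only on the $\etaKL$ side and not for the equivalence itself. For the erasure-channel divergence computation in the general-alphabet setting, I would note that the output measure decomposes as a mixture $\eta\, (\text{law of }X) + (1-\eta)\delta_?$ with the two components mutually singular, so the Radon--Nikodym derivative of $Q_E$ with respect to $P_E$ is $\frac{\diff Q_X}{\diff P_X}$ on the "un-erased" part and $1$ at $?$, giving $D(Q_E\|P_E) = \eta\, D(Q_X\|P_X)$ exactly as in the discrete case. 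With that identity in hand the argument is a one-line unwinding of definitions together with \prettyref{prop:ptt}.
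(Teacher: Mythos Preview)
Your argument is correct, but it goes through a different door than the paper's. The paper works directly with the mutual-information definition of the less-noisy order \prettyref{eq:ptt_inf} and the characterization \prettyref{eq:etakl_mi}: since $I(U;E)=\eta\, I(U;X)$ for the erasure channel (equation \prettyref{eq:era}), the condition $\etaKL(P_{Y|X})\le\eta$ is equivalent to $I(U;Y)\le\eta\, I(U;X)=I(U;E)$ for all $P_{UX}$, which is precisely $P_{Y|X}\le_{l.n.}P_{E|X}$. You instead pass through the divergence characterization \prettyref{prop:ptt}, compute $D(Q_E\|P_E)=\eta\, D(Q_X\|P_X)$, and match it against the divergence definition \prettyref{eq:eta_fq}--\prettyref{eq:eta_f} of $\etaKL$. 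Your route is self-contained once \prettyref{prop:ptt} is granted, and has the mild advantage of appealing only to the \emph{primitive} definition of $\etaKL$ rather than the derived characterization \prettyref{eq:etakl_mi}; the paper's route is shorter because it avoids invoking \prettyref{prop:ptt} (whose proof in \prettyref{app:ptt} is not entirely trivial) and uses only the erasure identity \prettyref{eq:era} that was already stated. Both are essentially one-line unwindings of definitions once the relevant erasure identity (mutual-information or divergence) is in hand.
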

\begin{proof} The definition of $\etaKL(P_{Y|X})$ guarantees for every $P_{UX}$
\begin{equation}\label{eq:era2}
	I(U;Y) \le (1-\delta)I(U;X),
\end{equation}
where the right-hand side is precisely $I(U;E)$ by~\eqref{eq:era}.
\end{proof}

It turns out that the notion of less-noisiness tensorizes:
\begin{proposition}\label{prop:tenso} If $P_{Y_1|X_1} \le_{l.n.} P_{Y'_1|X_1}$ and $P_{Y_2|X_2} \le_{l.n.} P_{Y'_2|X_2}$ then
	$$ P_{Y_1|X_1}\times P_{Y_2|X_2} \le_{l.n.} P_{Y'_1|X_1} \times P_{Y'_2|X_2} $$
	In particular, 
	\begin{equation}\label{eq:uuff0}
		\etaKL(P_{Y|X}) \le \eta \quad \implies \quad P_{Y|X}^n \le_{l.n.} P_{E|X}^n\,. 
\end{equation}	
where $P_{E|X}$ is the erasure channel on the same input alphabet and erasure probability $1-\eta$.
\end{proposition}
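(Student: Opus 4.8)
The plan is to reduce the two-letter tensorization statement to a single-letter hybrid argument, and then iterate for $P_{Y|X}^n$. First I would fix an arbitrary joint distribution $P_{U X_1 X_2}$ and the two product channels, producing $Y_1, Y_2$ (resp.\ $Y_1', Y_2'$) as the outputs of $P_{Y_1|X_1}\times P_{Y_2|X_2}$ (resp.\ $P_{Y_1'|X_1}\times P_{Y_2'|X_2}$) applied to $(X_1,X_2)$; note the conditional independence structure $Y_1 \to X_1 \to (U,X_2) \to X_2 \to Y_2$ and analogously for the primed variables, with $(Y_1,Y_2)$ conditionally independent given $(X_1,X_2)$. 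The goal is $I(U;Y_1,Y_2)\le I(U;Y_1',Y_2')$. I would interpolate through the mixed output $(Y_1', Y_2)$: show $I(U;Y_1,Y_2)\le I(U;Y_1',Y_2)$ and then $I(U;Y_1',Y_2)\le I(U;Y_1',Y_2')$; chaining the two inequalities gives the result.

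For the first interpolation step, apply the chain rule: $I(U;Y_1,Y_2) = I(U;Y_2) + I(U;Y_1 \mid Y_2)$ and $I(U;Y_1',Y_2) = I(U;Y_2) + I(U;Y_1' \mid Y_2)$, so it suffices to prove $I(U;Y_1 \mid Y_2) \le I(U;Y_1' \mid Y_2)$. Conditioning on $Y_2 = y_2$, the variable $(U,Y_2{=}y_2)$ plays the role of an auxiliary $\tilde U$ feeding into $X_1$ through the Markov chain $\tilde U \to X_1 \to Y_1$ (and $\to Y_1'$), because $X_1 \to X_1 \to Y_1$ holds conditionally on $Y_2$ thanks to the product structure, and $Y_1$ depends on $(U,X_2,Y_2)$ only through $X_1$. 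Hence for each $y_2$ the less-noisy hypothesis $P_{Y_1|X_1}\le_{l.n.}P_{Y_1'|X_1}$ gives $I(U;Y_1 \mid Y_2{=}y_2) \le I(U;Y_1' \mid Y_2{=}y_2)$; averaging over $y_2 \sim P_{Y_2}$ yields the conditional inequality. The second interpolation step is symmetric: peel off $I(U;Y_1')$ via the chain rule and condition on $Y_1'{=}y_1'$, using $P_{Y_2|X_2}\le_{l.n.}P_{Y_2'|X_2}$ with auxiliary $(U,Y_1'{=}y_1')\to X_2 \to Y_2$ (resp.\ $Y_2'$).

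The main obstacle, and the step worth writing carefully, is verifying that the conditioned-on variable really does form a legitimate $U \to X \to Y$ Markov chain so that the less-noisy definition applies verbatim --- i.e.\ checking that $(U, Y_2{=}y_2) \to X_1 \to Y_1$ and $(U, Y_2{=}y_2) \to X_1 \to Y_1'$ hold for $P_{Y_2}$-almost every $y_2$. This is exactly where the product (memoryless) structure of the channels is used: $Y_1$ is generated from $X_1$ alone, independently of everything else given $X_1$, so conditioning on $Y_2$ does not break the chain; and the primed channel shares the same input $X_1$. Once this is in place, the $n$-letter consequence $P_{Y|X}^n \le_{l.n.} P_{E|X}^n$ follows by combining Proposition~\ref{prop:etakl_ec} (which turns $\etaKL(P_{Y|X})\le\eta$ into $P_{Y|X}\le_{l.n.}P_{E|X}$) with an induction on $n$: write $P_{Y|X}^n = P_{Y|X}^{n-1}\times P_{Y|X}$ and $P_{E|X}^n = P_{E|X}^{n-1}\times P_{E|X}$, apply the two-factor tensorization just proved, using the inductive hypothesis on the first factor and Proposition~\ref{prop:etakl_ec} on the second.
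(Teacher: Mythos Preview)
Your proposal is correct and follows essentially the same hybrid argument as the paper: the paper interpolates via $(Y_1,Y_2)\to(Y_1,Y_2')\to(Y_1',Y_2')$ (conditioning first on $Y_1$, then on $Y_2'$), whereas you interpolate via $(Y_1,Y_2)\to(Y_1',Y_2)\to(Y_1',Y_2')$ (conditioning first on $Y_2$, then on $Y_1'$), which is the same idea with the two steps swapped. Your explicit verification that the conditioned Markov chain $(U,Y_2{=}y_2)\to X_1\to Y_1$ retains the original kernel $P_{Y_1|X_1}$ is a point the paper glosses over, and your derivation of the $n$-letter consequence via Proposition~\ref{prop:etakl_ec} plus induction matches the paper exactly.
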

\begin{proof}
Construct a relevant joint distribution $U\to X^2 \to (Y^2, Y'^2)$ and consider
\begin{equation}\label{eq:uuff}
	I(U; Y_1,Y_2) = I(U; Y_1) + I(U; Y_2|Y_1)\,.
\end{equation}
Now since $U \dperp Y_2 | Y_1$ we have by $P_{Y_2|X_2} \le_{l.n.} P_{Y'_2|X_2}$
$$ I(U; Y_2|Y_1) \le I(U; Y_2'|Y_1) $$
and putting this back into~\eqref{eq:uuff} we get
$$ I(U; Y_1, Y_2) \le I(U; Y_1) + I(U; Y_2'|Y_1)=I(U; Y_1, Y_2')\,.$$
Repeating the same argument, but conditioning on $Y_2'$ we get
$$ I(U; Y_1, Y_2) \le I(U; Y_1', Y_2')\,,$$
as required. The last claim of the proposition follows from \prettyref{prop:etakl_ec}.
\end{proof}

Consequently, everything that has been said in this paper about $\etaKL(P_{Y|X})$ can be restated in terms of seeking to
compare a given channel in the sense of the $\le_{l.n.}$ order to an erasure channel. It seems natural, then, to
consider erasure channel in somewhat greater details.

\subsection{$F_I$-curve of erasure channels}

\begin{theorem}\label{th:fiec} Consider the $q$-ary erasure channel of blocklength $n$ and erasure probability $\delta$. Its $F_I$-curve is bounded by
\begin{equation}\label{eq:fiec}
	F_I^c(t, \EC_q(\delta)^n) \le \EE[\min(B\log q,t)], \qquad B\sim\Binom(n, 1-\delta) \,.
\end{equation}
The bound is tight in the following cases:
\begin{enumerate}
\item at $t=k\log q$ with integral $k\le n$ if and only if an $(n,k,n-k+1)_q$ MDS code exists\footnote{We remind that a
subset $\matc$ of $[q]^n$ is called an $(n,k,d)_q$ code if $|\matc|=q^k$ and Hamming distance between any two points
from $\matc$ is at least $d$. A code is called maximum-distance separable (MDS) if $d=n-k+1$. This is equivalent to
the property that projection of $\matc$ onto any subset of $k$ coordinates is bijective.}
\item for $t\le \log q$ and $t\ge (n-1)\log q$;
\item for all $t$ when $n=1,2,3$.
\end{enumerate}
\end{theorem}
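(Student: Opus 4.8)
The plan is to establish the upper bound~\eqref{eq:fiec} by a direct convexity argument using the explicit structure of the erasure channel, and then to handle the three tightness claims by exhibiting (or ruling out) optimal input distributions. For the upper bound, fix any $P_{UX^n}$ with $I(U;X^n)\le t$. The output $E^n$ of $\EC_q(\delta)^n$ reveals a random subset $S\subseteq[n]$ of coordinates (the non-erased ones, $|S|=B\sim\Binom(n,1-\delta)$), chosen independently of $(U,X^n)$, together with $X_S$. Hence $I(U;E^n)=\sum_{s}\Prob[S=s]\,I(U;X_s)$, and by the chain rule and the data-processing / subadditivity of mutual information over the coordinates in $s$, $I(U;X_s)\le\min\bigl(|s|\log q,\ I(U;X^n)\bigr)\le\min(|s|\log q,t)$. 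Averaging over $S$ gives $I(U;E^n)\le\Expect[\min(B\log q,t)]$, and since the right-hand side is concave in $t$ (it is a mixture of the concave functions $t\mapsto\min(k\log q,t)$), the same bound holds for $F_I^c$ by the concavification characterization in Definition~\ref{def:FI}. The only mild subtlety is justifying $I(U;X_s)\le|s|\log q$ when $q$ may be infinite — but the theorem is stated for the $q$-ary channel, so $|\calX|=q<\infty$ and this is immediate.

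For the tightness at $t=k\log q$ (item 1), the strategy is: given an $(n,k,n-k+1)_q$ MDS code $\calC$, take $U$ uniform on $\calC$ and $X^n=U$ (noiseless encoder), so $I(U;X^n)=H(U)=k\log q=t$. The MDS property — equivalently, that the projection of $\calC$ onto any $k$ coordinates is a bijection — means that whenever $|S|\ge k$ we recover $U$ exactly, i.e. $I(U;X_S)=k\log q$, and whenever $|S|=j<k$ the same projection property forces $X_S$ to be uniform on $[q]^j$ with $U$ uniform on the $q^{k-j}$ preimages, giving $I(U;X_S)=j\log q$. Averaging yields $I(U;E^n)=\Expect[\min(B\log q,k\log q)]=\Expect[\min(B\log q,t)]$, matching the bound. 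Conversely, if equality holds at $t=k\log q$, one argues that an optimal $P_{UX^n}$ must (after reducing to the case $X^n=U$, which is optimal for erasure channels since only $X_S$ matters) have the property that $X_S$ determines $U$ for every $|S|=k$ while $H(U)=k\log q$; a counting/entropy argument then shows the support of $U$ is an MDS code. Items 2 and 3 follow by producing the requisite codes: for $t\le\log q$ take $U$ uniform on a subset of one coordinate (a trivial $(n,1,n)_q$ repetition code always exists), and for $t\ge(n-1)\log q$ a single parity check gives an $(n,n-1,2)_q$ MDS code; for $n=1,2,3$ one checks that MDS codes of every admissible dimension exist over every alphabet size (e.g. using that a $3\times3$ circulant or a short Reed–Solomon-type construction works, or directly: $(1,1,1)_q$, $(2,1,2)_q$, $(2,2,1)_q$, $(3,1,3)_q$, $(3,2,2)_q$, $(3,3,1)_q$ all exist), and then interpolate linearly between the tight points using concavity of both sides.

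I expect the main obstacle to be the \emph{converse} direction of item 1 — showing that equality at $t=k\log q$ \emph{forces} an MDS code to exist, rather than merely that an MDS code suffices. One has to first argue that an optimal input distribution may be taken with $X^n$ a deterministic function of $U$ (or even $X^n=U$), then track the equality conditions through the averaging $I(U;E^n)=\sum_s\Prob[S=s]I(U;X_s)\le\sum_s\Prob[S=s]\min(|s|\log q,t)$: equality requires $I(U;X_s)=|s|\log q$ for every $s$ with $|s|\le k$ and $I(U;X_s)=k\log q$ for every $s$ with $|s|\ge k$ (all such $s$ having positive probability since $0<\delta<1$). The condition $I(U;X_s)=k\log q=H(U)$ for all $|s|=k$ says every $k$-subset of coordinates determines the codeword, which is exactly the MDS property. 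Making "optimal $\Rightarrow X^n=U$" precise, and ensuring $H(U)=k\log q$ (so that $U$ is uniform on a set of size $q^k$, not merely of entropy $k\log q$ over a larger support), are the points requiring care; the interpolation arguments for items 2–3 and the upper bound itself are routine.
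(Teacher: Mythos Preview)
Your proposal is correct and matches the paper's proof essentially step for step: the upper bound via $I(U;E^n)=\sum_\sigma \PP[S=\sigma]\,I(U;X_\sigma)\le\EE[\min(B\log q,t)]$, achievability by taking $U=X^n$ uniform on an MDS code, and the converse of item~1 via the equality conditions in that averaging. For the two points you flag as needing care, the paper's resolutions are: (i) since the upper bound is piecewise-linear with a kink at $t=k\log q$, tightness for $F_I^c$ there forces tightness for $F_I$ itself, so an optimizing $P_{UX^n}$ exists; then equality demands $I(U;X_i)=\log q$ for each $i$, hence $H(X_i|U)=0$, so $X^n$ is a deterministic function of $U$ and one may replace $U$ by $X^n$ (your ``only $X_S$ matters'' heuristic is not quite the argument); (ii) then $H(X_\sigma)=k\log q$ with $X_\sigma\in[q]^k$ forces $X_\sigma$ uniform on $[q]^k$, and $H(X^n)=H(X_\sigma)$ forces the projection onto every $k$-subset to be injective---precisely the MDS property.
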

\begin{remark} Introducing $B'\sim\Binom(n-1,1-\delta)$ and using the identity $\EE[B \indc{B\le a}] =
n(1-\delta)\PP[B'\le a-1]$, we can express the right-hand side of~\eqref{eq:fiec} in terms of binomial CDFs:
$$ \EE[\min(B,x)] = x + \PP[B'\le \lfloor x\rfloor-1] (1-\delta)(n-x) - x\delta \PP[B'\le \lfloor x \rfloor ] $$
This implies that the upper bound \eqref{eq:fiec} is piecewise-linear, increasing and concave.
\end{remark}

\begin{proof} Consider arbitrary $U\to X^n \to E^n$ with $P_{E^n|X^n}=\EC_q(\delta)^n$.
Let $S$ be random subset of $[n]$ which includes each $i\in[n]$ independently with probability $1-\delta$.
A direct computation, shows that 
	\begin{align} I(U; E^n) &= I(U; X_S, S) = \sum_{\sigma \subset [n]} \PP[S=\sigma] I(U; X_\sigma)
	\label{eq:fiec2a} \\
		     &\le \sum_{\sigma \subset [n]} \PP[S=\sigma] \min(|\sigma|\log q, t) = \EE[\min(B\log q,t)]\,.
		     \label{eq:fiec2}
\end{align}
From here~\eqref{eq:fiec} follows by taking supremum over $P_{U,X^n}$. 

Claims about tightness follow by constructing $U=X^n$ and taking $X^n$ to be the output of the MDS code (so that
$H(X_\sigma) = \min(|\sigma| \log q, t)$) and invoking the concavity of $F_I(t)$. One also notes that $[n,1,n]_q$
(repetition code) and $[n, n-1, 2]$ (single parity check code) show tightness at $t=\log q$ and $t=(n-1)\log q$.

Finally, we prove that when $t=k\log q$ and the bound~\eqref{eq:fiec} is tight then a (possibly non-linear)
$(n,k,n-k+1)_q$ MDS code must exist. First, notice that the right-hand side of~\eqref{eq:fiec} is a piecewise-linear and
concave function. Thus the bound being tight for $F_I(t)$ (that is a concave-envelope of $F_I(t)$) should also
be tight as a bound for $F_I(t)$. Consequently, there must exist $U\to X^n\to E^n$ such that the
bound~\eqref{eq:fiec2} is tight with $t=I(U;X^n)$. This implies that we should have 
\begin{equation}\label{eq:fiec3}
	I(U;X_\sigma)= \min(\sigma \log q, t) 
\end{equation}
for all $\sigma \subset [n]$. In particular, we have $I(U;X_i)=\log q$ and thus $H(X_i|U)=0$ and without loss of
generality we may assume that $U=X^n$. Again from~\eqref{eq:fiec3} we have that $H(X^n) = H(X^k)=k \log q$. This implies
that $X^n$ is a uniform distribution on a set of size $q^k$ and projection on any $k$ coordinates is injective. This is
exactly the characterization of an MDS code (possibly non-linear) with parameters $(n, k, n-k+1)_q$.
\end{proof}

We also formulate some interesting observations for binary erasure channels:
\begin{proposition}\label{prop:fi_bec} For $\BEC(n,\delta)$ we have:
\begin{enumerate}
\item For $n\ge 3$ we have that $F_I(t)$ is not concave. More exactly, $F_I(t) < F_I^c(t)$ for $t\in(1,2)$. 
\item For arbitrary $n$ and $t\le \log 2$ or $t\ge (n-1)\log 2$ we have $F_I(t)=F_I^c(t)=\EE[\min(B \log 2,t)]$ with $B$ defined in
in~\eqref{eq:fiec}.
\item For $t=2,n=4$ the bound~\eqref{eq:fiec} is not tight and $F_I^c(t) < \EE[\min(B\log2,t)]$.
\end{enumerate}
\end{proposition}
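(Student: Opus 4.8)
The plan is to establish the three claims about $\BEC(n,\delta)=\EC_2(\delta)^n$ by combining Theorem~\ref{th:fiec} with explicit small-blocklength computations of $F_I$. For item 2 there is nothing new to do: this is exactly the content of the tightness cases ``$t\le\log q$'' and ``$t\ge(n-1)\log q$'' of Theorem~\ref{th:fiec} specialized to $q=2$, so I would simply cite that result. Items 1 and 3 are the substantive ones, and both amount to showing that the upper bound $t\mapsto\EE[\min(B\log 2,t)]$ in~\eqref{eq:fiec} is \emph{strictly} above $F_I$ (resp.\ $F_I^c$) at a particular point, which requires producing a better upper bound on $F_I$ at that point than the one Theorem~\ref{th:fiec} gives.

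For item 1 (non-concavity when $n\ge 3$), I would first observe that $F_I$ of $\BEC(n,\delta)$ is piecewise-defined via the erasure-channel structure: for a joint law $U\to X^n\to E^n$ we have $I(U;E^n)=\sum_\sigma\PP[S=\sigma]I(U;X_\sigma)$ as in~\eqref{eq:fiec2a}. Fix $n=3$ first. At $t=1=\log 2$ the bound is tight (repetition code), and at $t=2=(n-1)\log 2$ it is tight (single parity check), so $F_I(1)=F_I^c(1)$ and $F_I(2)=F_I^c(2)$ equal the values of the piecewise-linear envelope $\EE[\min(B,t)]$ there. The key step is to show $F_I(t)$ lies \emph{strictly below} the chord joining these two points for $t\in(1,2)$. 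To do this I would bound $I(U;X_\sigma)$ for $|\sigma|=2$: given $I(U;X^3)\le t\le 2$ and $I(U;X_i)\le 1$ for each singleton, the pairwise terms $I(U;X_{ij})$ cannot simultaneously all be close to $2$ — intuitively, if $U$ determines each coordinate pair then it determines all three coordinates and $I(U;X^3)$ would have to be at least $2$, but one can get a quantitative submodularity-type inequality. Concretely, $I(U;X_{12})+I(U;X_{13})+I(U;X_{23})\le 2 I(U;X^3) + (\text{singleton terms})$ or a similar entropy inequality (e.g.\ via $H(X_{12})+H(X_{13})+H(X_{23})\le 2H(X^3)$, Han-type), which after substituting into~\eqref{eq:fiec2a} yields $F_I(t)$ strictly below the envelope on $(1,2)$. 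The general $n\ge 3$ case then follows because $\BEC(n,\delta)$ contains $\BEC(3,\delta)$ as a ``sub-channel'' (restrict $U$ to feed only three coordinates), or more simply because the same strict entropy inequality among size-$k$ and size-$(k+1)$ subsets persists. Since $F_I(1)=F_I^c(1)$ and $F_I(2)=F_I^c(2)$ but $F_I$ dips below the connecting chord, $F_I$ is not concave and $F_I<F_I^c$ on $(1,2)$.

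For item 3 ($n=4$, $t=2$), the claim is that no configuration achieving $I(U;X_\sigma)=\min(|\sigma|,2)$ for all $\sigma\subset[4]$ exists, so that the Theorem~\ref{th:fiec} bound $\EE[\min(B,2)]$ is not attained even by the concave envelope. By the tightness analysis in the proof of Theorem~\ref{th:fiec}, attaining the bound at $t=k\log q=2\log 2$ with $n=4$, $q=2$ would force the existence of a (possibly nonlinear) $(4,2,3)_2$ MDS code. But a binary MDS code with those parameters would require, by the Singleton bound being met with equality and the standard MDS/dual-MDS facts, parameters that don't exist over $\FF_2$ for $n=4$, $k=2$: an $(4,2,3)_2$ code has $4$ codewords in $\FF_2^4$ with pairwise distance $\ge 3$, and one checks by a short counting/packing argument (the Plotkin or Hamming bound, or direct enumeration) that four such words cannot be packed in $\{0,1\}^4$. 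Hence the bound is not tight; since $\EE[\min(B,2)]$ is already concave, the concave envelope $F_I^c$ must lie strictly below it at $t=2$.

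The main obstacle I anticipate is item~1: getting the \emph{strict} inequality $F_I(t)<F_I^c(t)$ on the whole open interval $(1,2)$, rather than just non-attainment at one point, requires an honest upper bound on $\sup_{P_{UX^n}}I(U;E^n)$ subject to $I(U;X^n)\le t$ that beats the naive per-subset bound $\min(|\sigma|,t)$. The cleanest route is a Han-type / submodularity inequality on $\{H(X_\sigma):|\sigma|\in\{k,k+1\}\}$ combined with the constraint $I(U;X^n)\le t$, and then optimizing the resulting linear program over the subset-probabilities $\PP[S=\sigma]=(1-\delta)^{|\sigma|}\delta^{n-|\sigma|}$; verifying that this LP value is strictly below the chord value on $(1,2)$ is the computational heart of the argument, though for the purposes of the proposition it suffices to exhibit strictness, which the submodularity gap provides as soon as $t$ is bounded away from the integer endpoints.
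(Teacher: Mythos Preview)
Your handling of items 2 and 3 is fine and matches the paper: item~2 is a direct citation of the tightness clauses of Theorem~\ref{th:fiec}, and item~3 follows from the MDS characterization there together with the nonexistence of a binary $(4,2,3)$ code.

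The real problem is item~1. Your plan is to produce a Han/submodularity-type inequality on $\{I(U;X_\sigma)\}$ that, plugged into~\eqref{eq:fiec2a}, forces $F_I(t)$ strictly below the envelope $\EE[\min(B,t)]$ on $(1,2)$. This cannot work. First, the inequality you wrote, $H(X_{12})+H(X_{13})+H(X_{23})\le 2H(X^3)$, has the wrong direction: Han's inequality gives $\ge$, not $\le$. Second, and more fundamentally, tightness of~\eqref{eq:fiec2} at a point $t\in(1,2)$ for $n=3$ forces (after reducing to $U=X^3$) the entropy profile
\[
H(X^3)=H(X_1,X_2)=H(X_2,X_3)=H(X_1,X_3)=t,\qquad H(X_1)=H(X_2)=H(X_3)=1,
\]
and this vector \emph{satisfies every Shannon-type (submodularity) inequality}; indeed by~\cite{zhang1997non} it lies in the closure of the entropy region and is approached by triples over large alphabets. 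So no entropy/submodularity argument can rule it out---the obstruction is specific to the \emph{binary} alphabet, and you must use that constraint directly.

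The paper's argument does exactly this. It first notes that the supremum defining $F_I$ is attained (so $F_I(t)=\EE[\min(B,t)]$ would force a maximizer to exist), reduces to $U=X^3$, and then shows by hand that no binary triple realizes the entropy profile above: $H(X_i)=1$ forces uniform marginals, so $X_2=X_1+Z$ with $Z\sim\Bern(p)$; $H(X_3\mid X_1,X_2)=0$ forces $X_3=f(X_1,X_2)$; and a short case check on $f$ (it must be nonconstant in each argument) leaves only $X_3=X_1+X_2$ or its complement, contradicting $H(X_3)=1$. Since the bound~\eqref{eq:fiec} equals $F_I^c$ for $n=3$ by Theorem~\ref{th:fiec}, this yields $F_I(t)<F_I^c(t)$ on $(1,2)$.
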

\begin{proof}

   First note that in Definition~\ref{def:FI} of $F_I(t)$ the supremum is a maximum and 
   and $U$ can be restricted to alphabet of size $|\calX|+2$. So in
   particular, $F_I(t) = f$ if and only if there exists $I(U;Y^n)=f$,
   $I(U;X^n)\le t$.

   Now consider $t\in(1,2)$ and $n=3$ and suppose $(U,X^n)$ achieves the
   bound. For the bound to be tight we must have $I(U;X^3)=t$. For the bound to
   be tight we must have $I(U;X_i)=1$ for all $i$, that is $H(X_i)=1$, $H(X_i|U)=0$ and $H(X^n|U)=0$. Consequently, 
   without loss of generality we may take $U=X^n$. So for the bound to be tight we need to find a distribution s.t.   
   \begin{equation}\label{eq:stran} H(X^3)=H(X_1,X_2)=H(X_2,X_3)=H(X_1,X_3)=t, H(X_1)=H(X_2)=H(X_3)=1 .
\end{equation}   
   It is straightforward to verify that this set of entropies satisfies Shannon inequalities (i.e. submodularity of
   entropy checks), so the main result of~\cite{zhang1997non} shows that there does exist a sequence of triples $X^3$
   (over large alphabets) which attains this point. We will show, however, that this is impossible for binary-valued
   random variables.  First, notice that the set of achievable entropy vectors by binary triplets is a closed subset of
   $\mreals_+^7$ (as a continuous image of a compact set). Thus, it is sufficient to show that~\eqref{eq:stran} itself
   is not achievable.

   Second, note that for any pair $A,B$ of binary random variables with uniform marginals we must have
   $$ A = B+Z, \qquad B\dperp Z \sim \Bern(p)\,.$$
   Without loss of generality, assume that $X_2=X_1+Z$ where $H(Z) = t-1 > 0$. Moreover, $H(X_3|X_1,X_2)=0$ implies that
   $X_3=f(X_1,X_2)$ for some function $f$.
 
   Given $X_1$ we have $H(X_3|X_1=x)=H(X_3|X_2=x)=t-1 > 0$.
   So the function $X_1\mapsto f(X_1,x)$ should not be constant for either
   choice of $x\in\{0,1\}$ and the same holds for $X_2\mapsto f(x,X_2)$. Eliminating cases leaves us
   with $f = X_1 + X_2$ or $f=X_1+X_2+1$. But then $X_3=X_1+X_2=Z$ and  $H(X_3) < 1$, which is a contradiction.
   
   Since by Theorem~\ref{th:fiec} we know that the bound~\eqref{eq:fiec} is tight for $F_I(t)$ we conclude
   that 
   	$$ F_I(t) < F_I^c(t), \quad \forall t\in(1,2)\,. $$

     To show the second claim consider $U=X^n$ and $X_1 = \cdots = X_n \sim \mathrm{Bern}(p)$ for $t\le \log 2$. For $t\ge (n-1)\log2$
     take $X^{n-1}$ to be iid $\mathrm{Bern}({1\over2})$ and 
     $$ X_n = X_1 + \cdots + X_{n-1} + Z\,,$$
     where $Z \sim \mathrm{Bern}(p)$. This yields $I(U;X_\sigma) = H(X_{\sigma})=|\sigma| \log 2$ for every subset
     $\sigma \subset [n]$ of size up to $n-1$. Consequently, the bound~\eqref{eq:fiec} must be tight.

	Finally, third claim follows from Theorem~\ref{th:fiec} and the fact that there is no $[4,2,3]$ binary code,
	e.g.~\cite[Corollary 7, Chapter 11]{macwilliams1977theory}.
\end{proof}

Putting together~\eqref{eq:uuff0} and~\eqref{eq:fiec} we get the following upper bound on the concavified $F_I$-curve of $n$-letter product channels in terms of the contraction coefficient of the single-letter channel. 
\begin{corollary} If $\etaKL(P_{Y|X})=\eta$, then
$$ F_I^c(t, P_{Y|X}^n) \le \EE[\min(B\log q,t)], \qquad B\sim\Binom(n, 1-\delta)\,. $$
\end{corollary}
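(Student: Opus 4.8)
The plan is to chain together \prettyref{prop:tenso} and \prettyref{th:fiec}. Throughout, $q = |\calX|$ denotes the input alphabet size, and the parameter $\delta$ in the statement is understood to be $\delta = 1-\eta$, so that $1-\delta = \eta$ and $B\sim\Binom(n,\eta)$. Since $\etaKL(P_{Y|X})=\eta$, the implication \eqref{eq:uuff0} yields the less-noisy domination $P_{Y|X}^n \le_{l.n.} P_{E|X}^n$, where $P_{E|X}$ is the $q$-ary erasure channel on the same input alphabet with erasure probability $1-\eta = \delta$. Thus the $n$-fold product channel is dominated (in the $\le_{l.n.}$ sense) by $\EC_q(\delta)^n$, and it remains to transfer this domination to the level of $F_I^c$-curves and then invoke \prettyref{th:fiec}.

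The next step is to observe that the less-noisy order passes to the concavified $F_I$-curve. Fix any $P_{VUX^n}$ with $I(U;X^n|V)\le t$, and build a joint law $U\to X^n \to (Y^n,E^n)$ in which $Y^n$ and $E^n$ are the outputs of $P_{Y|X}^n$ and $\EC_q(\delta)^n$, conditionally independent given $X^n$. For each realization $V=v$, applying the definition of $\le_{l.n.}$ to the conditional input law $P_{UX^n|V=v}$ gives $I(U;Y^n|V=v)\le I(U;E^n|V=v)$; averaging over $v\sim P_V$ yields $I(U;Y^n|V)\le I(U;E^n|V)$, while the constraint $I(U;X^n|V)\le t$ is untouched. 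Taking the supremum over all such $P_{VUX^n}$ — which, by \prettyref{def:FI}, is exactly what realizes the concavification — gives
$$ F_I^c(t, P_{Y|X}^n) \le F_I^c(t, \EC_q(\delta)^n)\,. $$
Applying \prettyref{th:fiec} to bound the right-hand side by $\EE[\min(B\log q,t)]$ with $B\sim\Binom(n,1-\delta)$ completes the proof.

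Alternatively, one can sidestep the concavification bookkeeping entirely: combining the coupling supplied by $\le_{l.n.}$ with the chain of (in)equalities \eqref{eq:fiec2a}--\eqref{eq:fiec2} from the proof of \prettyref{th:fiec}, one obtains directly, for every $U\to X^n\to Y^n$ with $I(U;X^n)\le t$, the bound $I(U;Y^n)\le I(U;E^n) = I(U;X_S,S) \le \EE[\min(B\log q,t)]$; since the right-hand side is concave in $t$, this immediately upgrades to a bound on $F_I^c$. There is essentially no obstacle in this argument — the only point that warrants a moment's care is the remark that $\le_{l.n.}$ is inherited by conditional mutual information (hence controls $F_I^c$, not merely $F_I$), which holds simply because conditioning on an auxiliary variable only restricts attention to a family of joint input distributions, each of which obeys the less-noisy inequality.
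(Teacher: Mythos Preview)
Your proof is correct and follows exactly the route the paper intends: combine the less-noisy comparison \eqref{eq:uuff0} from \prettyref{prop:tenso} with the erasure-channel bound \eqref{eq:fiec} from \prettyref{th:fiec}. The paper's own ``proof'' is the single sentence ``Putting together~\eqref{eq:uuff0} and~\eqref{eq:fiec}'', and you have correctly supplied the one detail it leaves implicit --- that the $\le_{l.n.}$ comparison passes to conditional mutual information (hence to $F_I^c$, not just $F_I$) because conditioning on $V=v$ merely selects a particular input law $P_{UX^n|V=v}$, to which the less-noisy inequality applies verbatim.
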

This gives an alternative proof of~\prettyref{cor:etaKL-fdbk} for the case of no feedback.

\subsection{Samorodnitsky's SDPI}
So far, we have been concerned with bounding the ``output'' mutual information in terms of a certain ``input'' one. However,
frequently, one is interested in bounding some ``output'' information given knowledge of several input ones. For
example, for the parallel channel we have shown that 
$$ I(W; Y^n) \le (1-(1-\etaKL(P_{Y|X}))^n) I(W;X^n)\,.$$
But it turns out that a stronger bound can be given if we have finer  knowledge about the joint distribution of $W$ and $X^n$.

The following bound can be distilled from~\cite{AS15-noisyfunc}:
\begin{theorem}[Samorodnitsky]\label{th:samor} Consider the Bayesian network
$$ U\to X^n \to Y^n\,,$$
where $P_{Y^n|X^n} = \prod_{i=1}^n P_{Y_i|X_i}$ is a memoryless channel with $\eta_i \eqdef \eta_{KL}(P_{Y_i|X_i})$.
Then we have
\begin{equation} I(U; Y^n) \le I(U; X_S|S) = I(U; X_S,S)\,,\label{eq:samor}
\end{equation}
where $S\dperp(U,X^n,Y^n)$ is a random subset of $[n]$ generated by independently sampling each element $i$ with
probability $\eta_i$. In particular, if $\eta_i=\eta$ for all $i$, then
\begin{equation}\label{eq:samor2}
	I(U; Y^n) \le \sum_{\sigma \subset [n]} \eta^{|\sigma|} (1-\eta)^{n-|\sigma|} I(U; X_\sigma) 
\end{equation}
\end{theorem}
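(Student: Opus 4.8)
The plan is to obtain the bound by assembling two facts already proved in this section: the less-noisy characterization of $\etaKL$ (\prettyref{prop:etakl_ec}) and its tensorization (\prettyref{prop:tenso}).

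First I would dominate each constituent channel by an erasure channel. Since $\eta_i = \etaKL(P_{Y_i|X_i})$, in particular $\etaKL(P_{Y_i|X_i}) \le \eta_i$, so \prettyref{prop:etakl_ec} gives $P_{Y_i|X_i} \le_{l.n.} P_{E_i|X_i}$, where $P_{E_i|X_i}$ is the erasure channel on the alphabet of $X_i$ with erasure probability $1-\eta_i$. Iterating \prettyref{prop:tenso} over the $n$ coordinates yields
$$ P_{Y^n|X^n} = \prod_{i=1}^n P_{Y_i|X_i} \;\le_{l.n.}\; \prod_{i=1}^n P_{E_i|X_i} =: P_{E^n|X^n}, $$
and applying the defining inequality of $\le_{l.n.}$ to the given joint law $P_{UX^n}$ produces $I(U;Y^n) \le I(U;E^n)$.

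Second I would identify $I(U;E^n)$ with $I(U;X_S,S)$. Let $S \eqdef \{i\in[n]: E_i \ne\, ?\}$ be the set of unerased coordinates. By the construction of the product erasure channel, $S$ is independent of $(U,X^n,Y^n)$ and contains each $i$ independently with probability $1-(1-\eta_i)=\eta_i$, exactly the subset in the statement; moreover $E^n$ and the pair $(S,X_S)$ determine each other (the inverse map fills the complement of $S$ with the erasure symbol), so $I(U;E^n) = I(U;S,X_S)$. Since $S\indep U$ we have $I(U;S)=0$, hence $I(U;S,X_S) = I(U;X_S\mid S)$, which is~\eqref{eq:samor}. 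For the specialization~\eqref{eq:samor2} with $\eta_i\equiv\eta$, expand $I(U;X_S\mid S) = \sum_{\sigma\subset[n]}\PP[S=\sigma]\,I(U;X_\sigma\mid S=\sigma)$ with $\PP[S=\sigma] = \eta^{|\sigma|}(1-\eta)^{n-|\sigma|}$; because $S\indep(U,X^n)$, conditioning on $\{S=\sigma\}$ leaves the joint law of $(U,X_\sigma)$ unchanged, so $I(U;X_\sigma\mid S=\sigma) = I(U;X_\sigma)$.

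There is no genuine obstacle here once \prettyref{prop:etakl_ec} and \prettyref{prop:tenso} are in hand; the only points that need care are bookkeeping in the second step — that an erasure probability $1-\eta_i$ makes coordinate $i$ \emph{retained} with probability $\eta_i$, and that $E^n$ is informationally equivalent to $(S,X_S)$ rather than a lossy function of it — together with the independence $S\indep(U,X^n)$ used at the end. (One could alternatively prove~\eqref{eq:samor} by a direct induction on $n$, peeling off $Y_n$ and invoking the single-letter SDPI conditionally on $Y^{n-1}$, but the route through less-noisy domination is shorter and reuses machinery already developed.)
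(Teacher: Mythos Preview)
Your proof is correct and follows essentially the same route as the paper: invoke \prettyref{prop:etakl_ec} to dominate each coordinate channel by an erasure channel, tensorize via \prettyref{prop:tenso} to obtain $I(U;Y^n)\le I(U;E^n)$, and then identify $I(U;E^n)=I(U;X_S,S)$ as in~\eqref{eq:fiec2a}. Your parenthetical remark about the alternative inductive proof is also apt — the paper records exactly that argument as a superseded version.
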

\begin{proof} Just put together characterization~\eqref{eq:etakl_ec}, tensorization property~\prettyref{prop:tenso} to
get $I(U;Y^n) \le I(U; E^n)$, where $E^n$ is the output of the product of erasure channels with erasure probabilities
$1-\eta_i$. Then the calculation~\eqref{eq:fiec2a} completes the proof.
\end{proof}
\begin{remark} Let us say that ``total'' information $I(U;X^n)$ is distributed among subsets
of $[n]$ as given by the following numbers: 
$$ I_k \eqdef {n \choose k}^{-1} \sum_{T\in {[n] \choose k}} I(U; X_T)\,.$$
Then bound~\eqref{eq:samor2} says (replacing $\Binom(n,\eta)$ by its mean value $\eta n$):
$$ I(U;Y^n) \simleq I_{\eta n}\,. $$
Informally: the only kind of information about $U$ that has a chance to be inferred on the basis of $Y^n$ is one that is
contained in subsets of $X$ of size at most $\eta n$. 
\end{remark}
\begin{remark} Another implication of the Theorem is a strengthening of the Mrs.~Gerber's Lemma. Fix a
single-letter channel $P_{Y|X}$ and suppose that for some increasing \textit{convex} function $m(\cdot)$ and all random variables
$X$ we have
$$ H(Y) \ge m(H(X))\,.$$
Then, in the setting of the Theorem we have
\begin{equation}\label{eq:mgl_new}
	H(Y^n) \ge m\left( {1\over \eta n} H(X_S |S)\right)\,.
\end{equation}
Note that by Han's inequality~\eqref{eq:mgl_new} is strictly better than the simple consequence of the chain rule:
$H(Y^n) \ge n m(H(X^n)/n)$. For the case of $P_{Y|X}=\BSC(\delta)$ the bound~\eqref{eq:mgl_new} is a sharpening of the
Mrs.~Gerber's Lemma, and has been the focus of~\cite{AS15-noisyfunc}, see
also~\cite{OO16-isit}. To prove~\eqref{eq:mgl_new} let $X^n \to E^n$ be $\EC(1-\eta)$. Then, by \prettyref{th:samor} applied to
$U=X_i$, $n=i-1$ we have
$$ H(X_i|Y^{i-1}) \ge  H(X_i| E^{i-1})\,.$$
Thus, from the chain rule and convexity of $m(\cdot)$ we obtain
$$ H(Y^n) = \sum_i H(Y_i|Y^{i-1}) \ge n m\left({1\over n} \sum_i H(X_i |E^{i-1})\right)\,,$$
and the proof is completed by computing $H(E^n)$ in two ways:
\begin{align*} nh(\eta) + H(X_S|S) &= H(E^n) \\
	&= \sum_i H(E_i |E^{i-1}) = \sum_i h(\eta) + \eta H(X_i|E^{i-1})\,.
\end{align*}	
\end{remark}

\begin{remark} Using Proposition~\ref{prop:ptt} we may also state a divergence version of the Theorem: In the setting of
Theorem~\ref{th:samor} for any pair of distributions $P_{X^n}$ and $Q_{X^n}$ we have
	$$ D(P_{Y^n} \| Q_{Y^n}) \le D(P_{X_S|S} \| Q_{X_S|S}|P_S)\,.$$
Similarly, we may extend the argument in the previous remark: If for a fixed $Q_X,Q_Y$ (not necessarily related by
$P_{Y|X}$) there exists an increasing concave function
$f$ such that for all $P_X$ and $P_Y=P_{Y|X}\circ P_X$ we have
	$$ D(P_X \| Q_X) \le f(D(P_Y \| Q_Y)) \qquad \forall P_X$$
	then
	$$ D(P_{Y^n} \| (Q_Y)^n) \le n f\left({1\over \eta n} D(P_{X_S|S} \| \prod_{i\in S} Q_{X}|P_S)\right)\,.$$
\end{remark}
\apxonly{
\begin{proof}[old] Case of $n=1$ follows from~\eqref{eq:etaKL}. Then proceed by induction. A single step of
Evans-Schulman~\eqref{eq:mr1} yields
\begin{align} I(U;Y^n) &= I(U; Y^{n-1}) + I(U;Y_n | Y^{n-1}) \\
	&\le (1-\eta_n) I(U; Y^{n-1}) + \eta_n I(U; X_n , Y^{n-1})\,.\label{eq:mr1X}
\end{align}	
Now, letting $S'$ be a random subset of $[n-1]$ generated as $\prod_{i=1}^{n-1}\mathrm{Bern}(\eta_i)$ and applying induction
hypothesis for the second term we have\apxonly{\footnote{In first line we use the fact that there is no feedback.}}
\begin{align} I(U; X_n, Y^{n-1}) &= I(U; X_n) + I(U; Y^{n-1}|X_n)\\
	&\le I(U; X_n) + I(U; X_{S'}|S', X_n) \\
	&= I(U; X_n) + I(U; X_{S' \cup \{n\}} | S') - P[S' \neq \emptyset] I(U; X_n)\\
	&= \PP[S' = \emptyset] I(U; X_n) + I(U; X_{S' \cup \{n\}}|S')\,.\label{eq:mr2X}
\end{align}
Plugging~\eqref{eq:mr2X} into~\eqref{eq:mr1X} and applying induction hypothesis to $I(U;Y^{n-1})$ we get
\begin{align} 
	I(U;Y^n) &\le (1-\eta_n)I(U; X_{S'}|S') + \eta_n \PP[S' = \emptyset] I(U; X_n) + \eta_n I(U; X_{S' \cup \{n\}}|S')
\end{align}
It is clear that right-hand side of the latter equals~\eqref{eq:samor}.
\end{proof}
}

\section*{Acknowledgment}

We thank Prof.~M.~Raginsky for references~\cite{BK98-infer,dawson1975information,SG79} and Prof.~A.~Samorodnitsky for
discussions on Proposition~\ref{prop:samor} with us. We also thank Aolin Xu for pointing out~\eqref{eq:etakl_tight}.
We are grateful to an anonymous referee for helpful comments.

\appendix

\section{Contraction coefficients on general spaces}
	\label{app:general}

\subsection{Proof of \prettyref{thm:eta_lb}}
	\label{app:eta_lb}
	We show that 
	\begin{equation}
	\eta_f(P_{Y|X},P_X) = \sup_{Q_X} \frac{D_f(Q_{Y} \| P_{Y})}{D_f(Q_X\|P_X)} \geq \etachi(P_{Y|X},P_X) = \sup_{Q_X} \frac{\chi^2(Q_{Y} \| P_{Y})}{\chi^2(Q_X\|P_X)},
	\label{eq:KLchi-P}
\end{equation}
where both suprema are over all $Q_X$ such that the respective denominator is in $(0,\infty)$. With the assumption that
$P_X$ is not a point mass, namely, there exists a measurable set $E$ such that $P_X(E) \in (0,1)$, it is clear that such
$Q_X$ always exists. For example, let $Q_X = \frac{1}{2}(P_X+P_{X|X\in E})$, where $P_{X|X\in E}(\cdot) \eqdef
{P_X(\cdot \cap E)\over P_X(E)}$. Then $\frac{1}{2} \leq \fracd{Q_X}{P_X} \leq \frac{1}{2}(1+\frac{1}{P_X(E)})$ and hence $D_f(Q_X\|P_X) < \infty$ since $f$ is continuous. Furthermore, $Q_X \neq P_X$ implies that $D_f(Q_X\|P_X) \neq 0$ \cite{IC67}. 

	The proof follows that of \cite[Theorem 5.4]{CIR93} using the local quadratic behavior of $f$-divergence; however, in order to deal with general alphabets, additional approximation steps are needed to ensure the likelihood ratio is bounded away from zero and infinity. 
	
	Fix $Q_X$ such that $\chi^2(Q_X\|P_X) < \infty$. Let $A=\{x: \fracd{Q_X}{P_X}(x) < a\}$ where $a>0$ is sufficiently large such that $Q_X(A) \geq 1/2$. 
Let $ Q_X' = Q_{X|X\in A}$ and $Q'_Y = P_{Y|X} \circ Q_X'$. Then $\fracd{Q'_Y}{P_Y} \leq \frac{a}{Q_X(A)} \leq 2a$.
Let $Q_X'' = \frac{1}{a} P_X + (1-\frac{1}{a}) Q_X'$ and $Q_Y''= P_{Y|X} \circ Q_X' = \frac{1}{a} P_Y + (1-\frac{1}{a}) Q_Y'$.
Then we have 
\begin{equation}
	\frac{1}{a} \leq \fracd{Q_X''}{P_X} \leq 2a + \frac{1}{a}, \quad \frac{1}{a} \leq \fracd{Q_Y''}{P_Y} \leq 2a + \frac{1}{a}.
	\label{eq:bddLLR}
\end{equation}
Note that
$\chi^2(Q_X'\|P_X) = \frac{1}{Q(X\in A)} \Expect_P[ (\fracd{Q_X}{P_X})^2 \indc{X\in A}]-1$. By dominated convergence theorem, 
$\chi^2(Q_X'\|P_X) \to \chi^2(Q_X\|P_X)$ as $a\diverge$.
On the other hand,
	since $Q'_Y \to Q_Y$ pointwise, the weak lower-semicontinuity of $\chi^2$-divergence yields
	 $\liminf_{a\to\infty} \chi^2(Q_Y'\|P_Y) \ge \chi^2(Q_Y\|P_Y)$. 
Furthermore, using the simple fact that $\chi^2(\epsilon P + (1-\epsilon) Q \|P) = (1-\epsilon)^2 \chi^2(Q\|P)$, we have $\frac{\chi^2(Q_X''\|P_X)}{\chi^2(Q_Y''\|P_Y)} = \frac{\chi^2(Q_X'\|P_X)}{\chi^2(Q_Y'\|P_Y)}$.
Therefore, to prove \prettyref{eq:KLchi-P}, it suffices to show for each fixed $a$, for any $\delta>0$, there exists $\tilde P_X$ such that 
$\frac{D_f(\tilde P_{X} \| \tilde P_{Y})}{D_f(Q_X\|P_X)} \geq \frac{\chi^2(Q_X''\|P_X)}{\chi^2(Q_Y''\|P_Y)} - \delta$.

For $0 < \epsilon < 1$, let $\tilde P_X = \bar{\epsilon} P_X + \epsilon Q_X''$, which induces $\tilde P_Y = P_{Y|X} \circ \tilde P_X = \bar{\epsilon} P_Y + \epsilon Q_Y''$. Then
$D_f(\tilde P_X\|P_X) = \Expect_{P_X}[f(1 + \epsilon (\fracd{Q_X''}{P_X}-1))]$. Recall from \prettyref{eq:bddLLR} that
$\fracd{Q_X''}{P_X} \in [\frac{1}{a},\frac{1}{a}+2a]$. Since $f''$ is continuous and $f''(1)=1$, by Taylor's theorem and dominated convergence theorem, we have
$D_f(\tilde P_X\|P_X) = \frac{\epsilon^2 }{2} \chi^2(Q_X''\|P_X) (1+o(1))$. Analogously, $D_f(\tilde P_Y\|P_Y) = \frac{\epsilon^2 }{2} \chi^2(Q_Y''\|P_Y) (1+o(1))$.
This completes the proof of $\eta_f(P_X) \geq \etachi(P_X)$.

\begin{remark}
In the special case of KL divergence, we can circumvent the step of approximating by bounded likelihood ratio: 
By \cite[Lemma 4.2]{PWLN}, since $\chi^2(Q_Y\|P_Y) \leq \chi^2(Q_X\|P_X) < \infty$, we have
$D(\tilde P_X\|P_X) = \epsilon^2 \chi^2(Q_X\|P_X)/2 + o(\epsilon^2)$ and $D(\tilde P_Y\|P_Y) = \epsilon^2 \chi^2(Q_Y\|P_Y)/2 + o(\epsilon^2)$, as $\epsilon \to 0$. Therefore $\frac{\chi^2(Q_{Y} \| P_{Y})}{\chi^2(Q_X\|P_X)} \leq \lim_{\epsilon\to0} \frac{D(\tilde P_Y\|P_Y)}{D(\tilde P_X\|P_X)} \leq \etaKL(P_X)$. Therefore $\etaKL(P_X) \geq \etachi(P_X)$
\end{remark}

\subsection{Proof of \prettyref{thm:chi2KL}}
	\label{app:chi2KL}
We prove
\begin{equation}
		\etaKL = \etachi.
	\label{eq:KLchi}
\end{equation}
First of all, $\etaKL \geq \etachi$ follows from \prettyref{thm:eta_lb}.
For the other direction we closely follow the argument of \cite[Theorem 1]{CRS94}. Below we prove the following integral representation:
\begin{equation}
	D(Q\|P) = \int_0^\infty \chi^2(Q\|P^t) \diff t,
	\label{eq:KL-integral}
\end{equation}
where $P^{t} \triangleq \frac{tQ + P}{1+t}$. Then
\begin{align*}
D(Q_Y\|P_Y) 
= & ~ \int_0^\infty \chi^2(Q_Y\|P_Y^t)  \diff t	\\
\leq & ~ \int_0^\infty \etachi \cdot \chi^2(Q_X\|P_X^t)  \diff t = \etachi D(Q_X\|P_X). 
\end{align*}
where we used $P_Y^t=P_{Y|X} \circ P_X^t$.
It remains to check \prettyref{eq:KL-integral}. Note that 
\[
- \log x =  
\int_0^\infty \frac{1-x}{(x+t)(1+t)} \diff t
\]
Therefore
\[
D(Q\|P) = \int_0^\infty \frac{1}{1+t} \expects{\frac{\diff Q - \diff P}{\diff P +t \diff Q}}{Q} \diff t
\]
Note that $t \expects{\frac{\diff Q - \diff P}{\diff P +t \diff Q}}{Q} = - \expects{\frac{\diff Q - \diff P}{\diff P +t \diff Q}}{P}$. Therefore
$\expects{\frac{\diff Q - \diff P}{\diff P +t \diff Q}}{Q} = \frac{1}{1+t} \int \frac{(\diff Q - \diff P)^2}{\diff P + t \diff Q} = (1+t) \chi^2(Q\|P^t)$, completing the proof of \prettyref{eq:KL-integral}.

It is instructive to remark how this result was established for finite alphabets originally in~\cite{AG76}. Consider the
map
$$ P_X \mapsto V_r(P_X, Q_X) \eqdef D(P_{Y|X}\circ P_X \| P_{Y|X} \circ Q_X) - r D(P_X\|Q_X)\,. $$
A simple differentiation shows that Hessian of this map at $P_X$ is negative-definite if and only if $r>\eta_{\chi^2}(P_{Y|X}, P_X)$ and
negative semidefinite if and only if $r\ge \eta_{\chi^2}(P_{Y|X}, P_X)$ (note that this does not depend on $Q_X$). Thus, taking
$r=\eta_{\chi^2}(P_{Y|X})$ the map $P_X \mapsto V_r(P_X,Q_X)$ is concave in $P_X$ for all $Q_X$. Thus, its local extremum at $P_X=Q_X$
is a global maximum and hence $V_r(P_X,Q_X)\le 0$.

\subsection{Proof of \prettyref{thm:etaKLI}}
	\label{app:etaKLI}
	We shall assume that $P_X$ is not a point mass, namely, there exists a measurable set $E$ such that $P_X(E) \in (0,1)$. 
	Define
\[
\etaKL(P_X) = \sup_{Q_X} \frac{D(Q_{Y} \| P_{Y})}{D(Q_X\|P_X)}
\]
where the supremum is over all $Q_X$ such that $0<D(Q_X\|P_X) < \infty$.
It is clear that such $Q_X$ always exists (\eg, $Q_X = P_{X|X\in E}$ and $D(Q_X\|P_X) = \log \frac{1}{P_X(E)} \in (0,\infty)$). Let
\[
\eta_I(P_X) = 	\sup \frac{I(U; Y)}{I(U;X)}
\]
where the supremum is over all Markov chains $U\to X \to Y$ with fixed $P_{XY}$ such that $0<I(U;X)<\infty$. Such Markov chains always exist, \eg, $U=\indc{X\in E}$ and then $I(U;X)=h(P_X(E)) \in (0,\log 2)$.
The goal of this appendix is to prove~\eqref{eq:etaKL}, namely
$$ \etaKL(P_X) = \eta_I(P_X)\,.$$

The inequality $\eta_I(P_X) \leq \etaKL(P_X)$ follows trivially:
$$ I(U;Y) = D(P_{Y|U} \| P_Y |P_U) \le \etaKL(P_X) D(P_{X|U}\|P_X | P_U) = \etaKL(P_X) I(X;U)\,.$$
For the other direction, fix $Q_X$ such that $0<D(Q_X\|P_X) < \infty$. First, consider the case where $\fracd{Q_X}{P_X}$
is bounded, namely, $\fracd{Q_X}{P_X} \leq a$  for some $a>0$ $Q_X$-a.s. 
For any $\epsilon \leq \frac{1}{2a}$, let $U \sim \Bern(\epsilon)$ and define the probability measure $\tilde P_X = \frac{P_X-\epsilon Q_X}{1-\epsilon}$. Let $P_{X|U=0} = \tilde P_X$ and $P_{X|U=1} = Q_X$, which defines a Markov chain $U \to X \to Y$ such that $X,Y$ is distributed as the desired $P_{XY}$. Note that 
\[
\frac{I(U; Y)}{I(U;X)} = \frac{\bar \epsilon D(\tilde P_Y\|P_Y) + \epsilon D(Q_Y\|P_Y)}{\bar \epsilon D(\tilde P_X\|P_X) + \epsilon D(Q_X\|P_X)}
\]
where $\bar\epsilon=1-\epsilon$ and $\tilde P_Y  = P_{Y|X} \circ \tilde P_X$.
We claim that 
\begin{equation}
	D(\tilde P_X\|P_X) = o(\epsilon),
	\label{eq:oepsilon}
\end{equation} which, in view of the data processing inequality $D(\tilde P_X\|P_X) \leq D(\tilde P_Y\|P_Y)$, implies $\frac{I(U; Y)}{I(U;X)} \xrightarrow{\epsilon \downarrow 0} \frac{D(Q_Y\|P_Y)}{D(Q_X\|P_X)}$ as desired. 
To establish \prettyref{eq:oepsilon}, define the function
$$ f(x,\epsilon) \eqdef \begin{cases} {1-\epsilon x\over \epsilon(1-\epsilon)} \log{1-\epsilon x\over 1-\epsilon}\,, &
\epsilon > 0 \\
	(x-1) \log e, &\epsilon=0\,. \end{cases} $$
One easily notices that $f$ is continuous on $[0,a]\times[0,{1\over 2a}]$ and thus bounded. So we get, by
bounded convergence theorem,
$$ \frac{1}{\epsilon} D(\tilde P_X\|P_X) = \EE_{P_X}\left[f\left(\fracd{Q_X}{P_X}, \epsilon\right) \right] \to  \EE_{P_X}\left[\fracd{Q_X}{P_X}-1\right] \log e = 0\,. $$

To drop the boundedness assumption on $\fracd{Q_X}{P_X}$ we simply consider the conditional distribution $ Q_X' \eqdef Q_{X|X\in A}$
where $A=\{x: \fracd{Q_X}{P_X}(x) < a\}$ and $a>0$ is sufficiently large so that $Q_X(A) > 0$. Clearly, as $a\diverge$, we have $Q_X' \to Q_X$ and $Q'_Y \to Q_Y$
pointwise (i.e. $Q'_Y(E)\to Q_Y(E)$ for every measurable set $E$), where $Q'_Y \triangleq P_{Y|X} \circ Q'_X$. Hence the lower-semicontinuity of divergence yields
$$ \liminf_{a\to\infty} D(Q_Y'\|P_Y) \ge D(Q_Y\|P_Y)\,.$$
Furthermore, since $\fracd{Q'_X}{P_X} = \frac{1}{Q_X(A)} \fracd{Q_X}{P_X} \Indc_{A}$, we have
\begin{align}
D(Q'_X\|P_X)
= & ~ \log \frac{1}{Q_X(A)} + \frac{1}{Q_X(A)} \EE_Q\left[\log \fracd{Q_X}{P_X} \mathbf{1}\left\{\fracd{Q_X}{P_X} \leq a\right\}\right].
\end{align}
Since $Q_X(A)\to 1$, by dominated convergence (note: $\EE_Q[|\log \fracd{Q_X}{P_X}|] < \infty$) we have
$ D(Q'_X\|P_X) \to D(Q_X\|P_X) $.
Therefore,
$$ \liminf_{a\to\infty} {D(Q_Y'\|P_Y)\over D(Q_X'\|P_X)} \ge {D(Q_Y\|P_Y)\over D(Q_X\|P_X)}\,,$$
completing the proof.

\subsection{Proof of Proposition~\ref{prop:ptt}}\label{app:ptt}
First, notice the following simple result: 
\begin{align}\label{eq:kll2} 
	D(Q \|\lambda P + \bar\lambda Q) = o(\lambda), \lambda\to0 \quad \iff \quad P \ll Q 
\end{align}	
Indeed, if $P \not \ll Q$ then there is a set $E$ with $p=P[E]>0=Q[E]$. Denote the binary divergence by 
$d(p\|q) \triangleq D(\Bern(p)\|\Bern(q))$. 
Applying data-processing for
divergence to $X \mapsto 1_E(X)$, we get
$$ D(Q\| \lambda P + \bar\lambda Q) \ge d(0 \| \lambda p) = \log{1\over 1-\lambda p} $$
and the derivative at $\lambda\to0$ is non-zero. If $P\ll Q$, then let $f = {dP\over dQ}$ and notice
	$$ \log \bar\lambda \le \log (\bar \lambda + \lambda f) \le \lambda(f-1)\log e\,.$$
	Dividing by $\lambda$ and assuming $\lambda < {1\over 2}$ we get
		$$ \left| {1\over \lambda} \log(\bar \lambda + \lambda f) \right| \le C_1 f + C_2\,,$$
		for some absolute constants $C_1,C_2$.
	Thus, by the dominated convergence theorem we get
		$$ {1\over \lambda} D(Q \|\lambda P + \bar\lambda Q)  = - \int dQ \left({1\over \lambda} \log(\bar
		\lambda + \lambda f) \right) \to \int dQ(1-f) = 0\,.$$
Another observation is that  
	\begin{equation}\label{eq:kll3}
		\lim_{\lambda \to 0} D(P \|\lambda P + \bar\lambda Q) = D(P\|Q)\,,
\end{equation}	
	regardless of the finiteness of the right-hand side (this is a property of all convex lower-semicontinuous
	functions).

Now, we prove Proposition~\ref{prop:ptt}. One direction is easy: if $D(Q_Y\|P_Y)\le D(Q_{Y'} \| P_{Y'})$ then
	$$ I(W;Y) = D(P_{Y|W} \|P_Y |P_W) \le D(P_{Y'|W} \| P_{Y'} | P_W) = I(W;Y')\,.$$
For the other direction, consider an arbitrary pair $(P_X,Q_X)$. Let $W=\Bern(\epsilon)$ and $P_{X|W=0}=P_X$,
$P_{X|W=1}=Q_X$. Then, we get
	$$ I(W;Y) = \bar\epsilon D(P_Y\|\bar \epsilon P_Y + \epsilon Q_Y) + \epsilon D(Q_Y\| \bar \epsilon P_Y +
	\epsilon Q_Y)\,,$$
	and similarly for $I(W;Y')$. Assume that $D(Q_{Y'} \| P_{Y'})<\infty$, for otherwise~\eqref{eq:ptt_div} holds trivially. 
	Then $Q_{Y'} \ll P_{Y'}$ and we get from~\eqref{eq:kll2} and~\eqref{eq:kll3} that
	\begin{equation}\label{eq:ptt3}
		I(W;Y') = \epsilon D(Q_{Y'}\|P_{Y'}) + o(\epsilon)\,.
\end{equation}	
	On the other hand, again from~\eqref{eq:kll3}
	\begin{equation}\label{eq:ptt4}
		I(W;Y) \ge \epsilon D(Q_Y\| \bar \epsilon P_Y + \epsilon Q_Y) = \epsilon D(Q_Y\|P_Y) + o(\epsilon)\,. 
\end{equation}	
	Since by assumption $I(W;Y) \le I(W;Y')$ we conclude from comparing~\eqref{eq:ptt3} to~\eqref{eq:ptt4} that
	$D(Q_Y\|P_Y) \le D(Q_{Y'}\| P_{Y'}) < \infty$, completing the
	proof.

\section{Contraction coefficients for binary-input channels}
\label{app:binary}
In this appendix we provide a tight characterization of the KL contraction coefficient for binary-input channel $P_{Y|X}$, where $X\in\{0,1\}$ and $Y$ is arbitrary. 
Clearly, $\etaKL(P_{Y|X})$ is a function of $P \triangleq P_{Y|X=0}$ and $Q \triangleq P_{Y|X=1}$, which we abbreviate
as $\eta(\{P,Q\})$.
The behavior of this quantity closely resembles that of divergence between distributions. Indeed, we expect
$\eta(\{P,Q\})$ to be bigger if $P$ and $Q$ are more dissimilar and, furthermore, 
$\eta(\{P,Q\})=0$ (resp.~$1$)  if and only if $P=Q$ (resp.~$P \perp Q$).
Next we show that $\eta(\{P,Q\})$ is essentially equivalent to Hellinger distance:
\begin{theorem} Consider a binary input channel $P_{Y|X}:\{0,1\}\to \maty$ with $P_{Y|X=0}=P$ and $P_{Y|X=1}=Q$. Then,
its contraction coefficient $\etaKL(P_{Y|X})=\etachi(P_{Y|X})\eqdef \eta(\{P,Q\})$ satisfies
\label{thm:etaPQ}
\begin{equation}
\frac{H^2(P,Q)}{2} \leq \eta(\{P,Q\}) \leq H^2(P,Q) - \frac{H^4(P,Q)}{2}\,,
	\label{eq:etaPQ}
\end{equation}
where Hellinger distance is defined as $H^2(P,Q) \eqdef 2 - 2\int \sqrt{dP dQ}$.
\end{theorem}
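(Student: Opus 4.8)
The plan is to reduce $\eta(\{P,Q\})$ to a one-parameter optimization over the $\LC_\beta$-divergences of Section~\ref{sec:pp} and then sandwich the result using the Hellinger affinity $t\eqdef\int\sqrt{\diff P\,\diff Q}$, so that $H^2(P,Q)=2-2t$. Fix a dominating measure $\mu$ and set $p=\diff P/\diff\mu$, $q=\diff Q/\diff\mu$.

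\emph{Reduction.} By \prettyref{thm:chi2KL} we have $\eta(\{P,Q\})=\etaKL(P_{Y|X})=\etachi(P_{Y|X})=\sup_{P_X}\etachi(P_{Y|X},P_X)$, so it suffices to evaluate the fixed-input coefficient. For a binary input $X$, every competitor $R_X$ differs from $P_X=(\bar\alpha,\alpha)$ along a single direction, and since $\chi^2$ is quadratic in the size of that perturbation the ratio in \eqref{eq:eta_fq} does not depend on which $R_X\ne P_X$ is chosen; a direct computation (or the maximal-correlation identity \eqref{eq:chi_maxcor0}) gives
\[
\etachi(P_{Y|X},P_X)=\alpha\bar\alpha\int\frac{(p-q)^2}{\bar\alpha p+\alpha q}\,\diff\mu=\LC_{\bar\alpha}(P\|Q).
\]
Taking the supremum over $\alpha\in(0,1)$ yields $\eta(\{P,Q\})=\sup_{\beta\in(0,1)}\LC_\beta(P\|Q)$.

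\emph{An exact decomposition and the two bounds.} Writing $(p-q)^2=(\sqrt p-\sqrt q)^2(\sqrt p+\sqrt q)^2$ and invoking the elementary identities $(\sqrt p-\sqrt q)(\beta\sqrt p-\bar\beta\sqrt q)=\beta p+\bar\beta q-\sqrt{pq}$ and $\beta\bar\beta(\sqrt p+\sqrt q)^2=(\beta p+\bar\beta q)-(\beta\sqrt p-\bar\beta\sqrt q)^2$, one obtains
\[
\LC_\beta(P\|Q)=H^2(P,Q)-R_\beta,\qquad R_\beta\eqdef\int\frac{\big(\beta p+\bar\beta q-\sqrt{pq}\big)^2}{\beta p+\bar\beta q}\,\diff\mu .
\]
For the upper bound, Cauchy--Schwarz together with $\int(\beta p+\bar\beta q)\,\diff\mu=1$ gives, uniformly in $\beta$,
\[
R_\beta\ \ge\ \Bigl(\int\big|\beta p+\bar\beta q-\sqrt{pq}\big|\,\diff\mu\Bigr)^{2}\ \ge\ \Bigl(\int\big(\beta p+\bar\beta q-\sqrt{pq}\big)\,\diff\mu\Bigr)^{2}=(1-t)^2=\frac{H^4(P,Q)}{4},
\]
so $\eta(\{P,Q\})=\sup_\beta\LC_\beta(P\|Q)\le H^2(P,Q)-\tfrac14 H^4(P,Q)$. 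For the lower bound I would simply take $\beta=\tfrac12$: expanding $(p-q)^2=(p+q)^2-4pq$ gives $\LC_{1/2}(P\|Q)=1-2\int\frac{pq}{p+q}\,\diff\mu$, and since $\frac{2pq}{p+q}=\sqrt{pq}\cdot\frac{2\sqrt{pq}}{p+q}\le\sqrt{pq}$ by AM--GM, we get $\eta(\{P,Q\})\ge\LC_{1/2}(P\|Q)\ge 1-t=\tfrac12 H^2(P,Q)$.

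\emph{Main obstacle.} The only delicate point is the reduction step for a general (not necessarily discrete) output alphabet: one must justify that the binary-input fixed-$P_X$ coefficient equals $\LC_{\bar\alpha}(P\|Q)$ exactly, and that the integral $R_\beta$ and the Cauchy--Schwarz bound are legitimate. As in the proof of \prettyref{thm:eta_lb} in \prettyref{app:eta_lb}, this is handled by first truncating the likelihood ratio $\diff Q/\diff P$ to make it bounded, establishing the identity there, and passing to the limit; note moreover that $R_\beta=\infty$ can occur only when $\LC_\beta(P\|Q)=\infty$, in which case the upper bound is vacuous. Everything past the reduction is one application of Cauchy--Schwarz and elementary algebra.
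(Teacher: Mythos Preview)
Your argument is correct and essentially coincides with the paper's: both reduce to $\eta(\{P,Q\})=\sup_{\beta\in(0,1)}\LC_\beta(P\|Q)$, take $\beta=\tfrac12$ for the lower bound, and use Cauchy--Schwarz for the upper bound. In fact your Cauchy--Schwarz step and the paper's are the same inequality in disguise: expanding $R_\beta=\int\frac{(\beta p+\bar\beta q-\sqrt{pq})^2}{\beta p+\bar\beta q}=1-2t+\int\frac{pq}{\beta p+\bar\beta q}$, your bound $R_\beta\ge(1-t)^2$ is equivalent to the paper's $\int\frac{pq}{\beta p+\bar\beta q}\ge t^2=(\int\sqrt{pq})^2$.

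Two minor remarks. First, you correctly obtain $H^2-\tfrac14 H^4$ for the upper bound; this is exactly what the paper's proof yields as well (note $1-(1-\tfrac12 H^2)^2=H^2-\tfrac14 H^4$), so the constant $\tfrac12$ in the displayed statement is a typo for $\tfrac14$, consistent also with the subsequent remark comparing to $\TV\le\sqrt{H^2-H^4/4}$. Second, your ``main obstacle'' is not really one: for binary input the ratio $\chi^2(Q_Y\|P_Y)/\chi^2(Q_X\|P_X)$ is computed directly and equals $\LC_{\bar\alpha}(P\|Q)$ for every $Q_X\neq P_X$, with no need for truncation, and $\LC_\beta\le 1$ always so $R_\beta$ is automatically finite.
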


\begin{remark}
An obvious upper bound is $\eta(\{P,Q\}) \leq \TV(P,Q)$ by \prettyref{thm:eta_ub}, which is worse
than \prettyref{thm:etaPQ} since $\TV$ is small than the square-root of the right-hand side of \prettyref{eq:etaPQ}. In
fact it is straightforward to verify that the upper bound holds with equality when the output $Y$ is also binary-valued.
In particular, \prettyref{thm:etaPQ} implies that $\eta(\{P,Q\})$ is always within \emph{a factor of two} of $H^2(P,Q)$.
\end{remark}

%
%
%

\begin{proof}
	
First notice the identities: 
\begin{align}
\chi^2(\Bern(\alpha)\|\Bern(\beta)) = & ~ \frac{(\alpha-\beta)^2}{\beta\bar\beta},	\nonumber \\
\chi^2(\alpha P + \bar \alpha Q \| \beta P + \bar\beta Q) = & ~ (\alpha-\beta)^2   \int \frac{(P-Q)^2}{\beta P +
\bar\beta Q}\,,	\nonumber
\end{align}
where we denote $\bar\alpha=1-\alpha$.
Therefore the (input-dependent) $\chi^2$-contraction coefficient is given by 
\[
\etachi(\Bern(\beta),P_{Y|X}) = \sup_{\alpha \neq \beta} \frac{\chi^2(\alpha P + \bar \alpha Q \| \beta P + \bar\beta Q)}{\chi^2(\Bern(\alpha)\|\Bern(\beta))}
= \beta\bar\beta  \int \frac{(P-Q)^2}{\beta P + \bar\beta Q} \triangleq \LC_{\beta}(P\|Q),
\]
where $\LC_{\beta}(P\|Q)$, clearly an $f$-divergence, is known as the Le Cam divergence (see, \eg, \cite[p.~889]{Vajda09}).
In view of \prettyref{thm:chi2KL}, the input-independent KL-contraction coefficient coincides with that of $\chi^2$ and hence
\[
\eta(\{P,Q\}) = \sup_{\beta \in (0,1)} \LC_{\beta}(P\|Q).
\]
Thus the desired bound \prettyref{eq:etaPQ} follows from the characterization of the joint range between pairs of $f$-divergence \cite{HV11}, namely, $H^2$ versus $\LC_\beta$, by taking the convex hull of their joint range restricted to Bernoulli distributions. Instead of invoking this general result, next we prove \prettyref{eq:etaPQ} using elementary arguments.
Since $\LC_{1/2}(P\|Q) = 1 - 2 \int \frac{dP dQ}{dP+dQ} \geq 1 - \int \sqrt{dP dQ} = \frac{1}{2}H^2(P,Q)$, the left inequality of \prettyref{eq:etaPQ} follows immediately. 
To prove the right inequality, by Cauchy-Schwartz, note that we have
$(1 - \frac{1}{2} H^2(P,Q))^2 = (\int\sqrt{dPdQ})^2 = (\int\sqrt{\beta dP+ \bar \beta dQ} \sqrt{\frac{dPdQ}{\beta dP+ \bar \beta dQ}} )^2 \leq \int \frac{dPdQ}{\beta dP+ \bar \beta dQ} = 1 - \LC_\beta(P\|Q)$, for any $\beta \in (0,1)$.
\end{proof}

\section{Simultaneously maximal couplings}
	\label{app:tv}

\begin{lemma}
		\label{lmm:tv}
Let $\calX$ and $\calY$ be Polish spaces. 
Given any pair of Borel probability measures $P_{XY},Q_{XY}$ on $\calX \times \calY$, there exists a coupling $\pi$ of $P_{XY}$ and $Q_{XY}$, namely, a joint distribution of $(X,Y,X',Y')$ such that $\calL(X,Y)=P_{XY}$ and $\calL(X',Y')=Q_{XY}$ under $\pi$, such that
\begin{equation}\label{eq:tve}
\pi\{(X,Y)\neq (X',Y')\} = \TV(P_{XY},Q_{XY}) \quad \text{and} \quad \pi\{ X \neq X'\} = \TV(P_{X},Q_{X}).
\end{equation}
\end{lemma}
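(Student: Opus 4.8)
The plan is to build the coupling in two stages, first handling the $X$-marginals optimally and then, on the "common mass" part, splicing in an optimal coupling of the conditional laws of $Y$. Write $\mu = P_X \wedge Q_X$ for the common part of the two $X$-marginals (as a measure, $\diff\mu = \min(\diff P_X,\diff Q_X)$ with respect to any common dominating measure), and set $p = \mu(\calX) = 1 - \TV(P_X,Q_X)$. If $p = 0$ the two $X$-marginals are mutually singular, any coupling has $X\neq X'$ a.s., and we may take the independent coupling, which also satisfies $(X,Y)\neq(X',Y')$ a.s.\ and hence \prettyref{eq:tve} (both sides equal $1$); so assume $p\in(0,1]$. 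Decompose $P_X = p\,R + (1-p)\,R_P$ and $Q_X = p\,R + (1-p)\,R_Q$, where $R = \mu/p$ and $R_P, R_Q$ are the normalized "excess" parts (these are mutually singular, and for $p=1$ they are vacuous).

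Next I would disintegrate both joint laws over the $X$-coordinate: since $\calX,\calY$ are Polish, regular conditional probabilities $P_{Y|X=x}$ and $Q_{Y|X=x}$ exist. The coupling $\pi$ is defined as a mixture of two pieces. With probability $p$, draw $X = X' \sim R$; then, given the common value $x$, draw $(Y,Y')$ from a \emph{maximal coupling} of $P_{Y|X=x}$ and $Q_{Y|X=x}$ on $\calY\times\calY$ (which exists on Polish spaces and can be chosen measurably in $x$ — this measurability is the one genuine technical point, handled e.g.\ by the explicit "common part plus independent excess" construction, which depends measurably on the conditional laws). With probability $1-p$, draw $X\sim R_P$ and $X'\sim R_Q$ independently, and then $Y\sim P_{Y|X}$, $Y'\sim Q_{Y|X'}$ (conditionally independent given $X,X'$). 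One checks directly that the $(X,Y)$-marginal of $\pi$ is $P_{XY}$ and the $(X',Y')$-marginal is $Q_{XY}$, using the decompositions of $P_X,Q_X$ and the disintegration.

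Finally I would verify the two equalities in \prettyref{eq:tve}. For the $X$-marginal: under $\pi$ we have $X = X'$ whenever we are in the first (probability-$p$) piece, and on the second piece $X\sim R_P \perp X'\sim R_Q$ with $R_P\perp R_Q$, so $X\neq X'$ a.s.\ there; hence $\pi\{X\neq X'\} = 1-p = \TV(P_X,Q_X)$. For the joint law: on the first piece, conditioned on the common $x$, the maximal coupling gives $\pi\{(Y,Y')\ \text{with}\ Y\neq Y'\mid X=X'=x\} = \TV(P_{Y|X=x},Q_{Y|X=x})$, and $(X,Y)\neq(X',Y')$ exactly when $Y\neq Y'$ there; on the second piece $(X,Y)\neq(X',Y')$ always. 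Therefore
\[
\pi\{(X,Y)\neq(X',Y')\} = (1-p) + p\int \TV(P_{Y|X=x},Q_{Y|X=x})\,R(\diff x).
\]
It remains to identify the right-hand side with $\TV(P_{XY},Q_{XY})$. This is the crux: by the general lower bound any coupling satisfies $\pi\{(X,Y)\neq(X',Y')\}\ge \TV(P_{XY},Q_{XY})$, so it suffices to show the displayed quantity is $\le \TV(P_{XY},Q_{XY})$. That follows from the explicit formula $\TV(P_{XY},Q_{XY}) = 1 - \int (\diff P_{XY}\wedge \diff Q_{XY})$ together with the pointwise inequality $\diff P_{XY}\wedge\diff Q_{XY} \ge (\diff\mu)(\diff P_{Y|X}\wedge \diff Q_{Y|X})$ on the $X$-fibers, integrated over $x$; equivalently, one simply computes the total mass of $\diff P_{XY}\wedge\diff Q_{XY}$ restricted to the $X$-common part and notes the excess parts contribute nothing since $R_P\perp R_Q$. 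The main obstacle is therefore not the combinatorics of the coupling but making the fiberwise maximal coupling depend measurably on $x$ and carefully matching the two representations of $\TV(P_{XY},Q_{XY})$; both are routine on Polish spaces but deserve explicit attention.
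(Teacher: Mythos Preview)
Your construction is natural but does not, in general, achieve the joint total variation, and the final inequality you invoke goes the wrong way. You correctly note that any coupling satisfies $\pi\{(X,Y)\neq(X',Y')\}\ge \TV(P_{XY},Q_{XY})$, so you need the reverse inequality $(1-p)+p\int \TV(P_{Y|x},Q_{Y|x})\,R(\diff x)\le \TV(P_{XY},Q_{XY})$. But the pointwise bound you state, $\diff P_{XY}\wedge \diff Q_{XY}\ge (\diff\mu)(\diff P_{Y|X}\wedge \diff Q_{Y|X})$, after integrating gives $1-\TV(P_{XY},Q_{XY})\ge p-p\int\TV\,R$, i.e.\ exactly the \emph{opposite} direction. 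And indeed the reverse inequality is false: take $\calX=\calY=\{0,1\}$, $P_X=(3/4,1/4)$, $Q_X=(1/4,3/4)$, $P_{Y|X=0}=(1/2,1/2)$, $Q_{Y|X=0}=(1,0)$, and $P_{Y|X=1}=Q_{Y|X=1}=(1,0)$. Then $\TV(P_X,Q_X)=\TV(P_{XY},Q_{XY})=1/2$, but your coupling gives $\pi\{(X,Y)\neq(X',Y')\}=5/8$. The failure mode is that at $x=0$, where $P_X$ has ``excess'' mass over $Q_X$, that excess can still contribute to the common part of the \emph{joint}: here $P_{XY}(0,0)=3/8$ matches all of $Q_{XY}(0,0)=1/4$, whereas your construction only allocates $\mu(0)\cdot P_{Y|0}(0)=1/8$ of $P$-mass at $(0,0)$ to the common piece.

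The paper's proof avoids this by a completely different, non-constructive route: it applies Kantorovich duality to the cost $c=\indc{(x,y)\neq(x',y')}+\indc{x\neq x'}$, solves the dual explicitly, and shows the optimal transport cost equals $\TV(P_{XY},Q_{XY})+\TV(P_X,Q_X)$; since each term is individually a lower bound, the optimal coupling must achieve both simultaneously. A correct constructive proof (as in Goldstein's paper cited in the remark) proceeds in the \emph{reverse} order from yours: start from a maximal coupling of the joints $P_{XY},Q_{XY}$, observe that the $X$-marginal of the common part $P_{XY}\wedge Q_{XY}$ is pointwise $\le P_X\wedge Q_X$ (since $\int_\calY\min(pf,qg)\,\diff y\le\min(p,q)$), and then carefully couple the excess parts so as to force $X=X'$ on the remaining mass $P_X\wedge Q_X-(\text{common part's }X\text{-marginal})$.
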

\begin{remark} After submitting this manuscript, we were informed that this result is the main content of~\cite{SG79}.
For interested reader we keep our original proof which is different from~\cite{SG79} by relying on Kantorovich's dual
representation and, thus, is non-constructive.
\end{remark}
\begin{remark}
	\label{rmk:triple}
A triply-optimal coupling achieving in addition to~\eqref{eq:tve} also $\pi[Y\neq Y'] = \TV(P_Y,Q_Y)$ need not exist.
Indeed, consider the example where $X,Y$ are $\{0,1\}$-valued and
$$ P_{XY} = \begin{pmatrix} {1\over2} & 0 \\ 0 & {1\over2} \end{pmatrix}, 
 \quad 
  Q_{XY} = \begin{pmatrix} 0 & {1\over2} \\ {1\over2} & 0 \end{pmatrix}\,. $$
In other words, $X,Y\sim \Bern(1/2)$ under both $P$ and $Q$; however, $X = Y$ under $P$ and $X = 1-Y$ under $Q$.
Furthermore, 
since $\TV(P_X,Q_X)=\TV(P_Y,Q_Y)=0$, under any coupling $\pi_{XYX'Y'}$ of $P_{XY}$ and $Q_{XY}$
 that simultaneously couples $P_X$ to $Q_X$ and $P_Y$ to $Q_Y$ maximally, we have 
$X=X'$ and $Y=Y'$, which contradicts $X=Y$ and $X'=1-Y'$. 
On the other hand, it is clear that a doubly-optimal coupling (as
claimed by Lemma~\ref{lmm:tv}) exists: just take $X=X'=Y \sim \mathrm{Bern}(1/2)$ and $Y'=1-X'$. It is not hard to show that such a coupling also attains the minimum 
$$ \min_{\pi} \pi[(X,Y)\neq (X',Y')] + \pi[X\neq X'] + \pi[Y\neq Y'] = 2. $$
\end{remark}

\begin{proof}
	Define the cost function $c(x,y,x',y') \triangleq \indc{(x,y)\neq (x',y')}+\indc{x\neq x'} = 2 \indc{x\neq x'} + \indc{x = x', y\neq y'}$. Since the indicator of any open set is lower semicontinuous, so is $(x,y,x',y') \mapsto c(x,y,x',y')$. Applying Kantorovich's duality theorem (see, \eg, \cite[Theorem 1.3]{villani.topics}), we have
\begin{equation}
\min_{\pi \in \Pi(P_{XY},Q_{XY})} \Expect_\pi c(X,Y,X',Y')  = \max_{f,g}   \Expect_P[f(X,Y)]-\Expect_Q[g(X,Y)].	
	\label{eq:dual}
\end{equation}
where $f \in L_1(P),g\in L_1(Q)$ and 
\begin{equation}
	f(x,y) - g(x',y') \leq c(x,y,x',y').
	\label{eq:kanto}
\end{equation} 
Since the cost function is bounded, namely, $c$ takes values in $[0,2]$, applying \cite[Remark 1.3]{villani.topics}, we conclude that it suffices to consider $0 \leq f,g \leq 2$. Note that constraint \prettyref{eq:kanto} is equivalent to
\begin{align}
f(x,y) - g(x',y') & \leq 2, \forall x\neq x',\forall 	y\neq y' 	\nonumber \\
f(x,y) - g(x,y') & \leq 1, \forall x, \forall y \neq y'	\nonumber \\
f(x,y) - g(x,y) & \leq 0, \forall x, \forall y 	\nonumber 
\end{align}
where the first condition is redundant given the range of $f,g$. In summary, the maximum on the right-hand side of \prettyref{eq:dual} can be taken
over all $f,g$ satisfying the following constraints:
\begin{align}
0 \leq f, g & \leq 2 \nonumber \\
f(x,y) - g(x,y') & \leq 1, \forall x,  y \neq y'	\nonumber \\
f(x,y) - g(x,y) & \leq 0, \forall x, y 	\nonumber 
\end{align}
Then
\begin{align}
\max_{f,g}   \Expect_P[f(X,Y)]-\Expect_Q[g(X,Y)] 
= & ~ \int_{\calX} \max_{\phi,\psi} \sth{  \int_\calY   p(x,y) \phi(y) -  q(x,y) \psi(y) } \label{eq:dual1}
\end{align}
where the maximum on the right-hand side is over $\phi,\psi: \calY \to \reals$ satisfying
\begin{equation}
\begin{aligned}
0 \leq \phi, \psi \leq 2 \\
\phi(y) -  \psi(y') \leq 1, \forall y \neq y'	\\
\phi(y) -  \psi(y) \leq 0, \forall y 	
\end{aligned}	
	\label{eq:phipsi}
\end{equation}

The optimization problem in the bracket on the RHS of \prettyref{eq:dual1} can be solved using the following lemma:
\begin{lemma}
		\label{lmm:yp}
Let $p,q \geq 0$. Let $(x)_+ \triangleq \max\{x,0\}$. Then
\begin{equation}
	\max_{\phi,\psi} \sth{  \int_\calY   p \phi -  q \psi : 0 \leq \phi \leq \psi \leq 2, \sup \phi \leq 1 + \inf \psi }
	= \int   (p-q)_+  + \pth{\int   (p-q)}_+.
	\label{eq:yp}
\end{equation}
\end{lemma}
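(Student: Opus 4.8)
The plan is to collapse the infinite-dimensional problem over $(\phi,\psi)$ to a one-parameter scalar optimization by conditioning on the value $s \eqdef \inf\psi$. The point is that the only coupling between $\phi$ and $\psi$, besides the pointwise order $0\le\phi\le\psi\le 2$, is the global constraint $\sup\phi\le 1+\inf\psi$; once $s$ is frozen, the functional $\int_\calY (p\phi - q\psi)$ splits over $y$ and can be optimized pointwise, the answer depending only on the sign of $p(y)-q(y)$.

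\textbf{Upper bound.} First I would prove ``$\le$''. Fix any feasible pair $(\phi,\psi)$ and put $s=\inf\psi\in[0,2]$, so that $\sup\phi\le 1+s$ and hence $\phi\le\min(1+s,2)$ everywhere (using also $\phi\le\psi\le 2$). At each $y$: if $p(y)\ge q(y)$, then $q(y)\psi(y)\ge q(y)\phi(y)$ gives $p(y)\phi(y)-q(y)\psi(y)\le (p(y)-q(y))\phi(y)\le (p(y)-q(y))\min(1+s,2)$; if $p(y)<q(y)$, then $p(y)\phi(y)\le p(y)\psi(y)$ gives $p(y)\phi(y)-q(y)\psi(y)\le (p(y)-q(y))\psi(y)\le (p(y)-q(y))s$, the last step because $p(y)-q(y)<0$ and $\psi(y)\ge s$. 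Writing $A\eqdef\int(p-q)_+$ and $B\eqdef\int(q-p)_+$ and integrating, this yields $\int(p\phi-q\psi)\le \min(1+s,2)\,A-s\,B=:g(s)$, so the objective is at most $\max_{s\in[0,2]}g(s)$.

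\textbf{Scalar optimization and achievability.} On $[0,1]$ the function $g(s)=A+s(A-B)$ is affine, while on $[1,2]$ the function $g(s)=2A-sB$ is nonincreasing. Hence the maximum equals $2A-B$ (attained at $s^\ast=1$) when $A\ge B$, and $A$ (attained at $s^\ast=0$) when $A<B$; in either case $\max_{s}g(s)=A+(A-B)_+=\int(p-q)_+ + \pth{\int(p-q)}_+$, the claimed right-hand side. For the matching lower bound I would exhibit the explicit optimizer $\psi=\phi$ with $\phi=\min(1+s^\ast,2)$ on $\{p\ge q\}$ and $\phi=s^\ast$ on $\{p<q\}$; one then checks $0\le\phi=\psi\le 2$, $\sup\phi=\min(1+s^\ast,2)\le 1+s^\ast=1+\inf\psi$, and that $\int(p\phi-q\psi)=g(s^\ast)$, closing the argument.

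\textbf{Main obstacle.} There is no deep difficulty; the one thing needing care is the measure-theoretic bookkeeping: $\inf$ and $\sup$ should be read as essential inf/sup (as they arise from the pointwise dual constraints \eqref{eq:phipsi}), and the degenerate cases in which $\{p\ge q\}$ or $\{p<q\}$ is null (in particular $p\equiv q$) should be handled separately, where the construction degenerates and the identity collapses to $A+(A-B)_+$ with $A$ or $B$ equal to zero. Finiteness of $A,B$ (needed for the scalar reduction) is automatic in the setting of \eqref{eq:dual1}. The conceptual crux is simply recognizing that freezing $s=\inf\psi$ decouples the optimization across $\calY$.
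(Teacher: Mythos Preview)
Your proposal is correct and essentially follows the paper's own argument. Both proofs freeze the scalar $s=\inf\psi$ (the paper writes $a=\inf\phi$ after first reducing to $\phi=\psi$ via $\phi'=\max\{\phi,\inf\psi\}$) and then solve the resulting affine one-parameter problem to obtain $A+(A-B)_+$; your direct pointwise upper bound merely bypasses the preliminary reduction to $\phi=\psi$, which you then use only for achievability, so the two write-ups are reorderings of the same computation.
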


\begin{proof}
	First we show that it suffices to consider $\phi=\psi$. Given any feasible pair $(\phi,\psi)$, set $\phi'= \max\{\phi, \inf \psi\}$. To check that $(\phi',\phi')$ is a feasible pair, note that clearly $\phi'$ takes values in $[0,2]$. Furthermore, $\sup \phi' \leq \sup \phi \leq 1 + \inf \psi \leq 1 + \inf \phi'$. Therefore the maximum on the left-hand side of \prettyref{eq:yp} is equal to
	\[
	\max_{\phi} \sth{  \int_\calY   (p -  q) \phi :  0 \leq \phi \leq 2, \sup \phi \leq 1 + \inf \phi }.
	\]
	Let $a = \inf \phi$. Then
	\begin{align*}
	 \max_{\phi} \sth{  \int   (p-q) \phi :  0 \leq \phi \leq 2, \sup \phi \leq 1 + \inf \phi }	
= & ~ \sup_{0 \leq a \leq 2} \max_{\phi} \sth{  \int   (p-q) \phi :  a \leq \phi \leq 2 \wedge (1+a) }	\\
= & ~ \sup_{0 \leq a \leq 1} \max_{\phi} \sth{  \int   (p-q) \phi :  a \leq \phi \leq 1+a }	\\
= & ~ \sup_{0 \leq a \leq 1}  \sth{  (1+a) \int   (p-q)_+  + a \int   (p-q)_-  }	\\
= & ~ \sup_{0 \leq a \leq 1}  \sth{   \int   (p-q)_+  + a \int   (p-q)  }	\\
= & ~	 \int   (p-q)_+  + \pth{\int   (p-q)}_+. 
\end{align*}
\end{proof}

Applying \prettyref{lmm:yp} to \prettyref{eq:dual1} for fixed $x$, we have
\begin{align*}
& ~ \max_{f,g}   \Expect_P[f(X,Y)]-\Expect_Q[g(X,Y)] \\
= & ~ \int_\calX\pth{\int_\calY   (p(x,y)-q(x,y))_+  + (p(x)-q(x))_+} \\
= & ~ \int_\calX \int_\calY   (p(x,y)-q(x,y))_+  + \int_\calX (p(x)-q(x))_+ =  \TV(P_{XY},Q_{XY}) + \TV(P_{X},Q_{X})  \label{eq:dual2}
\end{align*}
Combining the above with \prettyref{eq:dual}, we have
\begin{equation*}
\min_{\pi_{XYX'Y'}} \pi\{(X,Y)\neq (X',Y')\} + \pi\{ X \neq X'\}  = \TV(P_{XY},Q_{XY}) + \TV(P_{X},Q_{X}).
\end{equation*}
Since $\pi\{(X,Y)\neq (X',Y')\} \geq \TV(P_{XY},Q_{XY})$ and $\pi\{ X \neq X'\} \geq \TV(P_{X},Q_{X})$ for any $\pi$, the minimizer
of the sum on the left-hand side achieves equality simultaneously for both terms, proving the theorem.
\end{proof}

%

%
%
%
%
%
%
%
%
%
%


\newcommand{\etalchar}[1]{$^{#1}$}

\end{document}